\renewenvironment{proof}[1][{\it{Proof}}]{\noindent\textbf{#1} }{\hfill \qed\medskip}
\newtheorem{observation}{Observation}
\newcommand{\edge}[2]{\{#1,#2\}}
\newcommand{\yes}{\mbox{Yes}}
\newcommand{\no}{\mbox{No}}
\newcommand{\yesins}{{\yes}-instance}
\newcommand{\noins}{{\no}-instance}
\newcommand{\poly}{\textsf{P}}
\newcommand{\polyy}{\textsf{Poly}}
\newcommand{\np}{\textsf{NP}}
\newcommand{\conp}{\textsf{CoNP}}
\newcommand{\nph}{{\np}-{hard}}
\newcommand{\nphshort}{{\np}-{h}}
\newcommand{\nphns}{\np-hardness}
\newcommand{\wbh}{\textsf{W[2]}-hard}
\newcommand{\wah}{\textsf{W[1]}-hard}
\newcommand{\bwahshort}{\textbf{\textsf{W[1]}-h}}
\newcommand{\wahns}{\textsf{W[1]}-hardness}
\newcommand{\wahshort}{{\textsf{W[1]}}-h}
\newcommand{\wbhshort}{{\textsf{W[2]}}-h}
\newcommand{\fpt}{{\textsf{FPT}}}
\newcommand{\prob}[1]{{\textsc{#1}}}
\newcommand{\memph}[1]{\emph{#1}}
\newcommand{\ww}{w}
\newcommand{\minimaxscore}[2]{\textsf{MAV}(#1,#2)} 
\newcommand{\ccavscore}[2]{\textsf{CCAV}(#1,#2)} 
\newcommand{\pavscore}[2]{\textsf{PAV}(#1,#2)} 
\newcommand{\hamdis}[2]{\mathcal{H}(#1,#2)}
\newcommand{\bigo}[1]{O(#1)}
\newcommand{\bigos}[1]{O^*(#1)}
\newcommand{\abs}[1]{|#1|}
\newcommand{\setmid}{:}
\newcommand{\R}{d}
\newcommand{\wind}[1]{{\sc{#1-Multiwinner}}}
\newcommand{\awind}[1]{\sc{Annotated #1-Multiwinner}}
\newcommand{\EP}[3]
{
\begin{center}
{
\begin{tabular}{|ll|} \hline
\multicolumn{2}{|l|}{\parbox[t]{0.96\columnwidth}{\textsc{#1}}}  \\ \hline
{\bf Input:}    & \parbox[t]{0.85\columnwidth}{#2\vspace*{1mm}}  \\
{\bf Question:} & \parbox[t]{0.85\columnwidth}{#3\vspace*{1mm}} \\ \hline
\end{tabular}
}
\end{center}
}
\newcommand{\EPP}[4]
{
\begin{center}
{
\begin{tabular}{|ll|} \hline
\multicolumn{2}{|l|}{\textsc{#1}}  \\ \hline 
{\bf Input:}        & \parbox[t]{0.83\columnwidth}{#2\vspace*{1mm}}  \\
{\bf Parameter:}    & \parbox[t]{0.83\columnwidth}{#3\vspace*{1mm}}  \\
{\bf Question:}     & \parbox[t]{0.83\columnwidth}{#4\vspace*{1mm}} \\ \hline
\end{tabular}
}
\end{center}
}
\newtheorem{lemma}{Lemma}
\newtheorem{corollary}{Corollary}
\newtheorem{theorem}{Theorem}
\newtheorem{claim}{Claim}
\newtheorem{openquestion}{Open Question}
\begin{document}

\title{Parameterized Complexity of Multiwinner Determination:\\
More Effort Towards Fixed-Parameter Tractability\thanks{A $3$-page extended abstract of the paper appeared in the proceedings of the \emph{17th International Conference on Autonomous Agents and Multiagent Systems} (AAMAS~2018)~\cite{DBLP:conf/atal/YangW18}.}
}

\author{Yongjie Yang$^a$ and Jianxin Wang$^{b,c}$}

\date{
\small{$^a$Chair of Economic Theory, Saarland University, Saarb\"{u}cken 66123, Germany \\
yyongjiecs@gmial.com\\[3pt]
$^b$School of Computer Science and Engineering, Central South University, Changsha 410083, China\\[3pt]
$^c$Xiangjiang Laboratory, Changsha 410205,China\\
{jxwang@mail.csu.edu.cn}
}
}

\maketitle

\begin{abstract}
We study the parameterized complexity of winner determination problems for three prevalent $k$-committee selection rules, namely the minimax approval voting (MAV), the proportional approval voting (PAV), and the Chamberlin-Courant's approval voting (CCAV). It is known that these problems are computationally hard. Although they have been studied from the parameterized complexity point of view with respect to several natural parameters, many of them turned out to be {\wah} or {\wbh}. Aiming at obtaining plentiful fixed-parameter algorithms, we revisit these problems by considering more natural single parameters, combined parameters, and structural parameters.
\medskip

\noindent{\bf{Keywords:}} {multiwinner voting, fixed-parameter tractability, minimax approval voting, proportional approval voting, Chamberlin-Courant's approval voting, treewidth, {\wah}}
\end{abstract}

\section{Introduction}
Committee selection rules (a.k.a.\ multiwinner voting rules) have received a considerable amount of attention recently due to their broad applications in social choice,
multi-agent systems, recommendation systems, etc.~\cite{trendscomsoc2017,DBLP:conf/eumas/GawronF22,DBLP:conf/agents/GhoshMHS99,DBLP:journals/jet/LacknerS21}. How efficiently  winning candidates with respect to a committee selection rule can be calculated is one of the most significant criteria to evaluate the applicability of these rules.
Many rules, such as STV, Bloc,~$k$-Borda, approval voting, satisfaction approval voting, admit polynomial-time algorithms for computing winners~\cite{DBLP:conf/atal/AzizGGMMW15,DBLP:journals/scw/ElkindFSS17}.
However, there are also salient committee selection rules with respect to which winners are {\nph} to compute.
Among them are particularly the minimax approval voting (MAV)~\cite{Bramsminimaxapproval2007,LeGrand2004Tereport}, the proportional approval voting (PAV)~\cite{DBLP:conf/atal/AzizGGMMW15,Kilgour2010,Thiele1985}, and the
Chamberlin-Courant's approval voting (CCAV)~\cite{ChamberlinC1983APSR10.2307/1957270,DBLP:journals/scw/ProcacciaRZ08}. Nevertheless, as these rules possess their own merits in many other aspects~\cite{DBLP:journals/scw/AzizBCEFW17,DBLP:conf/ijcai/FaliszewskiSST16,Kilgour2010,Kilgour2012,DBLP:journals/jet/LacknerS21}, researchers investigated the parameterized complexity of
winner determination problems under these rules, with the hope of procuring as many fixed-parameter tractability results as possible.
While many realistic parameters have been considered in the literature so far, there are still many relevant but underappreciated parameters. This paper aims to take a further step towards breaking the complexity barrier against the applicability of the above-mentioned three approval-based rules by extensively expanding the set of meaningful parameters leading to {\fpt}-algorithms for the winner determination problems.
Notably, in addition to many traditional parameters, we also study several structural parameters of the incidence graphs of approval-based elections. The incidence graph of an approval-based election is a bipartite graph whose vertex set is the candidate set union the voter set, and there is an edge between a candidate and a voter if and only if this voter approves the candidate.
To date, less is known about whether some structural parameters of the incidence graphs such as the treewidth and the size of a maximum matching lead to some {\fpt}-algorithms. These parameters are intimately connected to several important single parameters. For example, they are lower bounds of both the number of candidates~$m$ and the number of voters~$n$, and hence any {\fpt}-algorithm with respect to these two structural parameters directly carry over to~$m$ and~$n$.

\bigskip

{\it Organization.}
In Section~\ref{sec-pre}, we give definitions and notations used in the paper.
Then, we elaborate on related works and summarize our main results in Section~\ref{sec-related-work}.
Our concrete results are encapsulated in Sections~\ref{sec-sinlge-parameter}--\ref{sec-structural-parameter}. Specifically, Section~\ref{sec-sinlge-parameter} studies single parameters, Section~\ref{sec-combined-parameter} explores combinations of the single parameters occurring in Section~\ref{sec-sinlge-parameter}, and Section~\ref{sec-structural-parameter} focuses on the aforementioned structural parameters. For an overview of our main results, we refer to Table~\ref{tab-results}. We complete the paper by recapping our contributions and laying out several promising avenues for future research in Section~\ref{sec-conclusion}.

\section{Preliminaries}
\label{sec-pre}
In this section, we give essential notions related to our study.

\subsection{Elections}
\label{sec-elections}
We study {\it{approval-based multiwinner voting}}. In this setting, an {\it{election}} is a tuple $E=(C, V)$ where~$C$ is a set of candidates and~$V$ is a multiset of votes. Each vote of~$V$ is cast by a voter and is defined as a
subset of~$C$. In this paper, we interchangeably use the terms vote and voter. We say that a vote~$v$ {\it{approves}} a candidate~$c$ if $c\in v$. For each candidate~$c$,~$V(c)$ denotes the set of votes in~$V$ approving~$c$. Let~$k$ be a nonnegative integer. A {\it{$k$-set}} is a set of cardinality~$k$. A {\it{committee}} (respectively, {\it{$k$-committee}}) is a subset (respectively, $k$-subset) of candidates.
A {\it{$k$-committee selection rule}} maps each election $(C, V)$ and every nonnegative integer~$k$ such that $k\leq \abs{C}$ to a collection of $k$-committees of~$C$, {\it{winning $k$-committees}} of~$(C, V)$ under this rule.

The {\it{Hamming distance}} between two sets~$v$ and~$v'$ is defined as
\[\hamdis{v}{v'}=\abs{v\setminus v'}+\abs{v'\setminus v}.\]

We study the following~$k$-committee selection rules.

\begin{description}

\item[{\bf{MAV}}] In general, MAV selects~$k$-committees as close as possible to every vote, where the closeness is measured by the
Hamming distance. Precisely, the MAV score of a committee~$\ww$ with respect to~$(C,V)$ is $\minimaxscore{V}{\ww}=\max_{v\in V}\hamdis{v}{\ww}$. MAV selects~$k$-committees with the minimum MAV score.

\item[{\bf{CCAV}}]
A vote~$v$ is satisfied with a committee~$\ww$ if and only if at least one of~$v$'s approved candidates
is contained in~$\ww$, i.e., $v\cap \ww\neq \emptyset$. Candidates in $v\cap \ww$ are regarded as representatives of~$v$ in~$\ww$.
The CCAV score of a committee~$\ww$ with respect to~$(C,V)$ is $\ccavscore{V}{\ww}=\abs{\{v\in V \setmid v\cap \ww\neq\emptyset\}}$.
CCAV selects~$k$-committees with the maximum CCAV score.\footnote{CCAV is a special rule of the class of Chamberlin-Courant's rules.
See~\cite{DBLP:journals/scw/ElkindFSS17} for further discussions.}

\item[{\bf{PAV}}] The PAV score of a committee~$\ww$ with respect to an election~$(C, V)$ is $\pavscore{V}{\ww}=\sum_{v\in V, v\cap \ww\neq\emptyset} \sum_{i=1}^{\abs{v\cap \ww}}\frac{1}{i}$.
PAV selects~$k$-committees with the maximum PAV score.
\end{description}

Note that by the above definitions, $\ccavscore{V}{\emptyset}=\pavscore{V}{\emptyset}=0$ holds.

For each $\tau\in \{\text{MAV, CCAV, PAV}\}$, we study the following problem.

\EP
{\wind{$\tau$}}
{An election $E=(C, V)$, an integer $k\leq \abs{C}$, and a rational number~$\R$.}
{Is there a $k$-committee $\ww\subseteq C$ such that $\tau(V, \ww)\leq \R$ for
$\tau=\text{MAV}$, and $\tau(V,\ww)\geq \R$ for $\tau\in \{\text{CCAV, PAV}\}$?}

Throughout this paper, we study the following single parameters and consistently use the corresponding notations given below.

\begin{itemize}
\item $m=\abs{C}$.

\item $n=\abs{V}$.

\item $\R$: the threshold score of a desired  committee.

\item $k$: size of winning committees.

\item $\overline{k}=m-k$.

\item $\triangle_{\text{V}}=\max_{v\in V}\abs{v}$ is the maximum number of candidates a vote approves.

\item $\triangle_{\text{C}}=\max_{c\in C}\abs{V(c)}$ is the maximum number of votes approving a candidate in common.
\end{itemize}

The first four parameters are natural and have been substantially covered in the literature. (See Section~\ref{sec-related-work} for the details.)
The study of~$\overline{k}$ is motivated by the observation that in many real-life decision-making scenarios winners are picked by eliminating a small number of losers, or more relevantly, many decision-making processes are directly designed to select losers other than selecting winners. The last two parameters~$\triangle_{\text{V}}$ and~$\triangle_{\text{C}}$ have been either explicitly or implicitly studied in the literature. In many real-world applications voters are allowed to approve only a few candidates. In some other scenarios, voters are cognitively limited or time constrained so that they are only able to evaluate a small number of candidates. In these cases,~$\triangle_{\text{V}}$ is relatively small.
Regarding the parameter~$\triangle_{\text{C}}$, observe that it is no greater than~$n$. So, it is small whenever~$n$ is small. Moreover, as our goal is to provide a landscape of the parameterized complexity of {\wind{$\tau$}} as complete as possible, it is theoretically significant to study this parameter so as that we can offer results for combined parameters including~$\triangle_{\text{C}}$.

\subsection{Graphs}
We assume the reader is familiar with the basic concepts of graph theory, and refer  to~\cite{BrandstadtS,Douglas2000} for notions in graph theory used but not defined in the paper.
We only reiterate some basic notions below. A {\it{graph}} is a pair $(N, A)$ where~$N$ is a set of vertices and~$A$ is a set of edges over~$N$. Each {\it{edge}} is defined as an unordered pair of vertices, and between every two vertices there can be at most one edge. A {\it{multigraph}} is a generalization of a graph where between every two vertices there may exist multiple edges, and there may exist loops on vertices. A {\it{hypergraph}} is a generalization of a graph where every edge is a subset of vertices. A {\it{multihypergraph}} is a generalization of a hypergraph so that there may exist multiple edges consisting of the same vertices.

A {\it{matching}} of a graph $G=(N, A)$ is a subset~$M$ of~$A$ such that no two edges in~$M$ share a common vertex.
A vertex~$v$ is {\it{saturated}} by~$M$ if~$v$ is a vertex in some edge in~$M$.
A {\it{maximum matching}} of~$G$ is a matching with the maximum cardinality among all matching of~$G$. We use~$\alpha(G)$ to denote the size of a maximum matching of~$G$.

\subsection{Parameterized Complexity}

A parameterized problem is a subset of $\Upsigma^*\times \mathbb{N}$, where~$\Upsigma$ is a finite alphabet. A parameterized problem can be either {\it{fixed-parameter tractable}} ({\fpt}) or {\it{fixed-parameter intractable}}.
In particular, a parameterized problem is {\fpt} if there is an algorithm which correctly determines for each instance $(I, \kappa)$ of the problem whether $(I, \kappa)$ is a {\yesins} in time $\bigo{f(\kappa)\cdot \abs{I}^{\bigo{1}}}$, where~$f$ is a computable function and~$\abs{I}$ is the size of~$I$.
Fixed-parameter intractable problems are further classified into many classes including {\wah}, {\wbh}, etc.
For greater details on parameterized complexity theory, we refer to~\cite{DBLP:books/sp/CyganFKLMPPS15,DBLP:series/txcs/DowneyF13,DBLP:conf/dagstuhl/DowneyF92}.

\section{Related Works and Our Contributions}
\label{sec-related-work}
In this section, we discuss some important related works and outline our main contributions.

\subsection{Single Parameters}
{\wind{$\tau$}} has many natural parameters inherent in its definition, say, the seven single parameters listed at the end of Section~\ref{sec-elections}. All these seven parameters except~$\overline{k}$ have been explicitly or implicitly investigated in the literature prior to our work.

First, it is easy to see that {\wind{MAV}}, {\wind{CCAV}}, and {\wind{PAV}} are {\fpt} with respect to the parameter~$m$.
Misra, Nabeel, and Singh~\cite{DBLP:conf/atal/MisraNS15} proved that {\wind{MAV}} is {\fpt} with respect to the parameters~$\R$ and~$n$,
but becomes {\wbh} when parameterized by~$k$.
Betzler, Slinko, and Uhlmann~\cite{DBLP:journals/jair/BetzlerSU13} proved that {\wind{CCAV}} is {\fpt} with respect to the parameter~$n$,
but turned out to be {\wbh} when~$k$ is the parameter. Moreover, they considered a dual parameter $R=n-\R$.
They proved that {\wind{CCAV}} is {\nph} even when $R=0$, but presented an {\fpt}-algorithm with respect to the combined parameter
$k+R$.\footnote{A parameterized problem is {\fpt} with respect to the combination of two parameters~$\kappa$ and~$\kappa'$ if it is solvable in time $\bigos{f(\kappa,\kappa')}$ where~$f$ is a computable function in~$\kappa$ and~$\kappa'$, or equivalently it is {\fpt} with respect to $\kappa+\kappa'$.}
Aziz~et~al.~\cite{DBLP:conf/atal/AzizGGMMW15} proved that {\wind{PAV}} is {\wah} with respect to~$k$ even if every voter approves two
candidates.

We complement these results as follows. First, we close some gaps and improve an {\fpt}-algorithm.
Concretely, we propose an {\fpt}-algorithm for {\wind{PAV}} when parameterized by~$n$. We also observe that {\wind{CCAV}} is equivalent to the {\prob{Partial Hitting Set}} problem, and as a consequence of a previous result for the {\prob{Partial Hitting Set}} problem, {\wind{CCAV}} can be solved in~$\bigos{2^{\bigo{\R}}}$ time or in $\bigos{2^{\bigo{n}}}$ time.\footnote{$\bigos{}$ is~$\bigo{}$ with polynomial factors being omitted.}
It should be noted that our new observation-based algorithm substantially improves the previous best {\fpt}-algorithm for {\wind{CCAV}} parameterized by~$n$ studied in~\cite{DBLP:journals/jair/BetzlerSU13} which runs in time~$\bigos{n^n}$.
Second, we study the parameter $\overline{k}=m-k$, the number of nonwinning candidates.
We show that {\wind{MAV}}, {\wind{CCAV}}, and {\wind{PAV}} are all {\wah} with respect to this parameter, even when every voter approves two candidates.
Third, from previous results by other researchers, we achieve numerous dichotomy results with respect to the two natural parameters~$\triangle_{\text{V}}$ and~$\triangle_{\text{C}}$.
It is known that {\wind{MAV}},  {\wind{CCAV}}, and {\wind{PAV}} are already {\nph} when $\triangle_{\text{V}}=2$ and
$\triangle_{\text{C}}=3$~\cite{DBLP:conf/atal/AzizGGMMW15,LeGrand2004Tereport,DBLP:conf/ijcai/ProcacciaRZ07}.
We prove that {\wind{MAV}} and {\wind{CCAV}} become polynomial-time solvable if $\triangle_{\text{V}}\leq 1$ or $\triangle_{\text{C}}\leq 2$,
and {\wind{PAV}} becomes polynomial-time solvable if $\min\{\triangle_{\text{V}}, \triangle_{\text{C}}\}=1$ or $\triangle_{\text{V}}=\triangle_{\text{C}}=2$.

\subsection{Combined Parameters}
Obviously, if a problem is {\fpt} with respect to a parameter~$\kappa$, it is {\fpt} with respect to any combined parameter including~$\kappa$.
As except {\wind{PAV}} with respect to~$\R$ whose fixed-parameter tractability is open, {\wind{MAV}}, {\wind{CCAV}}, and {\wind{PAV}} are {\fpt} with respect to~$m$,~$n$, and~$\R$,
it only makes sense to study combinations of other parameters. As $k+\overline{k}=m$,
{\wind{$\tau$}} for $\tau\in \{\text{MAV}, \text{CCAV}, \text{PAV}\}$ is {\fpt} with respect to $k+\overline{k}$.
For MAV and CCAV, the remaining combinations of two single parameters are $k+\triangle_{\text{V}}$, $k+\triangle_{\text{C}}$, $\overline{k}+\triangle_{\text{V}}$, and $\overline{k}+\triangle_{\text{C}}$.
We establish many {\fpt}-results with respect to these combined parameters.
Concretely, we obtain {\fpt}-results for {\wind{MAV}} and {\wind{CCAV}} when parameterized by $k+\triangle_{\text{C}}$ and $\overline{k}+\triangle_{\text{C}}$.
However, as {\wind{MAV}}, {\wind{CCAV}}, and {\wind{PAV}} are {\wah} with respect to~$\overline{k}$ (even when every vote approves two candidates), they are {\wah} when parameterized by $\overline{k}+\triangle_{\text{V}}$. For the parameter $k+\triangle_{\text{V}}$, we develop an {\fpt}-algorithm for {\wind{MAV}}, but we show that {\wind{CCAV}} is {\wah} even when every vote approves two candidates.
Concerning PAV, a reduction by Aziz~et~al.~\cite{DBLP:conf/atal/AzizGGMMW15} implies that {\wind{PAV}} is {\wah} with respect to $k+\triangle_{\text{V}}$.
 We are unable to prove the {\fpt}-membership of {\wind{PAV}} with respect to~$\R$, but we show that combining~$\R$ and~$\triangle_{\text{V}}$ leads to an {\fpt}-result.

We would like to point out that Misra, Nabeel, and Singh~\cite{DBLP:conf/atal/MisraNS15} studied kernelization of {\wind{MAV}} with respect to
the combined parameters $\R+m$ and $n+k$, and showed that these problems do not admit any polynomial kernels unless $\conp\subseteq \np/\polyy$\footnote{A kernelization with respect to a parameter is a polynomial-time algorithm which transforms an instance into an equivalent instance with the size of the main part being bounded from above by a computable function of the parameter. See~\cite[Chapter~2]{DBLP:books/sp/CyganFKLMPPS15} for more details.}.
In addition, Liu and Guo~\cite{DBLP:conf/atal/LiuG16} studied the combined parameter $k+n$ for some generalizations of {\wind{MAV}}.
Betzler, Slinko, and Uhlmann~\cite{DBLP:journals/jair/BetzlerSU13}  proved that {\wind{CCAV}} is {\wbh} with respect to the combined parameter of $R=n-\R$ and~$k$. Though not explicitly stated, their reduction actually implies that {\wind{CCAV}} does not admit any polynomial kernel with respect to both~$n$ and~$m$, unless the polynomial hierarchy collapses to the third level.\footnote{These kernelization lower bounds follow from a reduction by  Betzler, Slinko, and Uhlmann~\cite{DBLP:journals/jair/BetzlerSU13}  and several techniques for establishing kernelization lower bounds delineated in~\cite[Chapter~14]{DBLP:books/sp/CyganFKLMPPS15} and~\cite{DBLP:journals/talg/DomLS14}.}

\subsection{Structural Parameters}
Heretofore, the most widely studied structural parameters in the setting of multiwinner voting are based on various concepts of restricted preference domains such as single-peaked preferences and single-crossing preferences (see, e.g.,~\cite{DBLP:conf/ecai/CornazGS12}).
Particularly, these parameters measure how far an election is away from a specific domain of preferences.
In this paper, we study two structural parameters of incidence graphs of elections.
Recall that the {\it{incidence graph}} of an election $E=(C, V)$ is a bipartite graph~$G_E$ with vertex set $C\cup V$ so that there is an edge between a candidate~$c\in C$ and a vote~$v\in V$ if and only if~$c\in v$.
We prove that {\wind{CCAV}} is {\fpt} with respect to treewidth of incidence graphs, and {\wind{MAV}} and {\wind{PAV}} are {\fpt} if we combine the treewidth and the parameter~$k$.
When parameterized by the size of maximum matchings of incidence graphs, we have {\fpt}-algorithms for all three rules.

\subsection{Other Related Works}
In addition to the extensive effort made from the angle of parameterized complexity, much exploration on the complexity of {\wind{$\tau$}} restricted to
preference domains has been pursued over the past few years.
Betzler, Slinko, and Uhlmann~\cite{DBLP:journals/jair/BetzlerSU13}  developed polynomial-time algorithms for {\wind{CCAV}} in the single-peaked domain.
This algorithm was subsequently extended to an {\fpt}-algorithm with respect to the parameter single-peaked width by Cornaz, Galand, and Spanjaard~\cite{DBLP:conf/ecai/CornazGS12}.
Yu, Chan, and Elkind~\cite{DBLP:conf/ijcai/YuCE13} studied the domain of single-peaked on trees, and obtained both polynomial-time algorithms and {\nphns} results for many variants of CCAV. One of their results~\cite[Theorem~4.1]{DBLP:conf/ijcai/YuCE13} is in essence an {\fpt}-algorithm for {\wind{CCAV}} with respect to the combined parameter $m+k+\lambda$, where~$\lambda$ is the number of leaves of the underlying tree. A follow-up paper by Peters and Elkind~\cite{DBLP:conf/aaai/PetersE16} addressed several open questions left in~\cite{DBLP:conf/ijcai/YuCE13}.
Later, Elkind and Lackner~\cite{DBLP:conf/ijcai/ElkindL15} proposed~13 different restricted domains of dichotomous preferences, and obtained several polynomial-time algorithms and {\fpt}-algorithms for {\wind{$\tau$}} for~$\tau$ being CCAV, MAV, and PAV with respect to the parameter~$\triangle_{\text{V}}$, and the combined parameter~$k+\R$, when the given elections fall into certain specific categories of their proposed domains.
Liu and Guo~\cite{DBLP:conf/atal/LiuG16} presented polynomial-time algorithms for {\wind{MAV}} in two of the domains
proposed by Elkind and Lackner. Peters and Lackner~\cite{DBLP:journals/jair/PetersL20} proposed the domain of single-peaked on a circle, an appealing generalization of the single-peaked domain.
They provided a polynomial-time algorithm for {\wind{CCAV}} in this domain.
Peters~\cite{DBLP:conf/aaai/Peters18} observed a relationship between voting problems restricted to the single-peaked domain and totally unimodular integer linear programming. In view of this observation and the fact that totally unimodular integer linear programming is
polynomial-time solvable~\cite{Schrijver:1986:TLI:17634}, a variety of polynomial-time solvability results were obtained, including one for {\wind{PAV}}\@.\footnote{We note that~\cite{DBLP:journals/jair/PetersL20} integrates a conference paper with the same title appeared in AAAI 2017 and~\cite{DBLP:conf/aaai/Peters18}.}
Skowron~et~al.~\cite{DBLP:journals/tcs/SkowronYFE15} complemented these results by showing that {\wind{CCAV}} is polynomial-time solvable restricted
to the single-crossing domain. Clearwater, Puppe, and Slinko~\cite{DBLP:conf/ijcai/ClearwaterPS15} then extended this result to the domain of single-crossing on trees.
Yang~\cite{DBLP:conf/ijcai/Yang19a} expanded several domains of Elkind and Lackner~\cite{DBLP:conf/ijcai/ElkindL15} to directed tree-embedded domains, and derived a number of polynomial-time algorithms for {\wind{MAV}}, {\wind{CCAV}}, and {\wind{PAV}} restricted to these domains.

As traditional approaches tackling {\nph} problems, approximation and heuristic algorithms for {\wind{$\tau$}} where $\tau \in \{\text{MAV}, \text{CCAV}, \text{PAV}\}$ have also  been perpetually reported over the past few years~\cite{DBLP:conf/wine/ByrkaS14,DBLP:journals/jair/CyganKSS18,DBLP:journals/heuristics/FaliszewskiSST18,DBLP:journals/iandc/Skowron17,DBLP:journals/jair/SkowronF17}.

We would like to mention that the (parameterized) complexity of {\wind{$\tau$}} for numerous ranking-based multiwinner voting rules~$\tau$ has been considerably studied in the literature, too~\cite{DBLP:journals/jair/BetzlerSU13,DBLP:journals/scw/FaliszewskiSST18,DBLP:conf/ijcai/Gupta00T21}. On top of that, the (parameterized) complexity of winner determination problems for several variants of multiwinner voting rules has been explored recently~\cite{DBLP:conf/aaai/BredereckF0KN20,DBLP:conf/aaai/000121a}.

Finally, we remark that, besides winner determination problems, investigating the complexity of many strategic voting problems, such as manipulation, control, and bribery, has gained increasing interest in recent years as well~\cite{DBLP:journals/ai/BredereckFKNST21,DBLP:conf/atal/FaliszewskiST17,DBLP:conf/atal/000120,DBLP:conf/ijcai/Yang19}. We refer to~\cite{DBLP:series/sbis/LacknerS23} for a comprehensive survey of approval-based multiwinner voting where many other important issues not discussed above have been greatly elaborated on.

\begin{table}[ht!]
\renewcommand\arraystretch{1.2}
\renewcommand{\tabcolsep}{1.3mm}
\caption{A summary of the parameterized complexity of {\wind{$\tau$}}. Our results are in boldface.
Additionally,~$\omega$ is the treewidth and~$\alpha$ is the size of a maximum matching of the incidence graph of a given election.
Hardness results marked by the~$\star$ symbol mean that they hold even if every vote approves two candidates. Results without any marks mean that they are either trivial or implied by other results in the table. Underlined {\fpt}-results mean that the corresponding problems do not admit any polynomial kernels assuming certain standard complexity hypothesis.}
{\centering{
\begin{tabular}{|l|c|c|c|c|c|c|}\hline
&\multicolumn{6}{c|}{single parameters}\\ \cline{2-7}
&&&&&&\\ [-8pt]

&$\R$
&$m$
&$n$
&$k$
&$\overline{k}$
&{$(\triangle_{\text{V}}, \triangle_{\text{C}})$}
\\ \hline

\multirow{2}{*}{MAV}
&\multirow{2}{*}{{\underline{\fpt}} \protect\cite{DBLP:conf/atal/MisraNS15}}
&\multirow{2}{*}{{\underline{\fpt}} \protect\cite{DBLP:conf/atal/MisraNS15}}
&\multirow{2}{*}{\underline{\fpt} \protect\cite{DBLP:conf/atal/MisraNS15}}
&\multirow{2}{*}{{\wbhshort} \protect\cite{DBLP:conf/atal/MisraNS15}}
&{\bf{\wahshort}}$^{\star}$
&{$(\geq 2, \geq 3)$:} {\nphshort~\cite{LeGrand2004Tereport}}\\

&
&
&
&
&(Cor.~\ref{cor-mav-wah-bark})
& {others: \bf\poly} (Cor.~\ref{cor-poly-every-voter-approves-one-candidate}, Thm.~\ref{thm-maximin-poly-cases})\\ \hline

\multirow{2}{*}{CCAV}
&$\mathbf{{2^{\bigo{\R}}}}$
&\multirow{2}{*}{\underline{\fpt}  \protect\cite{DBLP:journals/jair/BetzlerSU13}}
&$\underline{n^n}$ \protect\cite{DBLP:journals/jair/BetzlerSU13}
&{\wbhshort} \protect\cite{DBLP:journals/jair/BetzlerSU13}
&{\bf\wahshort}$^{\star}$
&$(\geq 2, \geq 3)$: {\nphshort~\protect\cite{DBLP:conf/ijcai/ProcacciaRZ07}}\\

&(Cor.~\ref{thm-ccav-R-FPT})
&
&$\mathbf{{2^{\bigo{n}}}}$ (Cor.~\ref{cor-ccav-fpt-n})
& {\bwahshort}$^{\star}$ (Thm.~\ref{thm-ccav-wah-k-voter-approve-two-candidates})
& (Thm.~\ref{thm-ccav-wah-dual-k})
&{others: \bf\poly} (Cor.~\ref{cor-poly-every-voter-approves-one-candidate}, Thm.~\ref{thm-maximin-poly-cases})\\ \hline

\multirow{3}{*}{PAV}
&\multirow{3}{*}{open}
&\multirow{3}{*}{\fpt}
&{\bf\fpt}
&
&{\bf\wahshort}$^{\star}$
&{$(\geq 2, \geq 3)$:} {\nphshort~\protect\cite{DBLP:conf/atal/AzizGGMMW15}}\\

&
&
&(Thm.~\ref{thm-pav-n-fpt})
&{\wahshort}$^{\star}$ \protect\cite{DBLP:conf/atal/AzizGGMMW15}
& (Thm.~\ref{thm-pav-wah-dual-k})
&{$(\geq 3, 2)$: open}\\

&
&
&
&
&
&{others: \bf\poly} (Cor.~\ref{cor-poly-every-voter-approves-one-candidate}, Thm.~\ref{thm-pav-p-special})\\ \hline
\end{tabular}
}}

\medskip
{\centering{
\renewcommand\arraystretch{1.2}
\renewcommand{\tabcolsep}{1.93mm}
\begin{tabular}{|l|c|c|c|c|c|c|c|}\hline
&\multicolumn{5}{c|}{combined parameters}
&\multicolumn{2}{c|}{structural parameters} \\ \cline{2-8}  &&&&&&&\\[-8pt]

&$k+\triangle_{\text{C}}$
&$k+\triangle_{\text{V}}$
&$\overline{k}+\triangle_{\text{C}}$
&$\overline{k}+\triangle_{\text{V}}$
&$\R+\triangle_{\text{V}}$
&$\omega$
&$\alpha$ \\ \hline

\multirow{2}{*}{MAV}
&{\bf{\fpt}}
&{\bf{\fpt}}
&{\bf\fpt}
&\multirow{2}{*}{\wahshort$^{\star}$}
&\multirow{2}{*}{\fpt}
& ${\omega}+k$
&{\bf\fpt} \\

&(Thm.~\ref{thm-mav-fpt-k-triangle-c})
&(Cor.~\ref{thm-mav-fpt-k-triangle-v})
&(Thm.~\ref{thm-mav-fpt-bar-k-triangle-c})
&
&
&{\bf{\fpt}} (Thm.~\ref{pav-mav-fpt-k-tree-width})
&(Thm.~\ref{thm-mav-pav-matching})\\
\hline

\multirow{2}{*}{CCAV}
&{\bf{\fpt}}
&\multirow{2}{*}{{\wahshort}$^{\star}$}
&{\bf{\fpt}}
&\multirow{2}{*}{{\wahshort}$^{\star}$}
&\multirow{2}{*}{{\fpt}}
&\multirow{2}{*}{$\mathbf 4^{\omega}$ (Thm.~\ref{thm-cav-fpt-treewidth})}
&\multirow{2}{*}{$\mathbf 4^{\alpha}$}\\

&(Cor.~\ref{thm-ccav-fpt-k-triagnel-c})
&
&(Thm.~\ref{thm-ccav-wd-fpt-bar-k-c})
&
&
&
&\\  \hline

\multirow{2}{*}{PAV}
&$k+\triangle_{\text{C}}+\triangle_{\text{V}}$
&\multirow{2}{*}{\wahshort$^{\star}$\ \protect\cite{DBLP:conf/atal/AzizGGMMW15}}
&\multirow{2}{*}{open}
&\multirow{2}{*}{{\wahshort}$^{\star}$}
&{\bf{\fpt}}
&$\omega+k$
&{\bf\fpt}\\

&{\bf{\fpt}} (Cor.~\ref{thm-pav-fpt-k-c-v})
&
&
&
&(Thm.~\ref{thm-pav-fpt-r-v})
&{\bf{\fpt}} (Thm.~\ref{pav-mav-fpt-k-tree-width})
&(Thm.~\ref{thm-mav-pav-matching})\\ \hline
\end{tabular}
}} 
\label{tab-results}
\end{table}

\section{Single Parameters}
\label{sec-sinlge-parameter}
In this section, we investigate some predominant single parameters.

\subsection{Parameters~\texorpdfstring{$\R$,~$n$}{Lg}, and~\texorpdfstring{$\overline{k}$}{Lg}}

We start with an {\fpt}-algorithm for {\wind{PAV}} with respect to~$n$.
Before presenting our algorithm, let us recall the main idea of an {\fpt}-algorithm for {\wind{MAV}} with respect to~$n$~\cite{DBLP:conf/atal/MisraNS15}.
First, the candidates are partitioned into at most~$2^n$ subsets, each consisting of all candidates approved by exactly the same votes.
Then, what matters for solving the problem is only how many candidates from each subset are contained in a desired $k$-committee.
Based on this observation, {\wind{MAV}} can be reduced to integer linear programming (ILP) by assigning to each subset defined above an integer variable.
As the number of variables is bounded by~$2^n$, Lenstra's theorem~\cite{Lenstra83} provides an {\fpt}-algorithm for {\wind{MAV}} with respect to~$n$.
Although the {\fpt}-algorithm with respect to~$n$ for {\wind{CCAV}} derived by Betzler, Slinko, and Uhlmann~\cite{DBLP:journals/jair/BetzlerSU13} is not ILP-based, it is easy to see that  {\wind{CCAV}} also admits a similar ILP formulation.
Unfortunately, this framework does not apply to {\wind{PAV}}\@. The reason is that the objective in this case is a nonlinear function.
To overcome this obstacle, we resort to an {\fpt}-framework proposed by Bredereck~et~al.~\cite{DBLP:journals/tcs/BredereckFNST20} for the {\sc{Mixed Integer Programming With Simple Piecewise Linear Transformations}} problem.
In fact, we need only a special case of the problem defined below.
For a vector $x\in \mathbb{Z}^p$ and an integer $i\in [p]$, we use~$x_i$ to denote the $i$-th component of~$x$.

A real-valued function~$f$ is {\it{concave}} if for every~$x$ and~$y$ such that $x<y$ and every~$\lambda$ such that $0\leq \lambda \leq 1$ it holds that
\[f(\lambda \cdot x + (1-\lambda) \cdot y)\geq \lambda \cdot f(x)+ (1-\lambda) \cdot f(y).\]
Intuitively, a function is concave if for every~$x$,~$y$, and~$z$ such that $x\leq y\leq z$, the point $(y, f(y))$ is not below the straight line determined by the two points $(x, f(x))$ and $(z, f(z))$.
A {\it{piecewise linear concave function}} is a piecewise linear function that is concave.

\EP
{Integer Programming With Simple Piecewise Linear Transformations (IPWSPLT)}
{A collection $\{f_{i,j} \setmid i\in [p], j\in [q]\}$ of $p\cdot q$ piecewise linear concave functions, and a vector $b\in \mathbb{Z}^p$.}
{Is there a vector $x\in \mathbb{Z}^q$ such that for every~$i\in [p]$ it holds that
\begin{equation}
\label{equ-generalized-ilp}
\sum_{j=1}^q f_{i,j}(x_j)\leq b_i?
\end{equation}}

The original problem MIPWSPLT studied by Bredereck~et~al.~\cite{DBLP:journals/tcs/BredereckFNST20} is more general in that it allows the existence of additional variables which may take nonintegral values, and allows the occurrences of both piecewise linear concave functions and piecewise linear convex functions simultaneously.

\begin{lemma}
[\cite{DBLP:journals/tcs/BredereckFNST20}]
\label{lem-generalized-ILP}
{\memph{IPWSPLT}} can be solved in time $\bigo{{\emph{\textsf{poly}}}(\abs{I}, t) \cdot q^{2.5q+o(q)}}$, where~$\abs{I}$ is the number of bits encoding the input, and~$t$ is the maximum number of pieces per function.
\end{lemma}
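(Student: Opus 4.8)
The plan is to recognize \emph{IPWSPLT} as a syntactic special case of the \emph{MIPWSPLT} problem solved by Bredereck~et~al.~\cite{DBLP:journals/tcs/BredereckFNST20} and to invoke their algorithm directly. Concretely, an instance of \emph{IPWSPLT} is exactly an instance of \emph{MIPWSPLT} in which (i) there are no continuous variables, so that all~$q$ variables are integral, (ii) only piecewise linear concave transformation functions occur and no convex ones, and (iii) every constraint is of the upper-bound form~\eqref{equ-generalized-ilp}. First I would spell out this correspondence, mapping the collection $\{f_{i,j}\}$ and the vector~$b$ of an \emph{IPWSPLT} instance onto the data of an \emph{MIPWSPLT} instance, and observe that a vector $x\in\mathbb{Z}^q$ is feasible for one if and only if it is feasible for the other. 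The running-time bound of the cited theorem then specializes to the present setting: the number of integer variables is~$q$, the number of pieces per function is at most~$t$, and $\abs{I}$ counts the bits of the whole input, which yields the claimed $\bigo{\mathrm{poly}(\abs{I},t)\cdot q^{2.5q+o(q)}}$ bound.

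If instead a self-contained argument were desired, I would reconstruct the reduction underlying the cited result. The key step is to replace each transformation $f_{i,j}$ by linear data: since $f_{i,j}$ is piecewise linear with at most~$t$ pieces, it is the pointwise minimum of the at most~$t$ affine functions obtained by extending its pieces over the whole domain. Using standard integer-programming modelling one introduces, for each pair~$(i,j)$, a single auxiliary variable together with $\bigo{t}$ linear inequalities capturing this envelope, turning each constraint~\eqref{equ-generalized-ilp} into a bounded number of linear constraints. The resulting program is an integer linear program whose integer variables are precisely the original~$q$ variables, whose total number of constraints is $\bigo{pqt}$, and whose coefficients remain polynomially bounded in~$\abs{I}$. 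Finally I would apply a Lenstra--Kannan-type feasibility algorithm for integer linear programming, whose running time is $q^{2.5q+o(q)}$ in the number~$q$ of integer variables, with the remaining dependence absorbed into the $\mathrm{poly}(\abs{I},t)$ factor.

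The main obstacle I expect is the convexification step: a piecewise linear concave function appearing on the left-hand side of an upper-bound constraint must be linearized in a way that is both \emph{correct} (the auxiliary variable is forced to take the exact function value, rather than being free to slacken the constraint) and \emph{cheap} (the encoding does not increase the count of integer variables, so that the exponential factor $q^{2.5q+o(q)}$ is not inflated). This is precisely where the piecewise linear concavity hypothesis and the bound~$t$ on the number of pieces are used, and it is the delicate point that the cited theorem discharges once and for all. Since here we only need the special case described above, it suffices to invoke that theorem, which keeps the present argument short.
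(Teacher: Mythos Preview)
Your proposal is correct and matches the paper's treatment: the paper does not prove Lemma~\ref{lem-generalized-ILP} at all but simply imports it from~\cite{DBLP:journals/tcs/BredereckFNST20}, noting (just as you do) that \textsc{IPWSPLT} is the special case of their \textsc{MIPWSPLT} with no continuous variables and only concave transformations. Your additional self-contained sketch goes beyond what the paper provides, but is consistent with the cited source.
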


We note that Lemma~\ref{lem-generalized-ILP} still holds
if in the definition of IPWSPLT the less than sign is replaced with the greater than sign or the equal sign in~\eqref{equ-generalized-ilp} for several $i\in [p]$~\cite{DBLP:journals/tcs/BredereckFNST20}.

Now we are ready to present our {\fpt}-algorithm for {\wind{PAV}} with respect to~$n$.
We provide indeed an algorithm for a more general problem called {\prob{Annotated }\wind{PAV}}. In this problem, we are given an election $(C, V)$, a subset $C'\subseteq C$ of candidates, an integer~$k$ such that $\abs{C'}\leq k\leq \abs{C}$, and a number~$\R$, and the question is whether there is a $k$-committee $\ww\subseteq C$ such that $C'\subseteq \ww\subseteq C$ and {\textsf{PAV}}$(V, \ww)\geq \R$.  Clearly, {\wind{PAV}} is a special case of {\awind{PAV}} where $C'=\emptyset$.
This generalization is afterward exploited in an algorithm presented in Section~\ref{sec-structural-parameter}.

\begin{figure}[h!]
\begin{center}
\includegraphics[width=0.45\textwidth]{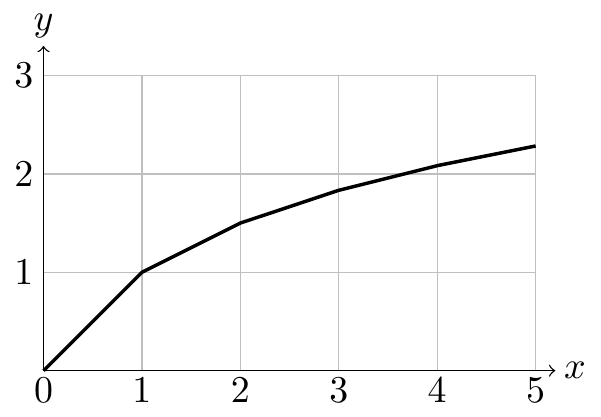}
\end{center}
\caption{An illustration of the piecewise linear concave function~$f$ in the proof of Theorem~\ref{thm-pav-n-fpt}.}
\label{fig-concave-function}
\end{figure}

\begin{theorem}
\label{thm-pav-n-fpt}
{\awind{PAV}} is {\memph\fpt} with respect to~$n$.
\end{theorem}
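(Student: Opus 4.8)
The plan is to reduce {\awind{PAV}} to {IPWSPLT} and invoke Lemma~\ref{lem-generalized-ILP}. First I would group the candidates according to which voters approve them, treating the forced set~$C'$ separately. For each subset $S\subseteq V$, let $D_S=\{c\in C\setminus C' \setmid V(c)=S\}$ be the non-forced candidates approved by exactly the voters in~$S$, and introduce one integer variable~$x_S$ recording how many members of~$D_S$ are placed in the committee; since all candidates in~$D_S$ are interchangeable, the choice of which ones is immaterial, and $0\le x_S\le \abs{D_S}$. A committee containing~$C'$ is then described by the vector $(x_S)_{S\subseteq V}$, the size condition becomes the linear equation $\sum_{S\subseteq V}x_S=k-\abs{C'}$, and for every voter~$v$ the number of its approved candidates in the committee equals $\abs{C'\cap v}+\sum_{S\,:\,v\in S}x_S$, a constant plus a linear form in the~$x_S$.

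Writing $h(t)=\sum_{i=1}^t \frac1i$ for the $t$-th harmonic number (with $h(0)=0$) and extending~$h$ to a piecewise linear function by linear interpolation between consecutive integers, the PAV score is $\sum_{v\in V}h\bigl(\abs{C'\cap v}+\sum_{S:v\in S}x_S\bigr)$. The function~$h$ is concave because its successive slopes $h(t)-h(t-1)=\tfrac1t$ are decreasing, and on the relevant range $[0,m]$ it has at most~$m$ linear pieces. The obstacle is that each summand is a concave function of a linear combination of several variables, whereas IPWSPLT only admits \emph{separable} expressions $\sum_j f_{i,j}(x_j)$. I would resolve this by introducing, for each voter~$v$, an auxiliary integer variable~$y_v$ together with the linear linking equality $y_v-\sum_{S:v\in S}x_S=\abs{C'\cap v}$; such a linear constraint is trivially of the required separable form, since a linear function is piecewise linear concave. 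The score condition then becomes the separable constraint $\sum_{v\in V}h(y_v)\ge \R$.

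Finally I would make the data integral: letting $L=\mathrm{lcm}(1,\dots,m)$, the function $g=L\cdot h$ is integer-valued, still piecewise linear concave with at most~$m$ pieces, and since the left-hand side is then an integer the score requirement is equivalent to $\sum_{v\in V}g(y_v)\ge \lceil L\cdot \R\rceil$. Together with the box constraints $0\le x_S\le\abs{D_S}$ (each a separable linear constraint in a single variable) and the size equation, all constraints have the IPWSPLT form, and by the remark following Lemma~\ref{lem-generalized-ILP} the $\ge$ and $=$ signs are permitted. The instance has $q=2^n+n=\bigo{2^n}$ variables and every function has $\bigo{m}$ pieces, so Lemma~\ref{lem-generalized-ILP} solves it in time $\bigo{\mathrm{poly}(\abs{I},m)\cdot (2^n)^{2.5\cdot 2^n+o(2^n)}}$, which is of the form $f(n)\cdot\abs{I}^{\bigo{1}}$ and hence establishes fixed-parameter tractability with respect to~$n$. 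The only genuinely delicate point is the non-separability of the objective, handled by the~$y_v$ variables; concavity and the polynomial piece-count are routine to verify.
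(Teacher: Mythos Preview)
Your proposal is correct and follows essentially the same approach as the paper: both reduce to IPWSPLT by partitioning candidates according to their approving-voter set, introducing one integer variable per class and one auxiliary variable per voter to make the harmonic-sum objective separable, yielding $n+2^n$ variables and invoking Lemma~\ref{lem-generalized-ILP}. The only cosmetic difference is that the paper handles the forced set~$C'$ via lower bounds $\abs{C'\cap C_U}\le x_U$ on the class variables, whereas you subtract~$C'$ upfront and absorb it into constant shifts of the~$y_v$; your explicit treatment of separability and the lcm-scaling for integrality are points the paper leaves implicit.
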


\begin{proof}
We prove Theorem~\ref{thm-pav-n-fpt} by reducing {\awind{PAV}} to {\prob{IPWSPLT}}.

Let $I=(E, C', k, \R)$ be an instance of {\awind{PAV}}, where $E=(C, V)$ and $C'\subseteq C$. Let $n=\abs{V}$ be the number of votes. In what follows, we  construct an integer programming formulation of {\awind{PAV}}, prove the correctness of the reduction, and prove that the formulation is an instance of {\prob{IPWSPLT}}.
We create two types of variables. First, we create a variable~$x_v$ for each vote~$v\in V$, indicating the number of~$v$'s approved candidates
in a desired~$k$-committee. Second, for each $U\subseteq V$, let~$C_U$ be
the set of candidates approved by all votes in~$U$ but disapproved by votes from~$V\setminus U$, i.e., $C_U=\{c\in C \setmid V(c)=U\}$.
We create a variable~$x_U$ for each~$U\subseteq V$ which indicates the number of candidates from~$C_U$ that are supposed to be in a desired~$k$-committee.
The constraints are as follows.
\renewcommand\labelenumi{(\theenumi)}
\begin{enumerate}
\item For each variable~$x_U$,~$U\subseteq V$, it holds that $\abs{C'\cap C_U}\leq x_U\leq \abs{C_U}$.

\item For each~$v\in V$, we have $x_v=\sum_{v\in U\subseteq V} x_U$.

\item As we aim to select a committee of cardinality~$k$, we have $\sum_{U\subseteq V}x_U=k$.

\item For the last constraint, we need to define a piecewise linear concave function $f: \mathbb{R}_{\geq 0}\rightarrow \mathbb{R}_{\geq 0}$ as follows.
First, $f(0)=0$. Second, for each positive integer~$x$, we define $f(x)=\sum_{i=1}^x\frac{1}{i}$.
Third, for each real~$x$ such that $y< x< y+1$ and~$y$ is a nonnegative integer, we define
\[f(x)=f(y)+(x-y)\cdot (f(y+1)-f(y)).\]
Figure~\ref{fig-concave-function} illustrates the function~$f$.

The last constraint is then $\sum_{v\in V} f(x_v)\geq \R$, which ensures that the desired $k$-committee has PAV score at least~$\R$.
\end{enumerate}

Now we show the correctness of the reduction, i.e., we show that the given instance~$I$ is a {\yesins} if and only if the above integer programming has a feasible solution.

$(\Rightarrow)$ Assume that~$I$ is a {\yesins}, i.e., there exists a $k$-committee~$\ww$ so that $C'\subseteq \ww\subseteq C$ and $\pavscore{V}{\ww}\geq \R$. We assign to the variables constructed above the corresponding values, i.e., we let~$x_{U}=\abs{\ww \cap C_U}$ for each $U\subseteq V$, and let $x_v=\abs{v\cap \ww}$ for each $v\in V$. Obviously, $\abs{\ww\cap C_U}\leq \abs{C_U}$. As $C'\subseteq \ww$, it holds that $\abs{C'\cap C_U}\leq \abs{\ww\cap C_U}$. Therefore, constraints defined in~(1) hold. As for every two distinct $U, U'\subseteq V$,~$C_U$ and~$C_{U'}$ are disjoint and, moreover, $\bigcup_{{U}\subseteq V}C_{{U}}=C$, it holds that
\begin{equation}
\label{eq-c}
\abs{v\cap \ww}=\sum_{v\in U\subseteq V} \abs{v\cap \ww\cap C_U}=\sum_{v\in U\subseteq V} \abs{\ww\cap C_U}=\sum_{v\in U\subseteq V} x_U
\end{equation}
for all $v\in V$, and $\sum_{U\subseteq V}\abs{\ww\cap C_U}=\abs{\ww}=k$. In other words, all constraints described in~(2)--(3) are satisfied. Notice that by the definition of~$C_U$, if $v\not\in U$ then $C_U\cap v=\emptyset$, and if $v\in U$ then $C_U\subseteq v$. This ensures the correctness of Equality~\eqref{eq-c}.  Finally, the constraint given in~(4) holds because $\sum_{v\in V} f(x_v)$ is exactly the PAV score of~$\ww$ with respect to~$V$ which is equal to or greater than~$\R$.

$(\Leftarrow)$ Assume that the above integer programming has a feasible solution. We show below that~$I$ is a {\yesins} by constructing a desired $k$-committee~$\ww$. Initially, let $\ww=\emptyset$. For every~$x_U$, $U\subseteq V$, in the feasible solution, we arbitrarily select~$x_U$ candidates from~$C_U$ so that all candidates in $C'\cap C_U$ are selected, and add them into~$\ww$. By the constraints in~(1), these~$x_U$ candidates exist for each $U\subseteq V$. By the constraint in~(3), we know that~$\ww$ consists of~$k$ candidates. By the constraints in~(2),~$x_v$ is exactly the number of candidates from~$\ww$ approved in~$v$. As a result, $\sum_{v\in V} f(x_v)$ is exactly the PAV score of~$\ww$ with respect to~$V$. Then, from the constraint in~(4), it follows that $\pavscore{V}{\ww}\geq \R$. Finally, as $\bigcup_{U\subseteq V}C_U=C$, by the definition of~$\ww$ we know that~$C'$ is contained in~$\ww$. Now we can conclude that~$I$ is a {\yesins} of {\awind{PAV}}.

Next, we show that the above integer programming is an instance of {\prob{IPWSPLT}}. To this end, we reiterate that {\prob{IPWSPLT}} contains ILP as a special case. Constraints described in (1)--(3) are standard constraints (or can be transformed into standard form trivially) of ILP. It is easy to see that the function~$f$ defined in~(4) is a piecewise linear concave function. Therefore, the above integer programming is an instance of {\prob{IPWSPLT}}.

It remains to analyze the running time of the algorithm. The above reduction clearly takes {\fpt}-time in~$n$, and the number of variables is $n+2^n$. Then, by Lemma~\ref{lem-generalized-ILP} and the correctness of the reduction, {\awind{PAV}} is {\fpt} with respect to~$n$.
\end{proof}

We note that a similar {\fpt}-algorithm for winners computation for a large class of ranking-based multiwinner voting rules has been also derived by Faliszewski~et~al.~\cite[Theorem~16]{DBLP:journals/scw/FaliszewskiSST18}.

Now we study the parameter~$\overline{k}=m-k$, i.e., the number of candidates not in a desired $k$-committee.
In regard to MAV, the {\nphns} proof by LeGrand~\cite{LeGrand2004Tereport} actually already implied that {\wind{MAV}} is {\wah} with respect to~$\overline{k}$. Precisely, LeGrand's {\nphns}  reduction is from the {\sc{Vertex Cover}} problem, where vertices correspond to candidates, and edges correspond to votes so that each vote approves exactly the candidates corresponding to its two endpoints. It is easy to see that there is a vertex cover of cardinality~$\kappa$ if and only if there is a $\kappa$-committee so that the Hamming distance between this committee and each vote is at most~$\kappa$. The {\wahns} of {\wind{MAV}} follows from that {\sc{Vertex Cover}} is {\wah} with respect to $m-\kappa$ where~$m$ is the number of vertices (candidates)~\cite{DBLP:journals/tcs/DowneyF95}.

\begin{corollary}[\cite{DBLP:journals/tcs/DowneyF95,LeGrand2004Tereport}]
\label{cor-mav-wah-bark}
{\wind{MAV}} is {\memph\wah} with respect to~$\overline{k}$ even when every vote approves two candidates.
\end{corollary}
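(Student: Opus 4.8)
The plan is to realize the corollary as an immediate consequence of the two facts pointed to in the preceding paragraph: LeGrand's reduction from \textsc{Vertex Cover} to \wind{MAV}, and the \wahns{} of \textsc{Vertex Cover} under the co-parameter $m-\kappa$. First I would recall the reduction explicitly. Given a graph $H=(N,A)$ and a target $\kappa$, I build an election whose candidate set is $N$ and which contains one vote $\edge{a}{b}$ for each edge $\edge{a}{b}\in A$, so that every vote approves exactly the two endpoints of its edge; then I set $k=\kappa$ and $\R=\kappa$. This construction is plainly computable in polynomial time and already enforces the ``$\star$'' restriction that every vote approves two candidates.

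The core step is to verify the equivalence: $H$ has a vertex cover of size $\kappa$ if and only if the election has a $\kappa$-committee $\ww$ with $\minimaxscore{V}{\ww}\le \R$. For the Hamming distance I would note that for any vote $v$ (an edge) and any committee $\ww$ of size $\kappa$ one has $\hamdis{v}{\ww}=(2-\abs{v\cap\ww})+(\kappa-\abs{v\cap\ww})=\kappa+2-2\abs{v\cap\ww}$. If $\ww$ is a vertex cover then $\abs{v\cap\ww}\ge 1$ for every edge $v$, whence $\hamdis{v}{\ww}\le\kappa$ and so $\minimaxscore{V}{\ww}\le\kappa=\R$. Conversely, if some edge $v$ had $v\cap\ww=\emptyset$ then $\hamdis{v}{\ww}=\kappa+2>\R$, so any committee achieving MAV score at most $\R$ must cover every edge, i.e.\ must be a vertex cover. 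This yields the desired equivalence with $k=\R=\kappa$.

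Finally I would track the parameter. In the construction $m=\abs{C}=\abs{N}$ and $k=\kappa$, so $\overline{k}=m-k=\abs{N}-\kappa$, which is exactly the size of the complementary independent set. Hence the reduction maps \textsc{Vertex Cover} parameterized by $m-\kappa$ to \wind{MAV} parameterized by $\overline{k}$ with the two parameters \emph{equal}, so it is a genuine parameterized reduction. Since \textsc{Vertex Cover} parameterized by $m-\kappa$ is \wah{} (equivalently, \textsc{Independent Set} parameterized by solution size is \wah{}) by Downey and Fellows~\cite{DBLP:journals/tcs/DowneyF95}, the \wahns{} of \wind{MAV} with respect to $\overline{k}$ follows, and it already holds when every vote approves two candidates.

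I do not expect a genuine obstacle here, since the statement essentially assembles known results. The only points requiring care are the bookkeeping of the parameter correspondence --- confirming that the target parameter $\overline{k}$ equals the source parameter $m-\kappa$ so that the mapping is parameter-preserving --- together with the elementary but easy-to-misstate Hamming-distance computation, which must be carried out precisely enough to pin the threshold $\R$ at exactly $\kappa$ rather than $\kappa\pm 1$.
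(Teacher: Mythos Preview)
Your proposal is correct and follows essentially the same approach as the paper: LeGrand's reduction from \textsc{Vertex Cover} (vertices become candidates, edges become two-element votes, $k=\R=\kappa$) combined with the Downey--Fellows \wahns{} of \textsc{Vertex Cover} parameterized by $m-\kappa$, with the parameter $\overline{k}$ matching $m-\kappa$ exactly. Your explicit Hamming-distance computation $\hamdis{v}{\ww}=\kappa+2-2\abs{v\cap\ww}$ is a slightly more detailed version of what the paper states informally, but the argument is otherwise identical.
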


It should be noted that {\sc{Vertex Cover}} with respect to $m-\kappa$ is exactly a parameterized variant of the {\sc{Independent Set}} problem. We provide the formal definition of the problem below because it will be used to establish our next intractability result.

An {\it{independent set}} of a graph is a subset of pairwise nonadjacent vertices.

\EPP
{$\kappa$-Independent Set ($\kappa$-IDS)}
{A graph $G=(N, A)$ and an integer $\kappa$.}
{$\kappa$.}
{Does $G$ admit an independent set of cardinality at least $\kappa$?}

\begin{theorem}
\label{thm-ccav-wah-dual-k}
{\wind{CCAV}} is {\memph\wah} with respect to~$\overline{k}$, even when every vote approves two candidates.
\end{theorem}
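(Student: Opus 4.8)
The plan is to reduce from $\kappa$-\textsc{Independent Set} parameterized by $\kappa$, mirroring the structure that LeGrand's reduction established for MAV in Corollary~\ref{cor-mav-wah-bark}, but adapting the gadgetry to the CCAV objective. Given a graph $G=(N,A)$ and a target $\kappa$, I would build an election whose candidates correspond to the vertices $N$ and whose votes correspond to the edges $A$, so that each edge-vote approves exactly the two candidates representing its endpoints. This immediately guarantees $\triangle_{\text{V}}=2$, which is what the ``$\star$'' marking in Table~\ref{tab-results} demands. The intuition is that a CCAV committee $\ww$ satisfies an edge-vote if and only if it contains at least one endpoint of that edge, so the set of satisfied votes is precisely the set of edges covered by $\ww$; choosing $\ww$ to cover \emph{few} edges is the same as choosing the complement $N\setminus\ww$ to be close to an independent set.

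The central difficulty is that $\overline{k}=m-k$ must be the parameter, so I need the size of the complement committee to encode $\kappa$, not the committee itself. Since CCAV \emph{maximizes} coverage, I would instead engineer the instance so that a large independent set corresponds to a committee that \emph{fails} to cover a prescribed number of edges. Concretely, I want a $k$-committee $\ww$ (equivalently a set $S=N\setminus\ww$ of $\overline{k}$ nonwinning vertices) such that the number of edges \emph{not} covered by $\ww$—i.e.\ edges with both endpoints in $S$—is zero exactly when $S$ is independent. Setting the CCAV threshold to $\R=\abs{A}$ then asks whether some committee covers all edges; but that is just \textsc{Vertex Cover} and would make $\overline{k}$ track independence only if the reduction is set up so that a desired committee leaves out $\kappa$ vertices. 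The hard part will be reconciling the maximization direction of CCAV with the requirement that the parameter be $\overline{k}$ and that the witnessed object be a \emph{large} independent set rather than a small one. I anticipate needing to add padding candidates (dummy candidates approved by no vote, or by a controlled set of votes) so that the committee size $k$ can be tuned independently of the covering condition, thereby decoupling $k$ from $\kappa$ while forcing $\overline{k}=\kappa+O(1)$.

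I would therefore proceed as follows. First, take the edge-vote encoding above and augment $C$ with a carefully chosen number of auxiliary candidates whose role is to absorb ``slack'' in the committee size. Second, choose $k$ (hence $\overline{k}$) as a function of $\abs{N}$ and $\kappa$ so that selecting a $k$-committee is equivalent to leaving out exactly $\overline{k}$ candidates, among which the $\kappa$ genuinely excluded vertex-candidates must form an independent set in $G$ in order to keep every edge covered. Third, set the threshold $\R$ equal to the total number of edge-votes so that the ``$\geq\R$'' CCAV condition forces every edge to be covered, which happens precisely when the excluded vertex-candidates induce no edge, i.e.\ form an independent set. Fourth, verify the forward direction (an independent set of size $\kappa$ yields a valid committee) and the backward direction (a valid committee yields an independent set of size $\kappa$), and confirm that $\overline{k}$ is bounded by a function of $\kappa$ alone so that the reduction is a parameterized reduction. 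Finally, I would note that $\triangle_{\text{V}}=2$ holds by construction, completing the $\star$-strengthened claim; since $\kappa$-\textsc{IDS} is $\wah$, this establishes the $\wahns$ of {\wind{CCAV}} with respect to $\overline{k}$.
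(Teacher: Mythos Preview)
Your core construction is exactly the paper's: vertices become candidates, edges become two-candidate votes, and the threshold is $\R=\abs{A}$. However, the padding you anticipate is unnecessary and reflects a confusion you should resolve: with $C=N$ and $k=\abs{N}-\kappa$, you already have $\overline{k}=\kappa$ on the nose, and the CCAV condition $\ccavscore{V}{\ww}\geq\abs{A}$ forces every edge to have an endpoint in~$\ww$, which holds precisely when $N\setminus\ww$ is an independent set of size~$\kappa$. There is no tension between the maximization direction of CCAV and the parameter~$\overline{k}$---asking for a $k$-committee that covers all edges is the same as asking for a vertex cover of size $\abs{N}-\kappa$, which is equivalent to an independent set of size~$\kappa$. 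Drop the auxiliary candidates and the argument is complete in three lines, matching the paper's proof.
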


\begin{proof}
To prove Theorem~\ref{thm-ccav-wah-dual-k}, we offer a reduction from {\sc{$\kappa$-IDS}} to {\wind{CCAV}}, similar to the one for {\wind{MAV}} discussed above by LeGrand~\cite{LeGrand2004Tereport}.
Let $(G=(N, A), \kappa)$ be an instance of {\sc{$\kappa$-IDS}}.
We construct an instance $((C, V), k, \R)$ of {\wind{CCAV}} as follows.
For each vertex~$c\in N$, we create one candidate denoted by the same symbol for simplicity. Let $C=N$ and let $m=\abs{C}$.
For each edge $\edge{c}{c'}\in A$, we create one vote~$v$ approving exactly~$c$ and~$c'$, i.e., $v=\{c, c'\}$. Let~$V$ be the set of the created votes.
Finally, we set $\overline{k}=\kappa$ (hence $k=m-\kappa$) and set $\R=\abs{A}$.

The correctness of the reduction is easy to check. If there is an independent set~$I$ of size~$\overline{k}$,
the CCAV score of the $(m-\overline{k})$-committee $N\setminus I$ is~$\R$. Conversely, if there is an $(m-\overline{k})$-committee~$\ww$ of CCAV score~$\R$,
every vote has at least one of its approved candidates in the committee. Due to the construction, this implies that $N\setminus \ww$ is an independent set.
\end{proof}

For PAV, we also obtain a {\wahns} result via a reduction from the following problem.

\EPP
{Minimum ($\kappa$)-Vertex Subgraph ($\kappa$-MVS)}
{A graph $G=(N, A)$ and two integers~$\kappa$ and~$\ell$.}
{$\kappa$.}
{Is there $S\subseteq N$ such that $\abs{S}=\kappa$ and $G[N\setminus S]$ has at most~$\ell$ edges?}

Cai~\cite{DBLP:journals/cj/Cai08} proved that the $\kappa$-{\sc{MVS}} problem is {\wah} even when the input graph is regular, i.e., all vertices have the same degree.

\begin{theorem}
\label{thm-pav-wah-dual-k}
{\wind{PAV}} is {\memph\wah} with respect to~$\overline{k}$, even when every vote approves two candidates.
\end{theorem}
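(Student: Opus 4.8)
The plan is to reduce from $\kappa$-{\sc{MVS}} on regular graphs, mimicking the reductions for MAV and CCAV above but recalibrating the threshold to the PAV score. Given an instance $(G=(N,A),\kappa,\ell)$ of $\kappa$-{\sc{MVS}} where $G$ is $r$-regular, I would build an election $(C,V)$ exactly as in the proof of Theorem~\ref{thm-ccav-wah-dual-k}: set $C=N$, and for each edge $\edge{c}{c'}\in A$ create one vote $v=\{c,c'\}$ approving precisely its two endpoints, so that every vote approves two candidates. I would set $k=m-\kappa$ (hence $\overline{k}=\kappa$, which preserves the parameter) and choose the threshold $\R$ as a function of $\abs{A}$, $r$, $\kappa$, and $\ell$ to be pinned down below. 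The intended correspondence is $S=N\setminus\ww$: a committee $\ww$ of size $m-\kappa$ leaves out a set $S$ of exactly $\kappa$ vertices, and $G[N\setminus S]=G[\ww]$.

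The heart of the argument is to express $\pavscore{V}{\ww}$ in terms of the number of edges induced on the selected side. Classify each vote/edge by how many endpoints lie in $\ww$: an edge inside $\ww$ has $\abs{v\cap\ww}=2$ and contributes $1+\frac12=\frac32$, a cut edge contributes $1$, and an edge inside $S$ contributes $0$. Writing $a$ for the number of edges of $G[N\setminus S]$ and $b$ for the number of cut edges, we get $\pavscore{V}{\ww}=\frac32 a+b$. Here is where I would invoke regularity: summing degrees over $S$ counts each cut edge once and each edge of $G[S]$ twice, so $2c+b=r\kappa$ with $c=\abs{A}-a-b$. Eliminating $b$ and $c$ gives $b=2\abs{A}-2a-r\kappa$, whence $\pavscore{V}{\ww}=2\abs{A}-r\kappa-\frac12 a$. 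Thus the PAV score is an affine, strictly decreasing function of $a=\abs{E(G[N\setminus S])}$.

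Given this identity, setting $\R=2\abs{A}-r\kappa-\frac{\ell}{2}$ makes the equivalence immediate: an $(m-\kappa)$-committee $\ww$ satisfies $\pavscore{V}{\ww}\geq\R$ if and only if $a\leq\ell$, that is, if and only if $S=N\setminus\ww$ has size $\kappa$ and $G[N\setminus S]$ has at most $\ell$ edges. Note that $\R$ may be a half-integer, which is permitted since {\wind{PAV}} allows rational thresholds. The reduction is clearly polynomial-time and leaves the parameter equal to $\kappa$, so the \wahns{} of $\kappa$-{\sc{MVS}} on regular graphs (Cai~\cite{DBLP:journals/cj/Cai08}) transfers to {\wind{PAV}} parameterized by $\overline{k}$, with every vote approving two candidates.

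The step I expect to require the most care is the score computation, and in particular the reliance on regularity. Without it, the PAV score would depend on $a$ and on the number of edges inside $S$ as two independent quantities, so maximizing the score would \emph{not} be equivalent to minimizing $a$, and no single threshold $\R$ would capture the MVS condition. Regularity is precisely what collapses these two quantities into one affine expression in $a$. I would therefore double-check the bookkeeping of the three edge types and the degree-sum identity $2c+b=r\kappa$, since an off-by-a-factor slip there would misplace $\R$ and break the reduction.
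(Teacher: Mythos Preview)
Your proposal is correct and essentially identical to the paper's proof: the same reduction from $\kappa$-{\sc MVS} on $r$-regular graphs, the same vertex-to-candidate and edge-to-vote construction, and the same parameter setting $\overline{k}=\kappa$. The only cosmetic difference is that the paper does the degree-sum bookkeeping on the $\ww$ side (obtaining $(m-\kappa)r-\tfrac{a}{2}$ and threshold $\R=(m-\kappa)r-\tfrac{\ell}{2}$), whereas you count on the $S$ side (obtaining $2\abs{A}-r\kappa-\tfrac{a}{2}$); since $2\abs{A}=rm$ by handshaking, these expressions coincide.
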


\begin{proof}
We provide a reduction from $\kappa$-{\sc{MVS}} to {\wind{PAV}}. Let $(G, \kappa, \ell)$ be an instance of $\kappa$-{\sc{MVS}} such that every vertex of~$G$ has degree~$r$ for some positive integer~$r$. Let $G=(N, A)$ and let $m=\abs{N}$. Without loss of generality, we assume that $\kappa<m$.
We construct an instance $((C, V), k, \R)$ of {\wind{PAV}} as follows. For each vertex~$c\in N$, we create one candidate in~$C$ denoted by the same symbol,
and for each edge $\edge{c}{c'}\in A$, we create one vote approving~$c$ and~$c'$ in~$V$.
We set $k=m-\kappa$, and hence $\overline{k}=\kappa$. Finally, we set $\R=(m-\kappa)\cdot r-\frac{\ell}{2}$. It remains to show the correctness.

$(\Rightarrow)$ Suppose that there exists $S\subseteq N$ so that $\abs{S}=\kappa$ and~$G[N\setminus S]$ contains exactly $\ell'\leq \ell$ edges.
Then the PAV score of the committee $N\setminus S$ is \[\frac{3}{2}\ell'+\left((m-\kappa)\cdot r-2\ell'\right)= (m-\kappa)\cdot r-\frac{\ell'}{2}\geq \R.\]
Obviously, $\abs{N\setminus S}=m-\abs{S}=k$. Therefore, the {\wind{PAV}} instance is a {\yesins}.

$(\Leftarrow)$ Conversely, suppose that $\ww\subseteq C$ is an $(m-\kappa)$-committee of PAV score at least~$\R$. Let~$\ell'$ be the number of votes whose both approved candidates are in~$\ww$.
Due to the reduction, every candidate is approved by exactly~$r$ votes.
Therefore, there are exactly $(m-\kappa)\cdot r-2 \ell'$ votes which have exactly one of their approved candidates in~$\ww$.
Hence, the PAV score of~$\ww$ is $\frac{3}{2}\ell'+((m-\kappa)\cdot r-2\ell')$, which is at least~$\R$.
It immediately follows that $\ell'\leq \ell$, implying that $N\setminus \ww$ is a {\yes}-witness of the {\prob{$\kappa$-MVS}} instance $(G, \kappa, \ell)$.
\end{proof}

It should be pointed out that a reduction established by Skowron, Faliszewski, and Lang~\cite[Theorem~5]{DBLP:journals/ai/SkowronFL16} for showing {\nphns} of a related problem also implies that {\wind{PAV}} is {\wah} with respect to~$\overline{k}$.\footnote{Their reduction is from {\prob{Vertex Cover}} restricted to $3$-regular graphs, and a slight modification results in a reduction based on {\prob{Vertex Cover}} restricted to regular graphs which is {\wah} parameterized by~$m-\kappa$~\cite{DBLP:journals/tcs/DowneyF95}.}

Now we move on to the parameter~$\R$. We observe that {\wind{CCAV}} is equivalent to the {\prob{Partial Hitting Set}} problem which has been intensively studied in the literature.

\EPP
{Partial Hitting Set}
{A universe~$U$, a collection~$\mathcal{S}$ of subsets of~$U$, and two nonnegative integers~$a$ and~$b$.}
{$b$.}
{Is there $S\subseteq U$ such that $\abs{S}=a$ and~$S$ intersects at least~$b$ elements of~$\mathcal{S}$?}

Clearly, by taking $C=U$, $V=\mathcal{S}$, $k=a$, and $\R=b$ in the above definition we obtain {\wind{CCAV}}.
Bl\"{a}ser~\cite{DBLP:journals/ipl/Blaser03} derived an algorithm running in time $\bigos{2^{\bigo{b}}}$ for {\sc{Partial Hitting Set}}.
This leads to the following corollary.

\begin{corollary}[\cite{DBLP:journals/ipl/Blaser03}]
\label{thm-ccav-R-FPT}
{\wind{CCAV}} can be solved in time $\bigos{2^{\bigo{\R}}}$.
\end{corollary}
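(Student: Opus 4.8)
The plan is to exploit the equivalence between \wind{CCAV} and \prob{Partial Hitting Set} that was just established, and then simply invoke Bl\"{a}ser's algorithm. The excerpt has already done the heavy lifting: by setting $C=U$, $V=\mathcal{S}$, $k=a$, and $\R=b$, an instance of \wind{CCAV} is transformed into an equivalent instance of \prob{Partial Hitting Set} in which a $k$-committee $\ww$ with $\ccavscore{V}{\ww}\geq \R$ corresponds exactly to a set $S$ of size $a$ hitting at least $b$ sets. Crucially this correspondence is parameter-preserving in the relevant quantity: the partial-coverage parameter $b$ of \prob{Partial Hitting Set} equals the threshold score $\R$ of \wind{CCAV}.

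First I would state that the reduction in both directions is an identity-like correspondence (the universe is the candidate set, the set system is the vote multiset), so it is computable in linear time and maps \yesins{s} to \yesins{s} and \noins{s} to \noins{s}. Second I would observe that Bl\"{a}ser's result gives an algorithm for \prob{Partial Hitting Set} running in $\bigos{2^{\bigo{b}}}$ time, where the $\bigos{}$ hides polynomial factors in the input size. Third, applying this algorithm to the instance obtained from the correspondence and substituting $b=\R$ immediately yields a running time of $\bigos{2^{\bigo{\R}}}$ for \wind{CCAV}.

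There is essentially no mathematical obstacle here, since the corollary is a direct transcription of Bl\"{a}ser's theorem through a trivial identity reduction; the only point requiring a word of care is confirming that the parameter carried by Bl\"{a}ser's algorithm is precisely $b$ (the number of sets to be hit) rather than, say, the solution size $a$ or the universe size, so that substituting $b=\R$ is legitimate. Since \prob{Partial Hitting Set} is parameterized by $b$ in the problem definition given above, and Bl\"{a}ser's cited bound is $\bigos{2^{\bigo{b}}}$, this identification is immediate. Hence the proof is a one-line appeal: the equivalence together with Lemma-level citation of~\cite{DBLP:journals/ipl/Blaser03} establishes that \wind{CCAV} is solvable in $\bigos{2^{\bigo{\R}}}$ time, completing the argument.
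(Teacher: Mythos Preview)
Your proposal is correct and follows essentially the same approach as the paper: the paper simply notes the identity correspondence between \wind{CCAV} and \prob{Partial Hitting Set} with $\R=b$, then cites Bl\"{a}ser's $\bigos{2^{\bigo{b}}}$ algorithm to conclude. Your write-up is a bit more explicit about verifying that the parameter carried through is indeed~$b$, but the argument is the same.
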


Observe that every {\wind{CCAV}} instance where $\R>n$ is a {\noins}. As a result, Corollary~\ref{thm-ccav-R-FPT} implies an algorithm for {\wind{CCAV}} running in time~$\bigos{2^{\bigo{n}}}$,
which appreciably improves the~$\bigos{n^n}$ algorithm presented in~\cite{DBLP:journals/jair/BetzlerSU13}.

\begin{corollary}[\cite{DBLP:journals/ipl/Blaser03}]
\label{cor-ccav-fpt-n}
{\wind{CCAV}} can be solved in time~$\bigos{2^{\bigo{n}}}$.
\end{corollary}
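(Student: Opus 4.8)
The plan is to derive this directly from Corollary~\ref{thm-ccav-R-FPT} by capping the threshold~$\R$ at~$n$. The only ingredient I need is the trivial upper bound on the CCAV score: since $\ccavscore{V}{\ww}$ counts the votes satisfied by~$\ww$, and these form a subset of~$V$, we have $\ccavscore{V}{\ww}=\abs{\{v\in V\setmid v\cap \ww\neq\emptyset\}}\leq n$ for every committee~$\ww$.

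With this bound in hand I would distinguish two cases. If $\R>n$, then no $k$-committee can attain CCAV score at least~$\R$, so the instance is a {\noins} and the algorithm rejects it immediately in polynomial time. If $\R\leq n$, I would instead invoke the $\bigos{2^{\bigo{\R}}}$-time algorithm guaranteed by Corollary~\ref{thm-ccav-R-FPT}; because $\R\leq n$ in this branch, substituting the bound into its running time shows that it finishes within $\bigos{2^{\bigo{n}}}$, as claimed.

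There is no real obstacle here: the statement follows immediately once one observes that the only meaningful thresholds satisfy $\R\leq n$, so the existing $\bigos{2^{\bigo{\R}}}$ algorithm already operates within the desired $\bigos{2^{\bigo{n}}}$ budget. The sole thing to verify is the score bound itself, which is immediate from the definition of $\ccavscore{V}{\ww}$ as the cardinality of a subset of~$V$.
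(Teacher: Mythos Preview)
Your proposal is correct and matches the paper's own argument essentially verbatim: the paper also observes that any instance with $\R>n$ is a \noins, and then invokes Corollary~\ref{thm-ccav-R-FPT} to obtain the $\bigos{2^{\bigo{n}}}$ bound.
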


Bl\"{a}ser~\cite{DBLP:journals/ipl/Blaser03} also presented a randomized algorithm of running time~$\bigos{(2e)^b}$ which
solves the  {\sc{Partial Hitting Set}} problem correctly with probability at least $1-e^{-1}$, where~$e$ is the Napier's constant.
We arrive at the following corollary.

\begin{corollary}[\cite{DBLP:journals/ipl/Blaser03}]
\label{cor-ccav-randomized}
There is a randomized algorithm running in time~$\bigos{(2e)^d}$ which solves {\wind{CCAV}} correctly with probability at least $1-e^{-1}$.
\end{corollary}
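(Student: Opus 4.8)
The plan is to observe that this corollary is an immediate consequence of the equivalence between {\wind{CCAV}} and {\prob{Partial Hitting Set}} just established, together with the randomized algorithm of Bl\"{a}ser~\cite{DBLP:journals/ipl/Blaser03} recalled right before the statement. Concretely, given an instance $((C, V), k, \R)$ of {\wind{CCAV}}, I would read it as an instance of {\prob{Partial Hitting Set}} by setting $U = C$, taking~$\mathcal{S}$ to be the collection of votes in~$V$, and letting $a = k$ and $b = \R$. Under this identification a set $S\subseteq U$ with $\abs{S}=a$ that intersects at least~$b$ members of~$\mathcal{S}$ is exactly a $k$-committee $\ww\subseteq C$ with $\ccavscore{V}{\ww}\geq \R$, so the two instances are {\yes}-instances simultaneously.

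With this correspondence in place, the remaining step is simply to invoke Bl\"{a}ser's randomized algorithm, which decides {\prob{Partial Hitting Set}} in time $\bigos{(2e)^b}$ and returns the correct answer with probability at least $1 - e^{-1}$. Since $b$ equals~$\R$ under the identification above, running this algorithm on the transformed instance solves {\wind{CCAV}} in time $\bigos{(2e)^{\R}}$ with the same one-sided success probability, which is precisely the claimed bound.

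There is essentially no obstacle here, as the corollary is a direct translation of a known result through a transparent reduction. The only point warranting a line of care is that~$V$ is a multiset of votes whereas {\prob{Partial Hitting Set}} is phrased in terms of a collection of subsets; I would note that the CCAV score counts satisfied votes rather than distinct approved sets, so repeated votes are exactly the repeated members of~$\mathcal{S}$ and are handled correctly, incurring at most a polynomial overhead that is absorbed into the $\bigos{}$ notation. Hence both the running time and the success guarantee transfer verbatim.
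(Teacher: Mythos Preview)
Your proposal is correct and matches the paper's approach exactly: the paper derives this corollary immediately from the identification of {\wind{CCAV}} with {\prob{Partial Hitting Set}} (via $U=C$, $\mathcal{S}=V$, $a=k$, $b=\R$) and Bl\"{a}ser's randomized $\bigos{(2e)^b}$ algorithm, with no further argument. Your extra remark about multisets is a harmless clarification not present in the paper.
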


\subsection{Parameters~\texorpdfstring{$\triangle_{\text{V}}$}{Lg} and~\texorpdfstring{$\triangle_{\text{C}}$}{Lg}}
In this section, we study the two parameters $\triangle_{\text{V}}$ and~$\triangle_{\text{C}}$.
First, as {\sc{Vertex Cover}} remains {\nph} when restricted to cubic graphs (i.e.,~$3$-regular graphs)~\cite{gareyJS1976}, the reduction for the {\nphns} of {\wind{MAV}} established by LeGrand~\cite{LeGrand2004Tereport} implies that {\wind{MAV}} is {\nph} even when every candidate is approved by three votes and every vote approves  two candidates.
It remains to study the cases where $\triangle_{\text{V}}\leq 1$  or $\triangle_{\text{C}}\leq 2$.

Recall that for a committee selection rule~$\tau$ and an election $E=(C, V)$, ${\tau}(E, k)$ is the set of all optimal~$k$-committees under~$\tau$. More precisely, for~$\tau$ being MAV, $\tau(E, k)$ consists of all $k$-committees of~$C$ with the minimum Hamming distance to the votes, and for~$\tau$ being  CCAV and PAV, ${\tau}(E, k)$ consists of all $k$-committees of~$C$ with the maximum CCAV and PAV scores, respectively.

It is easy to see that every election~$E=(C, V)$ where every vote approves only one candidate admits a $k$-committee, $k\leq \abs{C}$, which is optimal with respect to MAV, CCAV, and PAV at the same time. In fact, any optimal $k$-committee under the approval voting (AV) is such a $k$-committee. The AV score of a candidate~$c$ is~$\abs{V(c)}$, the number of votes approving~$c$, and AV selects  $k$-committees with the maximum sum of AV scores of candidates in the committees.

\begin{observation}
\label{lem-relation-when-triangle-c-1}
Let $E=(C, V)$ be an election where every vote in~$V$ approves at most one candidate, and let $k\leq \abs{C}$ be an integer.
It holds that {\memph{\text{AV}}}$(E, k)\subseteq {\memph{\text{MAV}}}(E,k)$ and ${\memph{\text{AV}}}(E, k)={\memph{\text{CCAV}}}(E,k)={\memph{\text{PAV}}}(E,k)$.
\end{observation}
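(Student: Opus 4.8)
The plan is to treat the two assertions separately, establishing the equality $\text{AV}(E,k)=\text{CCAV}(E,k)=\text{PAV}(E,k)$ first and the inclusion $\text{AV}(E,k)\subseteq \text{MAV}(E,k)$ afterward. The observation used throughout is that, since every vote approves at most one candidate, each vote is either empty or a singleton, and for distinct candidates $c,c'$ the vote sets $V(c)$ and $V(c')$ are disjoint.

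For the equality, I would show that the three scores agree pointwise on \emph{every} $k$-committee $\ww$, not merely on the optimal ones. A singleton vote $\{c\}$ is satisfied by $\ww$ exactly when $c\in\ww$, and in that case $\abs{\{c\}\cap\ww}=1$, so its PAV contribution is $\sum_{i=1}^{1}\frac{1}{i}=1$. Hence both $\ccavscore{V}{\ww}$ and $\pavscore{V}{\ww}$ equal the number of satisfied voters, which by disjointness of the $V(c)$ equals $\sum_{c\in\ww}\abs{V(c)}$, the AV score of $\ww$. Since the three scoring functions coincide as functions of $\ww$, their sets of score-maximizing $k$-committees coincide, which gives the claimed equality.

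For the inclusion, the key is that the Hamming distances are severely restricted. Writing $A=\{c\in C : V(c)\neq\emptyset\}$ for the set of approved candidates and recalling $\abs{\ww}=k$, a direct evaluation of $\hamdis{v}{\ww}=\abs{v\setminus\ww}+\abs{\ww\setminus v}$ shows that every vote is at Hamming distance $k-1$ (a satisfied singleton), $k$ (an empty vote), or $k+1$ (an unsatisfied singleton) from $\ww$. Consequently $\minimaxscore{V}{\ww}=k+1$ whenever some approved candidate lies outside $\ww$, i.e.\ whenever $A\not\subseteq\ww$, and $\minimaxscore{V}{\ww}\leq k$ otherwise; moreover in the latter case the exact value depends only on whether $V$ contains an empty vote, not on the particular $\ww$. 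I would then split into two cases. If $\abs{A}>k$, then no $k$-committee can contain $A$, so $\minimaxscore{V}{\ww}=k+1$ for all $\ww$, and every $k$-committee---in particular every AV-optimal one---is MAV-optimal. If $\abs{A}\leq k$, then a standard exchange argument shows every AV-optimal committee contains all of $A$ (replacing a zero-score candidate by a missing positive-score one strictly increases the AV score), so it attains $\minimaxscore{V}{\ww}\leq k$, which is the common minimum MAV score among all committees containing $A$; hence it is MAV-optimal here as well.

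The main obstacle is reconciling the $\min$--$\max$ objective of MAV with the additive objective of AV, since a priori these could favour different committees. The resolution rests entirely on the three-valued nature of the Hamming distance established above, which collapses the MAV score to a coarse function of whether $A\subseteq\ww$; the remaining work is the elementary case analysis together with the exchange argument in the case $\abs{A}\leq k$. I expect the equality part to be routine, and I would take care to note that the inclusion can be proper---when $\abs{A}>k$ every $k$-committee is MAV-optimal but not every $k$-committee is AV-optimal---which is precisely why only $\subseteq$, rather than equality, is asserted for MAV.
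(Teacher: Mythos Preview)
Your proposal is correct. The paper treats the observation as self-evident and gives no proof at all, only an example (two candidates $a,b$ with approval counts $2$ and $1$, $k=1$) to illustrate that the MAV inclusion can be proper; your argument spells out precisely the natural reasoning behind the ``easy to see'' claim, and your case $\abs{A}>k$ is exactly the regime of the paper's example.
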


To see that ${\text{AV}}(E, k)\subseteq {\text{MAV}}(E,k)$ instead of  $\text{AV}(E, k)= {\text{MAV}}(E,k)$ in Observation~\ref{lem-relation-when-triangle-c-1}, consider an election with two candidates~$a$ and~$b$, where~$a$ is approved by two votes, and~$b$ is approved by a third vote which does not approve~$a$. For $k=1$, both~$\{a\}$ and~$\{b\}$ are optimal with respect to MAV, but only~$\{a\}$ is optimal with respect to AV.

The following corollary follows from Observation~\ref{lem-relation-when-triangle-c-1} and the clear fact that an optimal $k$-committee with respect to AV can be computed in polynomial time.

\begin{corollary}
\label{cor-poly-every-voter-approves-one-candidate}
For each $\tau\in \{{\emph{\text{MAV}}}, {\emph{\text{CCAV}}}, {\emph{\text{PAV}}}\}$, {\wind{$\tau$}} is polynomial-time solvable when $\triangle_{\emph{\text{V}}}\leq 1$.
\end{corollary}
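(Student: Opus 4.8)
The plan is to solve the problem by computing a single optimal committee and then testing its score against the threshold~$\R$. First I would observe that the hypothesis $\triangle_{\text{V}} \leq 1$ is precisely the condition of Observation~\ref{lem-relation-when-triangle-c-1}, namely that every vote in~$V$ approves at most one candidate. Hence the observation applies, giving ${\text{AV}}(E,k) \subseteq {\text{MAV}}(E,k)$ and ${\text{AV}}(E,k) = {\text{CCAV}}(E,k) = {\text{PAV}}(E,k)$. The upshot is that any committee that is optimal under AV is simultaneously optimal under each of the three rules~$\tau$, so it suffices to produce one AV-optimal $k$-committee and inspect it.

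Second, I would compute an AV-optimal $k$-committee in polynomial time. Recall that the AV score of a candidate~$c$ is~$\abs{V(c)}$ and that the AV score of a committee is the sum of the scores of its members. Since the candidates contribute additively and independently, an optimal $k$-committee is obtained greedily: compute $\abs{V(c)}$ for every $c\in C$, sort the candidates by this value, and take any~$k$ candidates with the largest values (breaking ties arbitrarily). This clearly runs in polynomial time.

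Finally, letting~$\ww$ be the committee produced above, I would evaluate $\tau(V,\ww)$ and compare it with~$\R$: output \yes{} if $\minimaxscore{V}{\ww}\leq \R$ in the MAV case, and if $\ccavscore{V}{\ww}\geq \R$ (respectively $\pavscore{V}{\ww}\geq \R$) in the CCAV (respectively PAV) case; otherwise output \no. Correctness follows because~$\ww$ is an optimal committee: for CCAV and PAV, equality of the optimal-committee sets means~$\ww$ attains the best achievable score, and for MAV the inclusion ${\text{AV}}(E,k)\subseteq {\text{MAV}}(E,k)$ guarantees that~$\ww$ attains the minimum Hamming distance. The only point requiring a little care, and it is minor, is exactly this asymmetry for MAV: one must rely on the inclusion rather than on equality, since not every MAV-optimal committee need be AV-optimal. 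But because we only need a single optimal witness, the inclusion suffices. As all steps are polynomial, the claimed tractability follows.
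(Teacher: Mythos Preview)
Your proof is correct and follows essentially the same approach as the paper: the paper derives the corollary directly from Observation~\ref{lem-relation-when-triangle-c-1} together with the fact that an AV-optimal $k$-committee can be computed in polynomial time, and your argument makes these two ingredients explicit in exactly the way intended. Your added remark about the MAV asymmetry (inclusion rather than equality) is a nice clarification but not required for the argument.
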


Regarding $\triangle_{\text{C}}$, we have several polynomial-time solvability results for $\triangle_{\text{C}}\leq 2$. One of our results is based on the following polynomial-time solvable problem~\cite{DBLP:conf/icaart/Lin11}.

\EP
{Simple $b$-Edge Cover of Multigraphs (SECM)}
{A multigraph $G=(N, A)$, a function $f: N\rightarrow \mathbb{Z}^+$, and an integer~$\kappa$.}
{Is there $A'\subseteq A$ such that $\abs{A'}\leq \kappa$ and every $v\in N$ is incident to at least~$f(v)$ edges in~$A'$?}

If we require $\abs{A'}=\kappa$ in the above definition, we obtain the exact version of {\prob{SECM}} (\prob{E-SECM}).
Clearly, {\prob{E-SECM}} can also be  solved in polynomial time.

Every election $E=(C, V)$ can be represented by a multihypergraph where every vote $v\in V$ is considered as a vertex and every candidate $c\in C$ is considered as an edge consisting of  vertices in~$V(c)$. When a candidate is approved by only one vote, there is a loop on this vote. We use~$H(E)$ to denote this multihypergraph representing~$E$. Clearly, given an election, its multihypergraph representation can be computed in polynomial time. When $\triangle_{\text{C}}\leq 2$,~$H(E)$ degenerates to a multigraph.

For a class $\mathcal{H}$ of multihypergraphs, we say that an election~$E$ is an {\it{$\mathcal{H}$-election}} if $H(E)\in \mathcal{H}$.

\begin{theorem}
\label{thm-maximin-poly-cases}
{\wind{MAV}} and {\wind{CCAV}} are polynomial-time solvable if $\triangle_{\emph{\text{C}}}\leq 2$.
\end{theorem}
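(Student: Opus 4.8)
The plan is to translate both problems, via the multigraph $H(E)$, into polynomial edge problems. Since $\triangle_{\text{C}}\le 2$, the multihypergraph $H(E)$ is a multigraph $G$: every vote is a vertex, every candidate approved by two votes is an ordinary edge joining them, and every candidate approved by exactly one vote is a loop. Candidates approved by no vote are not represented in $G$; the first observation is that such a candidate, if placed in a committee, changes no $\abs{v\cap \ww}$, so it can only serve as padding to reach the prescribed size. Throughout, let $m'$ be the number of candidates approved by at least one vote (equivalently, the number of edges and loops of $G$), and recall that $k\le m$.

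For \wind{MAV}, the key first step is the identity $\hamdis{v}{\ww}=\abs{v}+k-2\abs{v\cap \ww}$, valid because $\abs{\ww}=k$. Hence $\hamdis{v}{\ww}\le \R$ is equivalent to $\abs{v\cap \ww}\ge f(v)$, where $f(v)=\max\{0,\lceil(\abs{v}+k-\R)/2\rceil\}$. As $\abs{v\cap \ww}$ is exactly the number of chosen candidates incident to the vertex $v$ in $G$ (an incident loop counting once), finding a $k$-committee with $\minimaxscore{V}{\ww}\le \R$ amounts to choosing edges of $G$ that cover each vertex $v$ at least $f(v)$ times. I would reduce this to \prob{SECM} with demand $f$ and budget $\kappa=k$ (vertices with $f(v)=0$ carry no demand): any edge set meeting all demands with at most $k$ edges can be padded to a committee of exactly $k$ candidates using surplus candidates (possible since $k\le m$), and since adding candidates only increases the intersections $\abs{v\cap \ww}$, padding never violates a constraint; conversely the candidates of any feasible committee that lie in $G$ form such an edge set. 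Polynomial-time solvability of \prob{SECM} then settles \wind{MAV}. (One may equally force the count to be exact and invoke \prob{E-SECM} by attaching a demand-free dummy vertex that absorbs the padding.)

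For \wind{CCAV}, a vote is satisfied precisely when it is incident to a chosen edge or loop, so $\ccavscore{V}{\ww}$ equals the number of vertices of $G$ covered by the chosen edges, and the task is to maximize this coverage over all choices of $k$ edges (padding with the remaining candidates when $k>m'$). The plan is to compute this maximum in closed form. Let $n'$ be the number of non-isolated vertices of $G$ and let $\mu$ be the size of a maximum matching of the subgraph obtained from $G$ by deleting all loops. I claim the maximum coverage achievable with $j$ edges equals $\min\{2j,\mu+j,n'\}$: up to $\mu$ pairwise disjoint edges each contribute two fresh vertices; by maximality of the matching there is no edge joining two unmatched vertices, so every further edge (including any loop used to reach a loop-only vertex) contributes at most one new vertex, giving the middle term; and $n'$ is an absolute ceiling, reached exactly when $j$ equals the minimum edge cover $n'-\mu\le m'$. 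Consequently \wind{CCAV} is a \yesins\ iff $\min\{2j,\mu+j,n'\}\ge \R$ with $j=\min\{k,m'\}$, which is computable in polynomial time once a maximum matching is found.

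The main obstacle I anticipate is reconciling the rigidly fixed committee size $k$ with the flexible number of \emph{useful} candidates, together with the correct bookkeeping of loops. For \wind{MAV} this is handled by the padding argument above, which relies on the monotonicity of every $\abs{v\cap \ww}$ under adding candidates; for \wind{CCAV} the same monotonicity lets me pass freely between ``exactly $k$'' and ``at most $k$'' chosen edges. The genuinely delicate point is the coverage formula for \wind{CCAV}: establishing that beyond a maximum matching each extra edge yields exactly one new covered vertex, which hinges on the no-augmenting-edge consequence of matching maximality and on treating loop-only vertices as coverable but only one at a time. I would verify this step, together with the integrality of the demands $f(v)$ and the boundary behaviour at $j=m'$, with care, whereas the reductions themselves and the invocations of \prob{SECM} and of maximum matching are routine.
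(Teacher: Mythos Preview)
Your MAV argument is essentially the paper's: both rewrite $\hamdis{v}{\ww}\le\R$ as a per-vertex demand $f(v)$ on the multigraph $H(E)$ and invoke simple $b$-edge cover. The only cosmetic difference is that the paper first deletes votes with $|v|+k\le\R$ so that $f$ is strictly positive and then uses \prob{E-SECM} with $\kappa=k$, whereas you keep such votes with $f(v)=0$, solve \prob{SECM} with budget $k$, and pad afterwards; the two are clearly equivalent.

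For CCAV your route is correct but genuinely different from the paper's. The paper gives a constructive greedy algorithm: it simplifies $H(E)$ by dropping loops and parallel copies, takes a maximum matching $M$ of the resulting graph, and if $|M|<k$ augments $M$ by one candidate per unsaturated nonempty vote (maximality of $M$ forces the unsaturated votes to be pairwise non-adjacent, so each such candidate covers exactly one new vote). You instead prove the closed form $\min\{2j,\mu+j,n'\}$ for the optimum coverage with $j$ edges and read off the answer at $j=\min\{k,m'\}$. Both rest on the same structural fact---beyond a maximum matching, every further edge gains at most one new covered vertex---but your version packages it as a formula (implicitly recovering the Gallai identity $n'-\mu$ for the minimum edge cover), while the paper's version directly outputs an optimal committee. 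The one step in your sketch that deserves an explicit line is the upper bound $c\le\mu+j$ on the number of covered vertices: pick a maximal matching inside the chosen edge set $S$, note it has size at most $\mu$, and charge each remaining covered vertex injectively to a non-matching edge of $S$ (a loop or an edge with one endpoint already matched); this is exactly the ``delicate'' point you flagged, and it goes through cleanly.
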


\begin{proof}
We derive polynomial-time algorithms for the special cases of {\wind{MAV}} and {\wind{CCAV}} stated in the theorem as follows. Let $(E, k, \R)$ be an instance of {\wind{MAV}} or  {\wind{CCAV}}, where $E=(C, V)$. We first compute the multigraph representation~$H(E)=(V, A)$ of~$E$ which can be done in polynomial time, where~$A$ is the set of edges corresponding to candidates in~$C$.
\medskip

{\noindent\bf{{{MAV}}.}} We solve {\wind{MAV}} by reducing it to {\prob{E-SECM}}.
Recall that for each $v\in V$,~$\abs{v}$ is the number of candidates approved by~$v$ which is equal to the number of edges incident to~$v$ in~$H(E)$.
Observe that the Hamming distance between every vote~$v$ and every $k$-committee can be at most $\abs{v}+k$. Therefore, if $\R\geq \abs{v}+k$, we can safely remove~$v$ from~$V$ without changing the answer to the instance. In light of this fact, we assume now that $\abs{v}+k>\R$ for every $v\in V$.
Let~$f: V \rightarrow \mathbb{Z}^+$ be a function such that $f(v)=\left\lceil\frac{\abs{v}+k-\R}{2}\right\rceil$ for every $v\in V$.
By setting $\kappa=k$ we complete the construction of an instance $(H(E), f, \kappa)$ of {\prob{E-SECM}}.
Assume that there exists $A'\subseteq A$ of cardinality~$\kappa$ so that every $v\in V$ is incident to at least~$f(v)$ edges in~$A'$. Let~$\ww$ be the~$k$-committee corresponding to~$A'$.
The Hamming distance between every vote~$v\in V$ and~$\ww$ is \[\abs{v\setminus \ww}+\abs{\ww\setminus v}\leq (\abs{v}-f(v))+(k-f(v))\leq \R.\] The proof for the other direction is analogous.
\medskip

{\noindent\bf{{{CCAV}}.}} We derive a greedy algorithm for {\wind{CCAV}}.
Let~$H'$ be the graph obtained from~$H(E)$ by
\begin{enumerate}
    \item[(1)] removing all loops, and
    \item[(2)] for every two vertices between which there are multiple edges, removing all but any arbitrary one of these multiple edges.
\end{enumerate}
Let~$M$ be a maximum matching of~$H'$, and let~$V(M)$ be the set of vertices saturated by~$M$. We distinguish between two cases.

\begin{itemize}
\item If $\abs{M}\geq k$, we arbitrarily select~$k$ edges in~$M$, and let~$\ww$ be the $k$-committee corresponding to these selected edges.

\item If $\abs{M}<k$, let~$\ww$ be the set of candidates corresponding to edges in~$M$.
Let~$E'$ be the election obtained from~$E$ by removing all votes in~$V(M)$ and all candidates approved only by votes in~$V(M)$.
As~$M$ is a maximum matching of~$H'$, no two votes of~$E'$ approve a common candidate. Then, for every nonempty vote in~$E'$, we arbitrarily select one candidate approved by the vote. Let~$\ww'$ denote the set of all these selected candidates. If $\abs{\ww'}\geq k-\abs{M}$, we include into~$\ww$ any arbitrary $k-\abs{M}$ candidates from~$\ww'$; otherwise, we include into~$\ww$ all candidates of~$\ww'$ together with any arbitrary $k-\abs{\ww\cup \ww'}$ remaining candidates.
\end{itemize}
In either case, we conclude that the given instance of {\wind{CCAV}} is a {\yesins} if and only if $\textsf{CCAV}(V, \ww)\geq \R$.
\end{proof}

Now we study special cases of {\wind{PAV}} where~$\triangle_{\text{C}}$ and~$\triangle_{\text{V}}$ are very small integers. We need the following notions.
A {\emph{path}} is a graph comprised of a sequence~$v_1$,~$v_2$,~$\dots$,~$v_t$ of~$t$ vertices and~$t-1$ edges so that there is an edge between two vertices if and only if they are consecutive in the sequence. A {\emph{cycle}} is a graph obtained from a path by adding an edge between the first and the last vertices. A {\emph{hairstick}} is a graph obtained from a path by adding one loop either on the first vertex or on the last vertex. A {\emph{double headed hairstick}} (DH-hairstick) is a graph obtained from a path by adding one loop on both the first vertex and  on the last vertex. We refer to  Figure~\ref{fig-small} for an illustration of these graphs.

\begin{figure}
\centering
{\includegraphics[width=\textwidth]{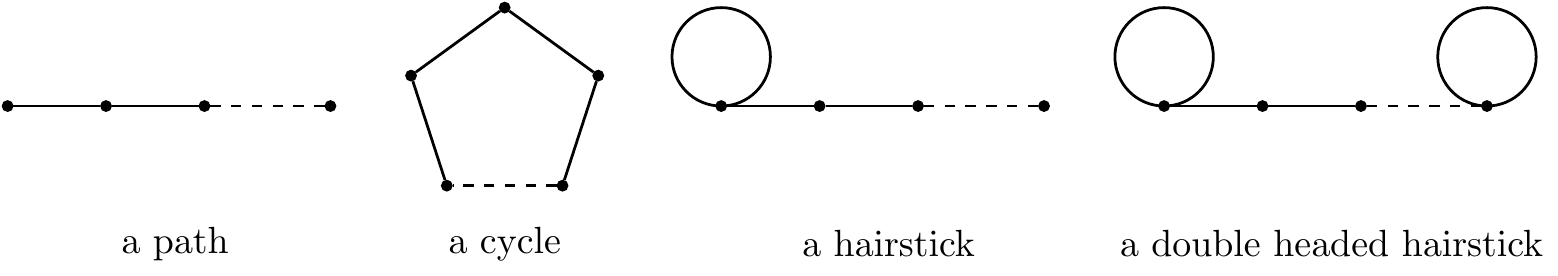}
}
\caption{An illustration of cycles, paths, hairsticks, and DH-hairsticks.}
\label{fig-small}
\end{figure}

\begin{lemma}
\label{lem-trivial-cases-pav}
Let $\mathcal{H}$ be the set of all paths, cycles, hairsticks, and DH-hairsticks. Then, given an $\mathcal{H}$-election, we can compute a PAV optimal $k$-committee of the election in polynomial time.
\end{lemma}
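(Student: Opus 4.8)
The plan is to translate the PAV objective into a purely combinatorial optimization problem on $H(E)$ and then solve that problem by a dynamic program that exploits the almost-linear structure of the four admissible graph shapes. Recall that in $H(E)=(V,A)$ each vertex is a vote and each edge is a candidate, an edge joining the (at most two) votes approving that candidate, with a loop whenever the candidate is approved by a single vote. A $k$-committee $\ww$ therefore corresponds to a choice of exactly $k$ edges, and for each vote $v$ the quantity $\abs{v\cap \ww}$ equals the number $\deg_\ww(v)$ of selected edges incident to $v$, a selected loop contributing $1$. Writing $h(x)=\sum_{i=1}^{x}\tfrac{1}{i}$ with $h(0)=0$, we have $\pavscore{V}{\ww}=\sum_{v\in V} h(\deg_\ww(v))$. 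Thus computing a PAV-optimal $k$-committee amounts to selecting exactly $k$ edges of $H(E)$ so as to maximize the separable objective $\sum_{v} h(\deg_\ww(v))$, whose every term depends only on how many of the edges incident to that single vertex are chosen.

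First I would handle the case where $H(E)$ is a path $v_1,\dots,v_t$ with edges $e_1,\dots,e_{t-1}$, $e_i=\{v_i,v_{i+1}\}$. Here $\deg_\ww(v_i)$ is determined solely by the decisions made for $e_{i-1}$ and $e_i$, so a left-to-right dynamic program suffices. I would maintain a table $f[i][j][b]$ recording the maximum total contribution of $v_1,\dots,v_i$ over all ways of choosing $j$ edges among $e_1,\dots,e_i$ with $e_i$ selected exactly when $b=1$. The base case records $h(b)$ for $v_1$; each transition to a choice $b'$ for $e_{i+1}$ adds the internal-vertex contribution $h(b+b')$ of $v_{i+1}$; and the contribution $h(b)$ of the last endpoint $v_t$ is added once the table is complete. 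This table has $O(t\cdot k)$ entries with $O(1)$ work per entry, and standard backtracking recovers the selected edges, hence the committee, in polynomial time.

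Next I would reduce the remaining three shapes to this path program by brute-forcing the $O(1)$ ``non-path'' edges. For a cycle I would guess the selection status of the closing edge $e_t=\{v_t,v_1\}$ (two possibilities); each guess imposes a fixed degree offset at $v_1$ and at $v_t$ and lowers the edge budget accordingly, after which the path program runs on $e_1,\dots,e_{t-1}$. A hairstick carries one loop at an endpoint and a DH-hairstick one loop at each endpoint; guessing whether each such loop is selected (at most four combinations) likewise fixes a $+1$ degree offset at the incident endpoint and decrements $k$, again reducing the instance to a path. Returning the best committee over all $O(1)$ guesses yields a PAV-optimal $k$-committee, and the overall running time stays polynomial.

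The individual steps are routine; the part demanding the most care is the boundary bookkeeping — correctly accounting for the endpoint degrees of the path, the wrap-around degrees $\deg_\ww(v_1)=s_t+s_1$ and $\deg_\ww(v_t)=s_{t-1}+s_t$ of the cycle, and the extra $+1$ a selected loop adds to its vertex — while keeping the number of chosen edges equal to $k$ across every guess. I would stress that no appeal to concavity of $h$ is required: the argument is an exact optimization relying only on the separability of the objective and the bounded, nearly linear structure of $H(E)$.
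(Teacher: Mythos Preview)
Your proposal is correct and takes a genuinely different route from the paper's proof. The paper argues case by case via direct combinatorial constructions: for paths and cycles it builds an optimal committee from a maximum matching of~$H(E)$ (taking~$k$ matching edges when $\abs{M}\geq k$, and otherwise padding the matching with carefully chosen further edges depending on the parity of~$n$), and for hairsticks and DH-hairsticks it argues that when $k$ is below the total number of edges some optimal committee avoids a loop, so one can delete a loop and recurse to the previous case. Your approach instead sets up a uniform $O(tk)$ dynamic program along the underlying path after brute-forcing the at most two ``non-path'' edges (the closing edge of a cycle, the loops of a hairstick or DH-hairstick). The paper's argument is shorter and yields structural insight into what optimal committees look like, but it leans on several exchange-type observations (e.g., that a loop can always be swapped out) that are only stated, not proved. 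Your DP is more mechanical but is fully self-contained, makes no use of the specific shape of~$h$ beyond separability, and would extend without change to any other per-vote scoring function or to slightly richer near-path shapes; the only point requiring care, as you note, is consistent handling of the endpoint degree offsets when a guessed extra edge is present.
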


We defer the proof of Lemma~\ref{lem-trivial-cases-pav} to the appendix.

\begin{theorem}
\label{thm-pav-p-special}
{\wind{PAV}} is polynomial-time solvable if $\triangle_{\emph{\text{C}}}\leq 1$ or $\triangle_{\emph{\text{V}}}=\triangle_{\emph{\text{C}}}= 2$.
\end{theorem}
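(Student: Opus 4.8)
The plan is to exploit the additivity of the PAV score across the connected components of the multihypergraph $H(E)$, reducing the global problem to solving each component independently for every possible number of selected candidates and then recombining the components with a knapsack-style dynamic program. The starting point is the identity $\pavscore{V}{\ww}=\sum_{v\in V}f(\abs{v\cap \ww})$, where $f(x)=\sum_{i=1}^{x}\frac{1}{i}$ is the concave harmonic function already used in the proof of Theorem~\ref{thm-pav-n-fpt}. Since every vote is a vertex of exactly one connected component of $H(E)$ and the term $f(\abs{v\cap \ww})$ depends only on the candidates (edges) lying in that component, the PAV score decomposes as a sum of per-component contributions.

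For the case $\triangle_{\text{C}}\leq 1$, every candidate is approved by at most one vote, so in $H(E)$ each candidate is either a loop on its unique approving vote or incident to no vote at all; the latter contribute nothing and merely serve as padding to reach a committee of size exactly $k$. Hence the votes are pairwise candidate-disjoint, and the only decision for a vote $v$ is how many of its approved candidates to include. Selecting $j$ of them contributes exactly $f(j)$, so the per-component profile is the trivial $g_v(j)=f(j)$ for $0\leq j\leq \abs{v}$, together with one zero-profit component accounting for the candidates approved by no vote. By concavity of $f$ this case can even be finished directly, by collecting all marginal increments $\frac{1}{i}$ for $v\in V$ and $1\leq i\leq \abs{v}$ and choosing the $k$ largest; but the generic recombination described below applies verbatim.

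For the case $\triangle_{\text{V}}=\triangle_{\text{C}}=2$, each candidate is approved by at most two votes and each vote approves at most two candidates, so $H(E)$ is a multigraph in which every vertex is incident to at most two edges. First I would argue that, under this bounded-degree condition, every connected component of $H(E)$ is a path, a cycle, a hairstick, or a DH-hairstick, allowing the degenerate single-vertex configurations, so that each component is an $\mathcal{H}$-election in the sense of Lemma~\ref{lem-trivial-cases-pav}. For each component $H_i$ I would then invoke Lemma~\ref{lem-trivial-cases-pav} once for every target size $j$ from $0$ up to the number of candidates in $H_i$, obtaining the profile $g_i(j)$ equal to the maximum PAV contribution achievable by selecting exactly $j$ candidates from $H_i$; this is polynomial because there are only polynomially many components and target sizes.

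Both cases are then closed by the same recombination step: maximize $\sum_i g_i(j_i)$ subject to $\sum_i j_i=k$ and $0\leq j_i\leq \abs{C_i}$, which a standard left-to-right dynamic program over the components solves in time polynomial in the number of components and in $k$. Correctness follows from the additivity of PAV over components together with the optimality of each $g_i$. The main obstacle I anticipate is the structural characterization of the components for $\triangle_{\text{V}}=\triangle_{\text{C}}=2$: one must handle multiedges (which yield length-two cycles) and the various placements of loops carefully, including the degenerate single-vertex cases, so that every component genuinely falls into one of the four classes covered by Lemma~\ref{lem-trivial-cases-pav}. Once this characterization is settled, the additivity of the objective and the recombination dynamic program are routine.
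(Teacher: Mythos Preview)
Your proposal is correct and follows essentially the same approach as the paper: for $\triangle_{\text{V}}=\triangle_{\text{C}}=2$ the paper likewise observes that every connected component of $H(E)$ is a path, cycle, hairstick, or DH-hairstick, invokes Lemma~\ref{lem-trivial-cases-pav} to get the optimal per-component score for each size, and recombines via the same knapsack dynamic program $T(i,j)=\max_{j'}\{\R(j')+T(i-1,j-j')\}$. The only cosmetic difference is in the $\triangle_{\text{C}}\leq 1$ case, where the paper uses a round-robin greedy over the votes rather than your largest-marginal-increment greedy, but both rest on the same observation that the contribution of each vote is $f(j)$ for $j$ chosen candidates and that $f$ is concave; your version is arguably cleaner since it fits under the same component-DP umbrella.
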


\begin{proof}
Let $I=(E, k, \R)$ be an instance of {\wind{PAV}} where $E=(C, V)$.

We consider first the case where $\triangle_{\text{C}}\leq 1$.
The following algorithm finds an optimal $k$-committee. First, let $\ww=\emptyset$.
Then, we arrange votes in~$V$ in a cyclic order (the relative orders of the votes do not matter), and starting from any arbitrary vote we consider the votes one-by-one in a clockwise order. In particular, if the currently considered vote~$v$ has at least one approved candidate not contained in~$\ww$, i.e., $v\setminus \ww \neq \emptyset$, we add any arbitrary candidate from $v\setminus \ww$ into~$\ww$; otherwise we proceed to the next vote. The procedure runs until $\abs{\ww}=k$ or~$\ww$ cannot be expanded in the way described above. Finally, we conclude that~$I$ is a {\yesins} if and only if $\textsf{PAV}(V, \ww)\geq \R$.

Now we consider the case where $\triangle_{\text{C}}=\triangle_{\text{V}}=2$.
We first compute the multigraph representation~$H(E)$ of~$E$. Given a subgraph of~$H(E)$, the subelection of~$E$ restricted to this subgraph refers to~$E$ restricted to candidates and votes corresponding respectively to the edges and the vertices of the subgraph. Observe that every connected component of~$H(E)$ is either a path, a cycle, a hairstick, or a DH-hairstick, and by Lemma~\ref{lem-trivial-cases-pav}, for every integer~$j$, an optimal $j$-committee of each subelection restricted to a connected component can be computed in polynomial time. Based on this, we derive a dynamic programming algorithm. Precisely, let $(H_1, H_2, \dots, H_z)$ be an arbitrary order of the connected components of~$H(E)$. For each $i\in [z]$, let $m^{\leq i}$ be the number of edges in the first~$i$ connected components in the order. We maintain a table~$T(i, j)$ where $i\in [z]$ and $j \leq \min\{k, m^{\leq i}\}$ is a nonnegative integer. We define~$T(i, j)$ as the PAV score of an optimal $j$-committee in the election restricted to the first~$i$ connected components. By Lemma~\ref{lem-trivial-cases-pav}, $T(1, j)$ for all possible~$j$ can be computed in polynomial time. We use the following recursion to compute~$T(i, j)$, assuming that all entries $T(i', j')$ such that $i'<i$ have been computed. For each $j'\leq j$, let~$\R(j')$ be the PAV score of an optimal $j'$-committee of~$E$ restricted to~$H_i$, which can be computed in polynomial time by Lemma~\ref{lem-trivial-cases-pav}. Then, we have that
\[T(i, j)=\max_{\substack{j'\in [j]\cup \{0\},\\ j-j'\leq m^{\leq i-1}}} \{\R(j')+T(i-1, j-j')\}.\]

After all entries are computed, we conclude that the given instance~$I$ is a {\yesins} if and only if $T(z, k)\geq \R$.
\end{proof}

\section{Combined Parameters}
\label{sec-combined-parameter}
The study in the previous section revealed that the parameters~$k$,~$\overline{k}$,~$\triangle_{\text{V}}$, and~$\triangle_{\text{C}}$ generally lead
to fixed-parameter intractability results.
In this section, we explore whether combining two or three of them offers us {\fpt}-results.

\subsection{Combining~\texorpdfstring{$k$}{Lg} with~\texorpdfstring{$\triangle_{\text{V}}$}{Lg} and~\texorpdfstring{$\triangle_{\text{C}}$}{Lg}}
We first consider the combined parameter $k+\triangle_{\text{C}}$, starting with an {\fpt}-result for {\wind{MAV}}.

\begin{theorem}
\label{thm-mav-fpt-k-triangle-c}
{\wind{MAV}} is {\memph\fpt} with respect to $k+\triangle_{\memph{\text{C}}}$.
\end{theorem}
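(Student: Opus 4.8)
I want to show {\wind{MAV}} is {\fpt} with respect to $k+\triangle_{\text{C}}$. The natural strategy is a bounded search tree / branching argument. Let $(E,k,\R)$ with $E=(C,V)$ be an instance. First I would handle the trivial reductions: if some vote $v$ has $\abs{v}+k\leq \R$, then the Hamming distance of $v$ to any $k$-committee is automatically at most $\abs{v}+k\leq \R$, so $v$ can never be the maximizing vote and may be deleted. Hence I assume $\abs{v}+k>\R$, i.e.\ $\abs{v}>\R-k$, for every remaining vote. Symmetrically, I want to bound how large each vote can be: for a $k$-committee $\ww$, $\hamdis{v}{\ww}\geq \abs{v\setminus\ww}\geq \abs{v}-k$, so if $\abs{v}-k>\R$ for some vote then the instance is a {\noins}. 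Thus every surviving vote satisfies $\abs{v}\leq \R+k$.

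**The key combinatorial observation.** The crux is to bound the number of \emph{relevant} candidates in terms of the parameter. A candidate $c$ is approved by at most $\triangle_{\text{C}}$ votes. The plan is to argue that candidates approved by no surviving vote are useless (adding such a candidate to $\ww$ only increases Hamming distances or leaves them unchanged), so an optimal committee prefers approved candidates; more carefully, I would bound the number of votes. Here is the cleaner route: pick any single surviving vote $v^{*}$; since in a {\yes}-instance $\hamdis{v^{*}}{\ww}\leq \R$ and $\abs{\ww}=k$, the committee $\ww$ shares at least $\abs{v^{*}}-\R+\ldots$ candidates with $v^{*}$ — more usefully, $\abs{v^{*}\setminus\ww}\leq \R$ forces $\abs{v^{*}\cap\ww}\geq \abs{v^{*}}-\R$. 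The candidates of $v^{*}$ number at most $\R+k$, and each is approved by at most $\triangle_{\text{C}}$ votes, so the votes ``reachable'' from $v^{*}$ through shared candidates number at most $(\R+k)\cdot\triangle_{\text{C}}$. I would then branch: guess which subset of $v^{*}$'s candidates lie in $\ww$ (at most $2^{\abs{v^{*}}}\leq 2^{\R+k}$ choices), but since $\R$ is not part of the parameter I must eliminate $\R$. The correct device is to root the branching at the candidate level: I would build the committee by a search tree of depth $k$, at each node branching over a bounded set of candidate choices dictated by a currently-violated vote.

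**The branching.** Concretely, maintain a partial committee $\ww$ (initially empty). If $\abs{\ww}=k$, check feasibility. Otherwise, if some vote $v$ currently has $\hamdis{v}{\ww}>\R$ and cannot be repaired by the remaining $k-\abs{\ww}$ insertions, prune; else select a vote $v$ whose distance must still be decreased and branch on which candidate of $v$ to add to $\ww$ — at most $\abs{v}\leq \R+k$ branches, with depth $k$. To get rid of the $\R$ factor and keep the bound a function of $k+\triangle_{\text{C}}$ only, I would instead branch on candidates \emph{outside} the current committee that are approved by a violated vote, and exploit $\triangle_{\text{C}}$ to bound the candidate pool: the candidates that matter are exactly those sharing a vote with a chosen candidate, giving a neighborhood of size at most $k\cdot\triangle_{\text{V}}\cdot\triangle_{\text{C}}$ — but $\triangle_{\text{V}}$ is not in the parameter either.

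**The main obstacle.** The genuine difficulty is that $\triangle_{\text{V}}$ and $\R$ are \emph{not} in the parameter, so I cannot afford a branching factor depending on vote sizes. I expect the right resolution to be the same reduction-to-an-auxiliary-graph idea used in Theorem~\ref{thm-maximin-poly-cases} and the $\overline{k}$ hardness reductions: observe that the only data needed is, for each vote, $\abs{v\cap\ww}$, and then express the problem so that $\triangle_{\text{C}}$ bounds the ``multiplicity'' of candidates. I suspect the intended argument bounds the number of candidates one needs to consider by $k\cdot\triangle_{\text{C}}$ after fixing a reference vote, reducing to a kernel of size $f(k,\triangle_{\text{C}})$ on which brute force over all $k$-committees runs in {\fpt}-time; the hard part will be proving that candidates outside this kernel can be safely discarded without changing the optimum, which requires an exchange argument showing that replacing a ``far'' candidate by a ``nearby'' one never increases the maximum Hamming distance.
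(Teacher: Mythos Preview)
Your proposal does not reach a proof: you correctly diagnose that any branching whose fan-out depends on $\R$ or $\triangle_{\text{V}}$ is useless here, but your suggested fix---kernelizing the \emph{candidates} down to $f(k,\triangle_{\text{C}})$ via an exchange argument---is the wrong object to shrink, and indeed you are unable to carry it out. There is no obvious way to bound the number of relevant candidates by a function of $k$ and $\triangle_{\text{C}}$ alone (a single vote may approve arbitrarily many candidates, and all of them may matter for that vote's Hamming distance).

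The paper's argument kernelizes the \emph{votes} instead, via a one-line pigeonhole you never consider. Since each candidate lies in at most $\triangle_{\text{C}}$ votes, any $k$-committee $\ww$ intersects at most $k\cdot\triangle_{\text{C}}$ votes. Hence, if you keep only the $k\cdot\triangle_{\text{C}}+1$ votes with the \emph{largest} approval sets, every $k$-committee is disjoint from at least one kept vote $v_i$, and feasibility on the kept votes forces $\hamdis{v_i}{\ww}=k+\abs{v_i}\leq \R$. Every discarded vote $v_j$ satisfies $\abs{v_j}\leq\abs{v_i}$, so $\hamdis{v_j}{\ww}\leq k+\abs{v_j}\leq k+\abs{v_i}\leq \R$ automatically; thus discarding them is safe. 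After this reduction $n\leq k\cdot\triangle_{\text{C}}+1$, and one simply invokes the known {\fpt}-algorithm for {\wind{MAV}} with respect to~$n$. No branching, no exchange argument, and no control over the candidate set is needed.
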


\begin{proof}
We prove the theorem by giving an {\fpt}-algorithm for {\wind{MAV}} as follows. Let $I=(E, k, \R)$ be an instance of {\wind{MAV}}, where $E=(C, V)$.
Let $n=\abs{V}$. If $n\leq k\cdot\triangle_{\text{C}}+1$, then as {\wind{MAV}} is {\fpt} with respect to~$n$~\cite{DBLP:conf/atal/MisraNS15}, we can solve~$I$ in {\fpt}-time in $k+\triangle_{\text{C}}$.
Otherwise, let $(v_1,v_2,\dots,v_n)$ be a linear order on~$V$ such that $\abs{v_i}\geq \abs{v_{i+1}}$ for every $i\in [n-1]$, i.e.,~$v_i$ approves at least the same number of candidates as~$v_{i+1}$ does.
Then, the algorithm deletes the last $n-(k\cdot \triangle_{\text{C}}+1)$ votes in this order,
and solves the remaining instance by an {\fpt}-algorithm with respect to the number of votes (e.g., the one presented in~\cite{DBLP:conf/atal/MisraNS15}).

It remains to prove the correctness of the algorithm.
Let~$E'$ be the election after the deletion of the votes as described above. Clearly, every~$k$-committee of the original election with MAV score~$\R$ is a $k$-committee of~$E'$ with MAV score at most~$\R$.
To show the correctness for the opposite direction, let~$\ww$ be a~$k$-committee of~$E'$ with MAV score~$\R$.
As every candidate is approved by at most~$\triangle_{\text{C}}$ votes, at most~$k\cdot \triangle_{\text{C}}$ votes in~$E'$ intersect~$\ww$.
As~$E'$ contains the first $k\cdot \triangle_{\text{C}}+1$ votes in the order defined above, there exists $i\in [k\cdot \triangle_{\text{C}}+1]$ so that the vote~$v_i$ does not approve any candidate from~$\ww$.
The Hamming distance between~$\ww$ and~$v_i$ is $k+\abs{v_i}\leq \R$. As $\abs{v_j}\leq \abs{v_i}$ for every deleted vote~$v_j$, $j\geq k\cdot \triangle_{\text{C}}+2$,
the Hamming distance between~$\ww$ and~$v_j$ is at most $k+\abs{v_j}\leq k+\abs{v_i}\leq \R$. It follows that~$\ww$ has MAV score~$\R$ in the election~$E$.
\end{proof}

Let us move on to CCAV.
Obviously, if~$\ww$ is a~$k$-committee, then at most $k\cdot \triangle_{\text{C}}$ votes
intersect~$\ww$. Hence, the CCAV score of every optimal~$k$-committee is bounded from above by $k\cdot \triangle_{\text{C}}$. This observation leads to a simple algorithm for {\wind{CCAV}}: if $\R>k\cdot \triangle_{\text{C}}$, return ``{\no}''; otherwise, solve it by Corollary~\ref{thm-ccav-R-FPT} in time $\bigos{2^{\bigo{\R}}}=\bigos{2^{\bigo{k\cdot \triangle_{\text{C}}}}}$.

\begin{corollary}
\label{thm-ccav-fpt-k-triagnel-c}
{\wind{CCAV}} is {\memph\fpt} with respect to $k+\triangle_{\memph{\text{C}}}$.
\end{corollary}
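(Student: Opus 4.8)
The plan is to reduce this to the already-established score bound parameterized by~$\R$, namely Corollary~\ref{thm-ccav-R-FPT}. The crucial structural observation is that the CCAV score of \emph{any} $k$-committee is automatically bounded in terms of the combined parameter. Indeed, for a committee~$\ww$ the score $\ccavscore{V}{\ww}$ counts votes that intersect~$\ww$, and a vote can contribute only if it approves at least one candidate of~$\ww$. Since each candidate~$c$ is approved by at most~$\triangle_{\text{C}}$ votes, the set of votes intersecting~$\ww$ is contained in $\bigcup_{c\in \ww} V(c)$, whose cardinality is at most $\sum_{c\in\ww}\abs{V(c)}\leq k\cdot \triangle_{\text{C}}$. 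Hence every $k$-committee has CCAV score at most $k\cdot \triangle_{\text{C}}$.

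With this bound in hand, the algorithm splits into two trivial cases. First I would check whether $\R> k\cdot \triangle_{\text{C}}$; if so, no $k$-committee can meet the threshold, so the instance is a {\noins} and we return ``{\no}'' immediately. Otherwise we are guaranteed $\R\leq k\cdot \triangle_{\text{C}}$, and I would invoke the algorithm of Corollary~\ref{thm-ccav-R-FPT}, which decides {\wind{CCAV}} in time $\bigos{2^{\bigo{\R}}}$. Substituting the bound $\R\leq k\cdot \triangle_{\text{C}}$ yields a running time of $\bigos{2^{\bigo{k\cdot \triangle_{\text{C}}}}}$, which is of the form $f(k,\triangle_{\text{C}})\cdot \abs{I}^{\bigo{1}}$ for a computable function~$f$, establishing fixed-parameter tractability with respect to $k+\triangle_{\text{C}}$.

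There is essentially no hard step here: correctness of the ``{\no}'' branch is exactly the score upper bound just derived, and correctness together with the running time of the second branch is inherited verbatim from Corollary~\ref{thm-ccav-R-FPT}. If anything merits a sentence of care, it is only verifying that the bound $\ccavscore{V}{\ww}\leq k\cdot \triangle_{\text{C}}$ holds uniformly over all $k$-committees (so that the early rejection is sound), and that $2^{\bigo{k\cdot \triangle_{\text{C}}}}$ qualifies as an {\fpt}-dependence on the combined parameter rather than on $k$ or $\triangle_{\text{C}}$ alone. Consequently this result is best stated as a corollary of Corollary~\ref{thm-ccav-R-FPT}, with the score bound supplying the only additional ingredient.
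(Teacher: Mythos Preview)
Your proposal is correct and matches the paper's argument essentially verbatim: bound the CCAV score of any $k$-committee by $k\cdot\triangle_{\text{C}}$, reject if $\R>k\cdot\triangle_{\text{C}}$, and otherwise invoke Corollary~\ref{thm-ccav-R-FPT} to obtain running time $\bigos{2^{\bigo{k\cdot\triangle_{\text{C}}}}}$.
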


Now we study the combined parameter $k+\triangle_{\text{V}}$.
Observe that {\wind{MAV}} admits a straightforward {\fpt}-algorithm with respect to $k+\triangle_{\text{V}}$: if $\R>k+\triangle_{\text{V}}$, every~$k$-committee has MAV score at most~$\R$, and hence we directly conclude that the given instance is a {\yesins}; otherwise, we solve it by the {\fpt}-algorithm with respect to~$\R$ proposed in~\cite{DBLP:conf/atal/MisraNS15}.

\begin{corollary}
\label{thm-mav-fpt-k-triangle-v}
{\wind{MAV}} is {\memph\fpt} with respect to $k+\triangle_{\memph{\text{V}}}$.
\end{corollary}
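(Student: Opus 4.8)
The plan is to prove Corollary~\ref{thm-mav-fpt-k-triangle-v} by a simple case distinction on the threshold~$\R$, reducing the problem to the previously established {\fpt}-algorithm for {\wind{MAV}} parameterized by~$\R$ due to Misra, Nabeel, and Singh~\cite{DBLP:conf/atal/MisraNS15}. The key observation is that for any $k$-committee~$\ww$ and any vote~$v\in V$, the Hamming distance $\hamdis{v}{\ww}=\abs{v\setminus \ww}+\abs{\ww\setminus v}$ is trivially bounded: we have $\abs{v\setminus \ww}\leq \abs{v}\leq \triangle_{\text{V}}$ and $\abs{\ww\setminus v}\leq \abs{\ww}=k$, so $\hamdis{v}{\ww}\leq k+\triangle_{\text{V}}$ for every vote. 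Consequently, the MAV score $\max_{v\in V}\hamdis{v}{\ww}$ of \emph{any} $k$-committee is at most $k+\triangle_{\text{V}}$.

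First I would set up the instance $I=(E,k,\R)$ with $E=(C,V)$ and distinguish two cases based on the relationship between~$\R$ and $k+\triangle_{\text{V}}$. In the case $\R\geq k+\triangle_{\text{V}}$, the bound above shows that every $k$-committee already has MAV score at most~$\R$; since $k\leq\abs{C}$, at least one $k$-committee exists, and so the instance is automatically a {\yesins}. The algorithm simply returns ``{\yes}'' in constant time after the comparison. In the complementary case $\R<k+\triangle_{\text{V}}$, the threshold~$\R$ itself is bounded by the parameter $k+\triangle_{\text{V}}$, so an {\fpt}-algorithm parameterized by~$\R$ is automatically an {\fpt}-algorithm parameterized by $k+\triangle_{\text{V}}$: we invoke the algorithm of~\cite{DBLP:conf/atal/MisraNS15}, whose running time is bounded by $\bigo{f(\R)\cdot\abs{I}^{\bigo{1}}}\leq\bigo{f(k+\triangle_{\text{V}})\cdot\abs{I}^{\bigo{1}}}$ for the appropriate computable function~$f$, since~$f$ may be assumed nondecreasing.

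I expect there to be essentially no genuine obstacle here, as the result is a direct corollary rather than a standalone theorem; indeed the informal argument is already sketched in the paragraph immediately preceding the statement. The only point requiring a moment of care is verifying the uniform distance bound $\hamdis{v}{\ww}\leq k+\triangle_{\text{V}}$ from the definition of Hamming distance, and noting that this holds for \emph{all} $k$-committees simultaneously so that the ``{\yes}'' answer in the first case needs no further search. Combining the two cases, the total running time is dominated by a single invocation of the~$\R$-parameterized routine with $\R<k+\triangle_{\text{V}}$, which is {\fpt} in $k+\triangle_{\text{V}}$, completing the proof.
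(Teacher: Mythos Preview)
Your proposal is correct and follows essentially the same approach as the paper: both argue that if $\R\geq k+\triangle_{\text{V}}$ (the paper writes $\R>k+\triangle_{\text{V}}$, an immaterial boundary difference) then every $k$-committee is a {\yes}-witness, and otherwise $\R$ is bounded by the parameter so the {\fpt}-algorithm of~\cite{DBLP:conf/atal/MisraNS15} with respect to~$\R$ applies. As you yourself note, the paper already sketches this argument in the paragraph preceding the corollary.
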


Aziz~et~al.~\cite{DBLP:conf/atal/AzizGGMMW15} proved that {\wind{PAV}} is {\wah} when the parameter is~$k$ even if $\triangle_{\text{V}}=2$.
This implies that {\wind{PAV}} is {\wah} with respect to $k+\triangle_{\text{V}}$.
For CCAV, Betzler, Slinko, and Uhlmann~\cite{DBLP:journals/jair/BetzlerSU13} proved that {\wind{CCAV}} is {\wbh} with respect to~$k$ by a reduction from the {\sc{Hitting Set}} problem.  However, in the reduction the maximum number of candidates approved by a vote is not bounded from above by a constant.
Using a reduction from the {\sc{Partial Vertex Cover}} problem, we show that {\wind{CCAV}} is {\wah} with respect to~$k$ in this special case.

\EPP
{Partial Vertex Cover (PVC)}
{A graph~$G$ and two integers~$\kappa$ and~$\ell$.}
{$\kappa$.}
{Is there a subset $S\subseteq V(G)$ such that $\abs{S}=\kappa$ and~$S$ covers at least~$\ell$ edges of~$G$?}

It is known that PVC is {\wah} with respect to~$\kappa$~\cite{DBLP:conf/wads/GuoNW05}.

\begin{theorem}
\label{thm-ccav-wah-k-voter-approve-two-candidates}
{\wind{CCAV}} is {\memph\wah} with respect to~$k$ even if every vote approves two candidates.
\end{theorem}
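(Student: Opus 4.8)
The plan is to give a parameter-preserving reduction from \prob{PVC} to {\wind{CCAV}}, reusing the same edges-as-votes construction already employed in Theorem~\ref{thm-ccav-wah-dual-k}. Given a \prob{PVC} instance $(G, \kappa, \ell)$ with $G=(N, A)$, I would build an election $(C, V)$ by turning each vertex into a candidate (denoted by the same symbol) and each edge $\edge{c}{c'}\in A$ into a single vote $v=\{c, c'\}$ approving exactly the two candidates corresponding to its endpoints. I would then set $k=\kappa$ and $\R=\ell$. By construction every vote approves precisely two candidates, so the output lies in the promised special case, and since $k=\kappa$ the new parameter equals the old one, which is exactly what a {\wah}-reduction requires.

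For correctness I would record the correspondence between a $k$-committee $\ww$ and the vertex set $S\subseteq N$ it represents. A vote $v=\{c, c'\}$ is satisfied by $\ww$, i.e.\ $v\cap \ww\neq\emptyset$, if and only if at least one of the endpoints $c, c'$ lies in $S$, which is exactly the condition that the edge $\edge{c}{c'}$ is covered by $S$. Hence $\ccavscore{V}{\ww}$ equals the number of edges of $G$ covered by $S$. Since $\abs{\ww}=k=\kappa=\abs{S}$, a $k$-committee with $\ccavscore{V}{\ww}\geq \R=\ell$ exists if and only if some $\kappa$-vertex set covers at least $\ell$ edges, so the two instances are equivalent.

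A minor point I would address is the exact-cardinality requirement: both the committee and a \prob{PVC} witness must have size exactly $k=\kappa$, so the sizes match directly with no padding needed, and in the degenerate case $\kappa>\abs{N}$ the \prob{PVC} instance is trivially negative and can be mapped to a fixed \noins. The construction is plainly polynomial-time computable, and as \prob{PVC} is {\wah} with respect to~$\kappa$~\cite{DBLP:conf/wads/GuoNW05} while $k=\kappa$, the {\wahns} of {\wind{CCAV}} parameterized by~$k$ in the two-approval setting follows. I do not anticipate a genuine obstacle here: the reduction is essentially textbook, and the only thing meriting care is verifying the score-versus-coverage identity above, which is immediate from the definition of the CCAV score.
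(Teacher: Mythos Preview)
Your proposal is correct and follows essentially the same approach as the paper: both reduce from \prob{PVC} by taking vertices as candidates, edges as two-candidate votes, and setting $k=\kappa$, $\R=\ell$. Your write-up is in fact more explicit than the paper's, which simply states that correctness is easy to see.
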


\begin{proof}
Given an instance $(G, \kappa, \ell)$ of {PVC}, we create a {\wind{CCAV}} instance as follows. We regard each vertex as a candidate and regard each edge as a vote approving exactly the two candidates corresponding to its two endpoints. The reduction is completed by setting $k=\kappa$ and $\R=\ell$. The correctness is easy to see.
\end{proof}

\subsection{Combining~\texorpdfstring{$\overline{k}$}{Lg} with~\texorpdfstring{$\triangle_{\text{V}}$}{Lg} and~\texorpdfstring{$\triangle_{\text{C}}$}{Lg}}

We have shown that {\wind{MAV}}, {\wind{CCAV}}, and {\wind{PAV}} are {\wah} or {\wbh} with respect to the single parameter~$\overline{k}$ even when $\triangle_{\text{V}}=2$.
It follows that these problems are {\wah} or {\wbh} when parameterized by $\overline{k}+\triangle_{\text{V}}$.
Hence, we focus only on the combined parameter $\overline{k}+\triangle_{\text{C}}$.
We first prove that {\wind{MAV}} is {\fpt} with respect to this parameter by reducing it to an {\fpt} problem which is a generalization of the $r$-{\sc{Set Packing}} problem.

\EPP
{Generalized $r$-Set Packing (G$r$SP)}
{A universe~$U$, a multiset~$\mathcal{S}$ of $r$-subsets of~$U$,  a function $f: U\rightarrow \mathbb{N}_{0}$, and an integer~$\kappa$.}
{$\kappa+r$.}
{Is there an $S\subseteq \mathcal{S}$ such that $\abs{S}=\kappa$ and every $u\in U$ occurs in at most~$f(u)$ elements of~$S$?}

It is known that {\prob{G$r$SP}} is {\fpt}~\cite{DBLP:journals/jcss/YangG17}\footnote{The definition of G$r$SP in~\cite{DBLP:journals/jcss/YangG17} requests~$f(u)$ to be positive for all $u\in U$. However, it is fairly easy to see that if we allow $f(u)=0$ for some $u\in U$, we can safely remove all $X \in \mathcal{S}$ such that~$u\in X$ from the instance without changing the answer to the instance. So, allowing $f(u)=0$ for $u\in U$ does not destroy the fixed-parameter tractability of the problem.}.
It is easy to verify that the variant of the {\prob{G$r$SP}} problem where each $s\in \mathcal{S}$ is of cardinality at most~$r$, instead of exactly~$r$, is reducible to {\prob{G$r$SP}} in polynomial time:
for each $s\in \mathcal{S}$ such that $\abs{s}<r$, create $r-\abs{s}$ new elements in~$U$, add them into~$s$, and set $f(u)=1$ for every newly introduced element in~$U$.
Therefore, the above variant is also {\fpt}. In the following, we use {\prob{G$r_{\leq}$SP}} to denote this variant.

\begin{theorem}
\label{thm-mav-fpt-bar-k-triangle-c}
{\wind{MAV}} is {\memph\fpt} with respect to $\overline{k}+\triangle_{\memph{\text{C}}}$.
\end{theorem}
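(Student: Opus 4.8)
The plan is to reduce {\wind{MAV}} to {\prob{G$r_{\leq}$SP}}, viewing the choice of a $k$-committee dually as a choice of the $\overline{k}$ candidates to \emph{exclude}. Let $I=(E,k,\R)$ with $E=(C,V)$ be the given instance and let $\ww$ be a candidate $k$-committee with complement $\overline{\ww}=C\setminus \ww$, so $\abs{\overline{\ww}}=\overline{k}$. The first step is to rewrite the Hamming distance in terms of the excluded candidates. Using $\abs{v\cap \ww}=\abs{v}-\abs{v\cap \overline{\ww}}$, a short calculation gives
\[\hamdis{v}{\ww}=\abs{v}+k-2\abs{v\cap \ww}=k-\abs{v}+2\abs{v\cap \overline{\ww}}.\]
Hence the MAV constraint $\hamdis{v}{\ww}\leq \R$ is equivalent to $\abs{v\cap \overline{\ww}}\leq \frac{\R-k+\abs{v}}{2}$, and since the left-hand side is a nonnegative integer this is in turn equivalent to $\abs{v\cap \overline{\ww}}\leq \lfloor \frac{\R-k+\abs{v}}{2}\rfloor$.

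Second, I would phrase the problem of choosing $\overline{\ww}$ as a {\prob{G$r_{\leq}$SP}} instance. Take the universe $U=V$, and for each candidate $c\in C$ put into the multiset $\mathcal{S}$ the set $V(c)\subseteq V$ of votes approving~$c$; each such set has size $\abs{V(c)}\leq \triangle_{\text{C}}$, so I set $r=\triangle_{\text{C}}$. I set $\kappa=\overline{k}$, and for each vote~$v$ define the capacity $f(v)=\lfloor \frac{\R-k+\abs{v}}{2}\rfloor$. If some $f(v)<0$, then the constraint for~$v$ cannot be met by any committee of size~$k$, so I output {\no} in a preprocessing step; otherwise $f(v)\in \mathbb{N}_0$ as required. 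Selecting a sub-multiset $S\subseteq \mathcal{S}$ of size $\overline{k}$ corresponds to choosing an exclusion set $\overline{\ww}$ of size $\overline{k}$ (equivalently a committee $\ww=C\setminus\overline{\ww}$ of size~$k$), and the number of selected sets containing a vote~$v$ is exactly $\abs{v\cap\overline{\ww}}$. Thus the packing constraint ``every $v$ occurs in at most $f(v)$ selected sets'' is precisely the reformulated MAV constraint above, so $I$ is a {\yesins} of {\wind{MAV}} if and only if the constructed {\prob{G$r_{\leq}$SP}} instance is a {\yesins}.

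Finally, the reduction is computable in polynomial time and its parameter is $\kappa+r=\overline{k}+\triangle_{\text{C}}$. Since {\prob{G$r_{\leq}$SP}} is {\fpt} with respect to $\kappa+r$, the claimed fixed-parameter tractability of {\wind{MAV}} with respect to $\overline{k}+\triangle_{\text{C}}$ follows. I expect the delicate points to be bookkeeping rather than conceptual: getting the Hamming-distance rewriting exactly right, correctly rounding the rational threshold~$\R$ via the floor, and handling the degenerate cases (capacities $f(v)<0$ forcing a {\no}-answer, candidates approved by no vote yielding empty sets, and distinct candidates with identical approval sets forcing $\mathcal{S}$ to be a multiset). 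The conceptual crux is simply the dual viewpoint: excluding candidates turns MAV's ``minimize the maximum Hamming distance'' objective into a family of per-vote \emph{upper-bound} (capacity) constraints, which is exactly the packing structure that {\prob{G$r_{\leq}$SP}} captures.
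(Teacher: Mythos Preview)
Your proposal is correct and follows essentially the same route as the paper: both reduce {\wind{MAV}} to {\prob{G$r_{\leq}$SP}} by taking the dual view of excluding $\overline{k}$ candidates, with universe $U=V$, sets $V(c)$ for $c\in C$, $r=\triangle_{\text{C}}$, $\kappa=\overline{k}$, and capacity $f(v)=\lfloor\frac{\R+\abs{v}-k}{2}\rfloor$, after a preprocessing step that rejects when some $f(v)$ would be negative. Your Hamming-distance rewriting, the floor argument, and the handling of degenerate cases all match the paper's treatment.
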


\begin{proof}
Let $I=(E, k, \R)$ be an instance of {\wind{MAV}}, where $E=(C, V)$. Note that the Hamming distance between each vote~$v$ and each $k$-committee is at least $\abs{k-\abs{v}}$. Therefore, if there exists a vote $v\in V$ such that $\abs{v}<k$ and $d<k-\abs{v}$, we immediately conclude that the given instance~$I$ is a {\noins}. In what follows, we assume that for all $v\in V$ either it holds that $\abs{v}\geq k$, or it holds that $\abs{v}<k$ and $d\geq k-\abs{v}$.
We reduce~$I$ to an instance of {\prob{G$r_{\leq}$SP}} in polynomial time as follows.
Precisely, let $U=V$, and let $\mathcal{S}=\{V(c) \setmid c\in C\}$.
Clearly, each element of~$\mathcal{S}$ is
of cardinality at most~$\triangle_{\text{C}}$. Hence, we set $r=\triangle_{\text{C}}$. Regarding the function~$f$, for each $v\in V$ we define
\[f(v)=\left\lfloor \frac{\R+\abs{v}-k}{2}\right\rfloor.\] By the above assumption, the function~$f$ is nonnegative. Finally, we define $\kappa=\overline{k}=\abs{C}-k$.

It remains to prove that the two instances are equivalent.

$(\Rightarrow)$ Suppose that the {\prob{G$r_{\leq}$SP}} instance is a {\yesins}. In particular, let $S\subseteq \mathcal{S}$ be such that $\abs{S}=\kappa$ and every $u\in U$ occurs in at most~$f(u)$ elements of~$S$. Let~$C'$ be the subset of candidates corresponding to~$S$. Clearly, $\abs{C'}=\abs{S}=\overline{k}$. We claim that $\ww=C\setminus C'$ is a~$k$-committee with MAV score at most~$\R$.
Due to the above construction, every vote~$v$ occurs in at most~$f(v)$ submultisets from~$S$. This implies that~$v$ has at least $\abs{v}-f(v)$ of its approved candidates in~$\ww$.
Hence, the Hamming distance between~$v$ and~$\ww$ is at most \[f(v)+(k-(\abs{v}-f(v)))=2\cdot f(v)+k-\abs{v}\leq \R.\] In other words, the instance~$I$ is a {\yesins}.

$(\Leftarrow)$ Suppose that~$I$ is a {\yesins}, i.e., there is a~$k$-committee~$\ww\subseteq C$ such that $\textsf{MAV}(V, \ww)\leq \R$. Let $C'=C\setminus \ww$ and let  $S=\{V(c) \setmid c\in C'\}$.
Then, for each vote $v\in V$ at least $\left\lceil \frac{\abs{v}+k-\R}{2}\right\rceil$ of its approved candidates must be in~$\ww$.
In other words, at most $\abs{v}-\left\lceil \frac{\abs{v}+k-\R}{2}\right\rceil=f(v)$ of $v$'s approved candidates can be in~$S$.
This implies that the {\prob{G$r_{\leq}$SP}} instance is a {\yesins}.
\end{proof}

Next, we prove that {\wind{CCAV}} is {\fpt} with respect to the combined parameter $\overline{k}+\triangle_{\text{C}}$ as well.

\begin{theorem}
\label{thm-ccav-wd-fpt-bar-k-c}
{\wind{CCAV}} is {\memph\fpt} with respect to $\overline{k}+\triangle_{\memph{\text{C}}}$.
\end{theorem}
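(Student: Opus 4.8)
The plan is to exploit the complementary view of \wind{CCAV}. Writing $C'=C\setminus\ww$ for the set of $\overline{k}$ excluded candidates, a vote $v$ fails to be satisfied by $\ww$ exactly when $v\subseteq C'$, so $\ccavscore{V}{\ww}=n-\abs{\{v\in V\setmid v\subseteq C'\}}$. The task thus becomes: choose a $\overline{k}$-subset $C'\subseteq C$ that \emph{traps} (fully contains) as few votes as possible. First I would pin down the only interesting parameter regime with two counting bounds. Every empty vote is trapped by every $C'$; and a charging argument—assign each trapped non-empty vote to one of its candidates, which lies in $C'$ and receives at most $\triangle_{\text{C}}$ votes—shows that any $\overline{k}$-set traps at most $\overline{k}\cdot\triangle_{\text{C}}$ non-empty votes. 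Writing $V_0$ for the set of empty votes, this gives: if $\R\le n-\abs{V_0}-\overline{k}\cdot\triangle_{\text{C}}$ the instance is trivially a \yesins\ (every $k$-committee works), and if $\R>n-\abs{V_0}$ it is trivially a \noins\ (no committee can satisfy an empty vote). In the remaining regime the number $s:=(n-\R)-\abs{V_0}$ of non-empty votes we are still allowed to trap is nonnegative and strictly smaller than $\overline{k}\cdot\triangle_{\text{C}}$, hence bounded by the parameter.

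Next I would clean up the instance in an equivalence-preserving way: empty votes are removed and recorded in $\abs{V_0}$; a vote approving more than $\overline{k}$ candidates can never be trapped by a $\overline{k}$-set, so it is always satisfied and can be deleted after decreasing $\R$ accordingly (leaving $n-\R$, and hence $s$, unchanged). The surviving votes $V^*$ all have between $1$ and $\overline{k}$ approvals, and I call a candidate \emph{active} if it appears in some vote of $V^*$. Excluding an inactive candidate traps nothing and changes no vote, so if at least $\overline{k}$ candidates are inactive we may take $C'$ to consist of inactive candidates, trapping no non-empty vote; since we have passed the trivial-\noins\ test ($\R\le n-\abs{V_0}$) this is a \yesins. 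Otherwise the problem reduces to its combinatorial core on the hypergraph $H$ whose vertices are the active candidates and whose hyperedges are the votes of $V^*$: decide whether some $X$ with $\abs{X}\le\overline{k}$ induces at most $s$ hyperedges, where $H$ has maximum degree at most $\triangle_{\text{C}}$ and every hyperedge has size at most $\overline{k}$.

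To solve this core I would combine an extremal argument with kernelization. The key observation is that a vertex set traps no hyperedge precisely when it contains no size-$1$ vote and is independent in the primal graph $G$ of $H$ (two active candidates adjacent iff they co-occur in a vote of size at least $2$). Since each candidate lies in at most $\triangle_{\text{C}}$ votes and each vote contributes at most $\overline{k}-1$ neighbours, $G$ has maximum degree at most $\triangle_{\text{C}}\cdot(\overline{k}-1)$; hence by the greedy independent-set bound, $G$ restricted to the candidates carrying no size-$1$ vote admits an independent set of size $\overline{k}$ whenever there are at least $\overline{k}\cdot(\triangle_{\text{C}}(\overline{k}-1)+1)$ such candidates, producing a trap-free $C'$ and a \yesins. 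If this fails, the pool of candidates usable in a trap-free selection is bounded by a function of $\overline{k}+\triangle_{\text{C}}$; the only remaining source of unboundedness is candidates carrying size-$1$ votes, and an exchange argument shows that among these it never helps to pick one that also participates in a larger vote (doing so traps strictly more than an equally cheap ``pure-cost'' candidate). Bundling these pure-cost candidates and keeping only the $\overline{k}$ cheapest representatives yields an equivalent instance whose number of candidates—and therefore, via the degree bound, number of votes—is bounded by a function of $\overline{k}+\triangle_{\text{C}}$, which can then be solved by the \fpt-algorithm in~$m$ (or in~$n$) or by brute force.

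The main obstacle is precisely this core step: selecting $\overline{k}$ vertices inducing few hyperedges is {\wah} in general (for $s=0$ it is \prob{Independent Set}), so the whole argument hinges essentially on the bounded maximum degree $\triangle_{\text{C}}$ together with the bounded hyperedge size $\overline{k}$ and trap budget $s$. The delicate part is reconciling the candidates that carry size-$1$ votes—which block trap-freeness—with the global budget $s$: one must argue rigorously that, once the extremal bound fails, only a parameter-bounded set of candidates is relevant, so that the minimum number of trapped votes over all $\overline{k}$-sets can be computed exactly within \fpt\ time.
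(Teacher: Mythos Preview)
Your complementary reformulation and the preprocessing steps (removing empty votes, removing votes of size $>\overline{k}$, handling inactive candidates, and the bound $s<\overline{k}\cdot\triangle_{\text{C}}$ on the trap budget) are all correct and correspond to what the paper does. The divergence is in the core: the paper does \emph{not} kernelize but runs a branch-and-bound. It picks a candidate $c^{\star}\in B$ minimizing $|V^1(c^{\star})|$, sets $A=\bigcup_{v\in V(c^{\star})\cap U}v$, shows that some optimal $k$-committee excludes at least one element of $A$, and branches on which $x\in A$ goes into $C'$. Since $|A|\le\overline{k}\cdot\triangle_{\text{C}}$ and the depth is $\overline{k}$, this gives $\bigos{(\overline{k}\cdot\triangle_{\text{C}})^{\overline{k}}}$.

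Your kernelization attempt, by contrast, has a genuine gap at exactly the point you yourself flag as delicate. The exchange argument is supposed to show that an ``impure'' candidate (one carrying a size-$1$ vote and also lying in some larger vote) can always be swapped in $C'$ for an ``equally cheap pure-cost'' candidate. Two problems. First, ``strictly more'' is false: if none of the larger votes through the impure candidate is fully contained in $C'$, the swap changes nothing, so you only get a weak inequality. Second, and fatally, an equally cheap pure candidate need not exist at all. Take candidates $c_1,\dots,c_N$ with votes $\{c_i\}$ for every $i$ and $\{c_{2j-1},c_{2j}\}$ for every $j\le N/2$; here $\triangle_{\text{C}}=2$, every candidate carries exactly one size-$1$ vote, every candidate lies in a size-$2$ vote, so every candidate is impure and there are no pure candidates and no candidates free of size-$1$ votes. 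Your kernel retains nothing to build $C'$ from, yet for $\overline{k}=2$ the instance is a \yesins\ when $s\ge 2$ (take $C'=\{c_1,c_3\}$) and a \noins\ when $s\le 1$. So the reduction is not equivalence-preserving: the impure candidates cannot be eliminated from the search space, and their number is unbounded in $\overline{k}+\triangle_{\text{C}}$.
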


\begin{proof}
We prove Theorem~\ref{thm-ccav-wd-fpt-bar-k-c} by deriving a branch-and-bound {\fpt}-algorithm with respect to $\overline{k}+\triangle_{\text{C}}$.
Let $I=((C, V), k, \R)$ be an instance of {\wind{CCAV}}.
Throughout the algorithm, let $\overline{k}=\abs{C}-k$.

We first remove from~$C$ all candidates not approved by any votes, and remove from~$V$ all empty votes. Then, we assume that neither of~$C$ and~$V$ is empty, since otherwise the instance can be solved trivially.
Observe now that if a vote~$v\in V$ approves more than~$\overline{k}$ candidates, any~$k$-committee intersects~$v$.
In light of this observation, we need only to focus on votes approving at most~$\overline{k}$ candidates.
Let~$U\subseteq V$ be the submultiset of votes from~$V$ approving at most~$\overline{k}$ candidates. Let $B=\cup_{v\in U}v$ be the set of candidates approved by votes from~$U$. If $U=\emptyset$ or $\abs{B}\leq k$, any $k$-committee containing~$B$ satisfies all votes. In this case, we conclude that~$I$ is a {\yesins} if and only if $\R\leq \abs{V}$. Assume now that~$\abs{B}\geq k+1$.
For each candidate $c\in C$, let~$V^1(c)$ be the multiset of votes in~$U$ that approve~$c$ only, i.e., $V^1(c)=\{v\in U \setmid v=\{c\}\}$.
Let $c^{\star}\in B$ be such that  $\abs{V^1(c^{\star})} \leq \abs{V^1(c)}$
for all $c\in B$. Let $A=\bigcup_{v\in V(c)\cap U} v$. A significant observation is that there is an optimal~$k$-committee which does not contain~$A$. (A proof for the observation is given later.)
In line with this observation, we solve the instance by branching on which candidate from~$A$ is not contained in a certain optimal~$k$-committee.
In particular, we create~$\abs{A}$ branching cases, one for each $x\in A$. In the branching case for $x\in A$, we reset $C\coloneqq C\setminus \{x\}$ (which implies that~$\overline{k}$ is decreased by one), $\R \coloneqq \R-\abs{V\setminus U}$, and $V \coloneqq U$, and solve the subinstance iteratively by the above procedure.
We use two pruning criteria to determine when to terminate the branching: $\R\leq 0$ or $\overline{k}=0$. When we arrive at a branching node where $\R\leq 0$ and $\overline{k}\geq 0$, it holds clearly that the given instance~$I$ is a {\yesins}, and thus in this case we
 terminate the whole algorithm by returning~``\yes''. When we arrive at a branching node where $\overline{k}=0$ and $\R>0$, we determine that~$I$ is a {\yesins} if $\textsf{CCAV}(V, C)\geq \R$, and discard the corresponding branching case otherwise.

Now we prove the correctness of the above mentioned observation.
To this end, assume that~$\ww$ is an optimal~$k$-committee such that $A\subseteq \ww$. Then, we can obtain another optimal~$k$-committee from~$\ww$ by replacing~$c^{\star}$ (notice that $c^{\star}\in A$) with any
arbitrary candidate from~$B\setminus \ww$ (as $\abs{B}\geq k+1$ such a candidate exists). The reason is that every vote from $(V(c^{\star})\cap U)\setminus V^1(c^{\star})$ approves at least one candidate from $\ww\setminus \{c^{\star}\}$ (because $A\subseteq \ww$). Hence, removing~$c^{\star}$ from~$\ww$ may only affect the existence of representatives of votes from~$V^1(c^{\star})$ in the committee. However, as $\abs{V^1(c^{\star})} \leq \abs{V^1(c)}$ for all $c\in B$, adding any arbitrary candidate from~$B\setminus \ww$ into~$\ww$ makes all votes from~$V^1(c^{\star})$ which do not have any representatives before have representatives now.

Regarding the running time of the algorithm, note that $\abs{A}\leq \triangle_{\text{C}} \cdot \overline{k}$.
Hence, each branching node
has at most $\triangle_{\text{C}} \cdot \overline{k}$ children, implying that the branching algorithm has running time $\bigos{(\triangle_{\text{C}} \cdot \overline{k})^{\overline{k}}}$.
\end{proof}

\subsection{Combining~\texorpdfstring{$\R$}{Lg} and~\texorpdfstring{$\triangle_{\text{V}}$}{Lg}}
It is known that  {\wind{MAV}} and {\wind{CCAV}} are both {\fpt} with respect to the parameter~$\R$.
However, it is unknown whether {\wind{PAV}} is {\fpt} with respect to~$\R$.
Even though we are unable to resolve this open question, we provide an {\fpt}-algorithm when combining~$\R$ and~$\triangle_{\text{V}}$.

\begin{theorem}
\label{thm-pav-fpt-r-v}
{\wind{PAV}} is {\memph\fpt} with respect to $\R+\triangle_{\memph{\text{V}}}$.
\end{theorem}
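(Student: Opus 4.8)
The plan is to give an {\fpt}-algorithm for {\wind{PAV}} parameterized by $\R+\triangle_{\text{V}}$ by bounding the number of \emph{relevant} candidates and then brute-forcing or reducing to an already-established tractable problem. The key observation is that each vote $v$ contributes at most $\sum_{i=1}^{\abs{v}}\frac{1}{i}\leq \sum_{i=1}^{\triangle_{\text{V}}}\frac{1}{i}\leq 1+\ln \triangle_{\text{V}}$ to the PAV score of any committee, because a vote with at most $\triangle_{\text{V}}$ approved candidates can have at most $\triangle_{\text{V}}$ representatives. Consequently, to achieve PAV score at least $\R$, at least roughly $\R/(1+\ln\triangle_{\text{V}})$ votes must be satisfied (nonempty intersection with $\ww$). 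More to the point, I would argue that only a bounded number of candidates can ever be ``useful.'' A candidate $c$ increases the PAV score only through the votes in $V(c)$, and a single satisfied vote contributes at most $1+\ln\triangle_{\text{V}}$; moreover the total achievable score is at most $n\cdot(1+\ln\triangle_{\text{V}})$, so if $\R>n\cdot(1+\ln\triangle_{\text{V}})$ we immediately return ``\no''.

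First I would reduce the instance so that the number of votes is bounded in terms of the parameter. Observe that votes approving a common set of candidates are interchangeable for PAV purposes only up to a point, so instead I would focus on the candidate side. The crucial step: since the PAV score is a sum over satisfied votes and each satisfied vote needs at least one of its approved candidates in $\ww$, and each such candidate lies in at most $\triangle_{\text{V}}$ votes when viewed from the vote's side (this is where I must be careful — $\triangle_{\text{V}}$ bounds a vote's size, not a candidate's occurrences). So the correct counting is: a committee achieving score $\geq\R$ satisfies a set $V'$ of votes, and the candidates in $\ww$ that actually matter are exactly those lying in some vote of $V'$. I would bound the number of votes that need to be satisfied by $\R$ (each contributing a positive amount bounded above), hence $\abs{V'}\leq \R$ without loss, and then bound the number of distinct candidates appearing across these votes by $\R\cdot\triangle_{\text{V}}$, since each vote holds at most $\triangle_{\text{V}}$ candidates.

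The plan then is a color-coding or branching argument over the at most $\R\cdot\triangle_{\text{V}}$ candidates that can appear in satisfied votes. Concretely, I would argue there is an optimal solution whose intersection with the ``useful'' candidate set has size at most $\R\cdot\triangle_{\text{V}}$ (any candidate not in a satisfied vote contributes nothing and can be swapped out or is irrelevant to the score), and then guess this useful part. A cleaner route: iterate over all subsets of the candidate universe is too expensive, so I would instead bound the number of votes to $\bigo{\R}$ by discarding votes that cannot be part of any score-$\R$ certificate, collect the candidate pool $P$ of size $\bigo{\R\cdot\triangle_{\text{V}}}$ approved by these votes, and observe that only candidates in $P$ affect the score. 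Then {\awind{PAV}} (or {\wind{PAV}}) restricted to $P$ has both few candidates and few relevant votes, so it is {\fpt} via, e.g., the algorithm of Theorem~\ref{thm-pav-n-fpt} applied to the bounded subelection, or by direct enumeration over the at most $2^{\abs{P}}$ committees on $P$ combined with a greedy completion using irrelevant candidates to reach size $k$.

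The main obstacle I anticipate is making rigorous the claim that the score depends only on a bounded candidate pool while $k$ itself may be large: the committee must have size exactly $k$, which could exceed the pool size, so I must show that ``padding'' $\ww$ with candidates outside $P$ (candidates approved only by already-discarded votes, or by no relevant vote) never decreases the score and can always be done to reach cardinality $k$ as long as $k\leq m$. The subtlety is that a candidate outside $P$ might still be approved by some vote whose other approved candidates are in $P$, so it is \emph{not} score-irrelevant; handling the interaction between padding candidates and partially-satisfied votes is the delicate part. I would resolve it by defining $P$ as the full set of candidates approved by any vote that is satisfiable within the budget, and showing that the optimal committee can be assumed to place all its ``chosen'' mass inside $P$ while padding with truly isolated candidates, thereby keeping the effective search space bounded by a function of $\R+\triangle_{\text{V}}$.
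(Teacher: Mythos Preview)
Your structural intuition is partly sound but the proposal has a genuine gap. First, the direction of your bound on $|V'|$ is backward: the fact that each satisfied vote contributes a value bounded \emph{above} implies you need \emph{many} votes, not few. The correct (and salvageable) observation is that each satisfied vote contributes at least~$1$, so among the satisfied votes there is a subset~$V''$ of size at most~$\lceil \R\rceil$ whose contribution with respect to a fixed optimal~$\ww$ is already~$\geq \R$; hence $\ww\cap\bigl(\bigcup V''\bigr)$ is a set of size at most $\lceil \R\rceil\cdot\triangle_{\text{V}}$ with PAV score~$\geq \R$, and padding to size~$k$ is harmless by monotonicity. So the existence of a small ``core'' $S^{*}$ is true.

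The real gap is the search for~$S^{*}$. You define~$P$ in terms of the unknown~$V''$, so you cannot enumerate~$P$ directly; enumerating~$V''$ over all $\binom{n}{\lceil \R\rceil}$ choices is~\xp, not~\fpt. Your fallback definition of~$P$ as ``all candidates approved by any vote satisfiable within the budget'' is vacuous (every nonempty vote is satisfiable, so $P=C$). Reducing to the case $k\leq \lceil \R\rceil\cdot\triangle_{\text{V}}$ does not help either, since {\wind{PAV}} is {\wah} with respect to $k+\triangle_{\text{V}}$. In short, you have bounded the \emph{size} of a witness but not the \emph{search space}.

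The paper closes exactly this gap with a bounded-depth branching algorithm rather than a kernel-style pool reduction. After a preprocessing step that returns~``\yes'' if some candidate is approved by~$\geq \R$ votes (so afterwards $\abs{V(c)}<\R$ for all~$c$), it grows a partial committee~$S$: pick $c\in C\setminus S$ with maximum marginal contribution $\Delta(S,c)$, set $A=\bigl(\bigcup_{v\in V(c)}v\bigr)\setminus S$, and branch on which $x\in A$ to add. The key lemma (an exchange argument) shows that if any desired $k$-committee extends~$S$, then some desired $k$-committee extends~$S$ by an element of~$A$. Since $\abs{A}\leq \abs{V(c)}\cdot\triangle_{\text{V}}<\R\cdot\triangle_{\text{V}}$ and each added candidate raises the score by at least~$1/\triangle_{\text{V}}$, the branching has width and depth bounded by $\R\cdot\triangle_{\text{V}}$, giving $\bigos{(\R\cdot\triangle_{\text{V}})^{\R\cdot\triangle_{\text{V}}}}$. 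This preprocessing-plus-marginal-branching is the idea your proposal is missing.
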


\begin{proof}
We prove the theorem by giving a branch-and-bound {\fpt}-algorithm.
Let $I=(E, k, \R)$ be an instance of {\wind{PAV}}, where $E=(C, V)$.

First, if~$C$ contains a candidate approved by at least~$\R$ votes in~$V$, any~$k$-committee including
this candidate is a desired committee. So, in this case, we directly conclude that~$I$ is a {\yesins}.

Second, let~$C_0\subseteq C$ be the set of candidates not approved by any votes in~$V$. We remove from the election $(C, V)$ all candidates in~$C_0$, and reset
$k \coloneqq \min \{k, \abs{C}\}$. It is easy to verify that the two instances before and after this step are equivalent.
Afterwards, if $k=\abs{C}$, we conclude that~$I$ is a {\yesins} if and only if $\textsf{PAV}(V, C)\geq \R$.

From now on, let us assume that every candidate is approved by at least one and at most~$\R-1$ votes in~$V$, and $k<\abs{C}$.
For each $S\subsetneq C$ and each $c\in C\setminus S$, we define
\[\Delta(S, c)=\pavscore{V}{S\cup \{c\}}-\pavscore{V}{S},\] as the marginal contribution of~$c$ to the PAV score of the committee~$S\cup \{c\}$.
In our branching tree, each branching node is associated with a subset $S\subsetneq C$ of cardinality at most~$k$, which is supposed to be contained in a desired committee.
The root of the branching tree is associated with $S=\emptyset$.
Suppose we are now at a branching node associated with a subset $S\subsetneq C$. Let $c\in C\setminus S$ be a candidate such that $\Delta(S, c)\geq \Delta(S, c')$ for all $c'\in C\setminus S$. Let $A=(\bigcup_{v\in V(c)}v) \setminus S$ be the set of candidates approved by votes approving~$c$ but not contained in~$S$.
Then, we create~$\abs{A}$ branching cases, one for each $x\in A$. The set associated with the branching case for $x\in A$ is $S\cup \{x\}$.
The correctness of our branching is rooted in the fact (a proof supporting this fact is provided later) that, if the given instance~$I$ is a {\yesins}, there exists at least one desired $k$-committee which includes at least one candidate from~$A$.

We terminate the branching when the branching depth of the current branching node reaches
$\min\{k, \R\cdot \triangle_{\text{V}}\}$. Recall that the branching depth of a branching node is the number of edges on the path from the root to the node in the branching tree. By our branching strategy, the branching depth of a branching node associated with~$S$ is exactly~$\abs{S}$. Therefore, when $k\leq \R\cdot \triangle_{\text{V}}$ and the branching depth of the current node associated with~$S$ reaches~$k$, we have that $\abs{S}=k$. In this case, if~$S$ has PAV score at least~$\R$ with respect to~$V$, we conclude that the given instance~$I$ is a {\yesins}. Consider now the case where
$k>\R\cdot \triangle_{\text{V}}$. Note that $\Delta(S,x)\geq \frac{1}{\triangle_{\text{V}}}$ for each possible~$S$ and~$x$. Therefore, if the current branching node associated with~$S$ has branching depth~$\R\cdot \triangle_{\text{V}}$,~$S$ is a $k'$-committee, $k'\leq k$, of PAV score at least~$\R$. As a result, when $k>\R\cdot \triangle_{\text{V}}$ and the current node has branching depth $\R\cdot \triangle_{\text{V}}$, we directly conclude that~$I$ is a {\yesins}. If none of the branching nodes leads to a ``\yes''-answer, we conclude that~$I$ is a {\noins}.

Note that $\abs{A}\leq \abs{V(c)}\cdot \triangle_{\text{V}}\leq \R\cdot \triangle_{\text{V}}$, where~$A$ is as defined above. Therefore, the running time of the whole algorithm is  bounded by $\bigos{(\R \cdot \triangle_{\text{V}})^{\R \cdot \triangle_{\text{V}}}}$.

To show the correctness, it suffices to prove the following claim (corresponding to the aforementioned fact).

\begin{claim}
\label{claim-a}
Let~$S\subsetneq C$ be a subset of at most $k-1$ candidates, and let $c\in C\setminus S$ be a candidate such that $\Delta(S, c)\geq \Delta(S, c')$ for all $c'\in C\setminus S$. Let $A=(\bigcup_{v\in V(c)}v)\setminus S$. Then, if there is a $k$-committee
$\ww\subsetneq C$ such that ${\emph{\textsf{PAV}}}(V, \ww)\geq \R$, $S\subsetneq \ww$, and $A\cap \ww=\emptyset$,
there exists a $k$-committee $\ww'\subsetneq C$ such that ${\emph{\textsf{PAV}}}(V, \ww')\geq \R$, $S\subseteq (\ww\cap \ww')$, and $A\cap \ww'\neq \emptyset$.
\end{claim}

Now we prove the claim. Let~$S$,~$c$, and~$A$ be as stipulated in the claim. Suppose that there is a $k$-committee
$\ww\subsetneq C$ such that ${\textsf{PAV}}(V, \ww)\geq \R$, $S\subsetneq \ww$, and $A\cap \ww=\emptyset$. Let~$c'$ be any candidate in $\ww\setminus S$, and let $\ww'=\ww\setminus \{c'\}\cup \{c\}$. Obviously, $S\subseteq (\ww\cap \ww')$ and $A\cap \ww'\neq \emptyset$. To complete the proof, it suffices to show that $\textsf{PAV}(V, \ww')\geq \R$.
First, as $\ww\cap A=\emptyset$ and $S\subseteq \ww$, it holds that $\Delta(\ww, c)=\Delta(S, c)$.
Second, as $S\subseteq \ww\setminus \{c'\}$, it holds that $\Delta(\ww\setminus \{c'\}, c')\leq \Delta(S, c')$. Third, it is clear that $\Delta(\ww\setminus \{c'\}, c)\geq \Delta(\ww, c)$. We also reiterate that $\Delta(S, c)\geq \Delta(S, c')$. Putting this all together, we obtain $\Delta(\ww\setminus \{c'\}, c)\geq \Delta(\ww\setminus \{c'\}, c')$ which implies $\textsf{PAV}(V, \ww')\geq {\textsf{PAV}}(V, \ww)\geq \R$. The proof is completed.
\end{proof}

We have derived many {\fpt}-algorithms with respect to different single parameters and their combinations.
Our results reveal that {\wind{MAV}} and {\wind{CCAV}} are {\fpt} with respect to any combinations of three single parameters studied so far.
Unfortunately, we have only a few {\fpt}-results for {\wind{PAV}} even for combinations of two single parameters.
We finish this section by remarking that {\wind{PAV}} is {\fpt} with respect to $k+\triangle_{\text{V}}+\triangle_{\text{C}}$: if $\R> {k\cdot \triangle_{\text{C}}}$, return ``\no''; otherwise, solve the instance in {\fpt}-time via Theorem~\ref{thm-pav-fpt-r-v}.

\begin{corollary}
\label{thm-pav-fpt-k-c-v}
{\wind{PAV}} is {\memph{\fpt}} with respect to $k+\triangle_{\memph{\text{V}}}+\triangle_{\emph{\text{C}}}$.
\end{corollary}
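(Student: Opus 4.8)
The plan is to reduce the problem, via a simple case distinction, to the fixed-parameter tractability with respect to $\R+\triangle_{\text{V}}$ already established in Theorem~\ref{thm-pav-fpt-r-v}. The essential insight is that, once $\triangle_{\text{C}}$ is also under control, the threshold $\R$ cannot be meaningfully larger than a function of $k$ and $\triangle_{\text{C}}$: either $\R$ already exceeds a natural upper bound on the best achievable PAV score, so that the instance is trivially a \noins, or $\R$ is small enough to be treated as a parameter.

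First I would derive an upper bound on the PAV score of an arbitrary $k$-committee $\ww$. Writing $H_j=\sum_{i=1}^{j}\frac{1}{i}$ for the $j$-th harmonic number and observing that every summand is at most $1$, we have $H_j\le j$ for every nonnegative integer $j$. Consequently
\[
\pavscore{V}{\ww}=\sum_{v\in V}H_{\abs{v\cap \ww}}\le \sum_{v\in V}\abs{v\cap \ww}=\sum_{c\in \ww}\abs{V(c)}\le k\cdot \triangle_{\text{C}},
\]
where the middle equality is the standard double counting of candidate–vote incidences, and the final inequality uses $\abs{\ww}=k$ together with $\abs{V(c)}\le \triangle_{\text{C}}$ for every candidate $c$.

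With this bound at hand, the algorithm is immediate. If $\R>k\cdot \triangle_{\text{C}}$, then no $k$-committee can attain PAV score at least $\R$, so we safely output \no. Otherwise $\R\le k\cdot \triangle_{\text{C}}$, so $\R$ is bounded by a function of $k+\triangle_{\text{C}}$; we then invoke the algorithm of Theorem~\ref{thm-pav-fpt-r-v}, whose running time $\bigos{(\R\cdot \triangle_{\text{V}})^{\R\cdot \triangle_{\text{V}}}}$ becomes $\bigos{(k\cdot \triangle_{\text{C}}\cdot \triangle_{\text{V}})^{k\cdot \triangle_{\text{C}}\cdot \triangle_{\text{V}}}}$, which is {\fpt} with respect to $k+\triangle_{\text{V}}+\triangle_{\text{C}}$.

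The argument is elementary, so there is no genuine obstacle; the only point needing a moment's care is the score bound, specifically the observation that replacing the harmonic contribution $H_{\abs{v\cap \ww}}$ of each vote by its raw count $\abs{v\cap \ww}$ yields a valid upper bound, which is precisely what allows $k\cdot \triangle_{\text{C}}$ to dominate every achievable score. Everything else is routine bookkeeping and a direct appeal to Theorem~\ref{thm-pav-fpt-r-v}.
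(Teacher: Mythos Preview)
Your proof is correct and follows exactly the same approach as the paper: bound the PAV score of any $k$-committee by $k\cdot\triangle_{\text{C}}$, reject immediately when $\R$ exceeds this bound, and otherwise invoke Theorem~\ref{thm-pav-fpt-r-v}. You supply more detail on the score bound than the paper's one-line remark, but the argument is identical.
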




\section{Structural Parameters}
\label{sec-structural-parameter}
In this section, we study two structural parameters of incidence graphs of elections.
For an election~$E$, let~$G_E$ denote the incidence graph of~$E$.

\subsection{Treewidth}
Treewidth is a widely-studied notion to measure the closeness of a graph to a tree~\cite{DBLP:journals/jal/RobertsonS86}. It has been shown that a great deal of graphs stemming from innumerable combinatorial problems in a variety of areas have bounded treewidth (see, e.g.,~\cite{DBLP:journals/dam/KrauseLS20,DBLP:journals/almob/MarchandPB22,DBLP:journals/iandc/Thorup98,YamaguchiAM2003}). Additionally, it has been scrutinized  that even when treewidth is large, using tree decompositions could also be helpful for designing algorithms~\cite{DBLP:conf/icdt/ManiuSJ19}.
From a theoretical point of view, a myriad of {\nph} problems are known to be {\fpt} with respect to the parameter treewidth~\cite{DBLP:journals/tcs/CourcelleM93}. With regard to the applicability of treewidth in multiwinner voting, we point out that the treewidth of the incidence graph of an election is no greater than the number~$m$ of candidates and the number~$n$ of voters in the election, implying that any {\fpt}-algorithm for {\wind{$\tau$}} with respect to the treewidth also runs in {\fpt}-time in~$m$ and~$n$.

A {\it{tree decomposition}} of a graph $G=(N, A)$ is a tuple $(T,\mathcal{B})$, where~$T=(L, F)$ is a rooted tree with vertex set~$L$ and edge set~$F$,
and $\mathcal{B}=\{B_x \subseteq N \setmid x\in L\}$ is a
collection of subsets of vertices of~$G$ such that the following three conditions are fulfilled simultaneously:
\begin{itemize}
\item For each vertex $v\in N$ in~$G$ there exists at least one $B_x\in \mathcal{B}$ such that $v\in B_x$, i.e., every vertex of~$G$ is in at least one element of~$\mathcal{B}$.
\item For each edge $\edge{v}{u}\in A$ in~$G$ there exists at least one $B_x\in \mathcal{B}$ such that $v,u\in B_x$, i.e.,
every edge of~$G$ is contained in at least one element of~$\mathcal{B}$.
\item If a vertex $v\in N$ is in $B_x, B_y\in \mathcal{B}$, then~$v$ is in every $B_z\in \mathcal{B}$ where~$z$ is a vertex on the unique path between~$x$ and~$y$ in~$T$.
\end{itemize}
The {\it{width}} of the tree decomposition is defined as $\max_{B\in \mathcal{B}}{\abs{B}-1}$.
The {\it{treewidth}} of a graph~$G$, denoted~$\omega(G)$, is the minimum possible width of tree decompositions of~$G$.
Elements  of~$\mathcal{B}$ are called {\it{bags}}. To avoid confusion, in the following we call vertices of~$T$ {\it{nodes}}.

A more refined notion commonly used in designing {\fpt}-algorithms is the so-called nice tree decomposition~\cite{DBLP:conf/icalp/BodlaenderK91}.
In particular, a {\it{nice tree decomposition}} $(T, \mathcal{B})$ of a graph~$G$ is a tree decomposition of~$G$ which further satisfies the following conditions simultaneously:
\begin{itemize}
\item Every bag $B\in \mathcal{B}$ associated with the root or a leaf of~$T$ is empty, i.e., $B_x=\emptyset$ if~$x$ is the root or a leaf of~$T$.
\item Nonleaf nodes of~$T$ are categorized into {\it{introduce nodes, forget nodes}}, and {\it{join nodes}} such that:
\begin{itemize}
\item Each introduce node~$x$ has exactly one child~$y$ such that $B_y\subsetneq B_x$ and $\abs{B_x\setminus B_y}=1$, i.e.,~$B_x$ has exactly one more element than~$B_y$.
\item Each forget node~$x$ has exactly one child~$y$ such that $B_x\subsetneq B_y$ and $\abs{B_y\setminus B_x}=1$, i.e.,~$B_x$ is obtained from~$B_y$ by removing one element.
\item Each join node~$x$ has exactly two children~$y$ and~$z$ such that $B_x=B_y=B_z$.
\end{itemize}
\end{itemize}

It is easy to see from the definition that in a nice tree decomposition, each vertex can be introduced multiple times but can be only forgotten once. Moreover, when a vertex is introduced in a bag~$B_x$, all of its neighbors contained in bags associated with nodes in the subtree rooted at~$x$ are contained in~$B_x$.

\begin{lemma}[\cite{DBLP:books/sp/Kloks94}]
\label{lem-nice-tree-decomposition}
Let~$G$ be a graph of~$p$ vertices. Then, given a tree decomposition of~$G$ of width~$\omega$, a nice tree decomposition of~$G$ of width~$\omega$ having~$\bigo{p\cdot \omega}$ nodes can be computed in polynomial time.
\end{lemma}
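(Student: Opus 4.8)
The plan is to prove Lemma~\ref{lem-nice-tree-decomposition} constructively, transforming the given tree decomposition $(T, \mathcal{B})$ of width~$\omega$ into a nice one through a short sequence of local surgeries, each of which preserves the three defining conditions of a tree decomposition and never increases the width. Before doing anything structural, I would first shrink the tree so that its number of nodes is linear in~$p$: as long as~$T$ contains an edge $\{x, y\}$ with $B_x \subseteq B_y$, I would contract it, keeping the bag~$B_y$ on the merged node. This operation clearly keeps all three conditions intact and cannot raise the width, and upon termination no bag is contained in an adjacent bag; a standard counting argument then shows that the resulting decomposition has $\bigo{p}$ nodes. I would then root the tree at an arbitrary node.

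Next I would normalise the branching structure and the transitions. First, I replace every node~$x$ with $d > 2$ children by a small binary tree of auxiliary nodes all carrying the bag~$B_x$; the internal nodes of this binary tree become join nodes (two children, identical bags), and this blow-up multiplies the node count only by a constant factor, so it stays $\bigo{p}$. Having made the tree binary with every join node already satisfying $B_x = B_y = B_z$, the only remaining defects are single-child edges $\{x, y\}$ (with~$x$ the parent) where $B_x \neq B_y$ --- including the edges joining the bottom of each auxiliary binary tree to the corresponding original child --- together with the requirement that the root and the leaves carry empty bags. For each such edge I would subdivide it into a chain that first forgets the vertices of $B_y \setminus B_x$ one at a time (forget nodes) and then introduces the vertices of $B_x \setminus B_y$ one at a time (introduce nodes); since each bag has at most $\omega + 1$ elements, both symmetric-difference parts have size at most $\omega + 1$ and this chain uses $\bigo{\omega}$ nodes. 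Finally, I attach above the root a chain of forget nodes that strips its bag down to~$\emptyset$, and below each leaf a chain of introduce nodes that builds its bag up from~$\emptyset$, again $\bigo{\omega}$ nodes apiece.

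For the running time and the size bound, I would observe that every step is polynomial and that after the reduction the tree has $\bigo{p}$ edges; since each edge (and each leaf and the root) is expanded into a chain of $\bigo{\omega}$ nodes, the total is $\bigo{p \cdot \omega}$ nodes, as required. What I expect to be the main obstacle is not any single surgery in isolation but verifying, uniformly across all of them, that the coherence condition --- the requirement that the nodes whose bags contain a fixed vertex~$v$ induce a connected subtree of~$T$ --- is preserved. The two delicate cases are the node duplication for join nodes, where I must argue that copying~$B_x$ onto all auxiliary nodes keeps the occurrences of every $v \in B_x$ connected, and the forget-then-introduce chains, where I must check that no vertex leaves and later re-enters the bags along a chain. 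The latter follows from the original coherence: a vertex introduced in the chain lies in~$B_x$ but not in~$B_y$, hence by coherence of $(T, \mathcal{B})$ it occurs in no bag of the subtree below~$y$, so introducing it on the way up from~$y$ to~$x$ creates no gap. Getting the order of forgets and introduces right, and checking that each inserted bag still covers the edges and vertices it must, is the part that needs the most care.
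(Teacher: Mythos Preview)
The paper does not supply its own proof of this lemma; it is quoted verbatim from Kloks' monograph and used as a black box in the subsequent dynamic-programming algorithms. There is therefore nothing in the paper to compare your proposal against. That said, your sketch is the standard construction one finds in Kloks and in most textbook treatments, and it is essentially correct.

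One small point worth tightening: your binary-replacement step only treats nodes with $d>2$ children, and you then assert that ``every join node already satisf[ies] $B_x=B_y=B_z$''. But an \emph{original} node that happened to have exactly two children need not satisfy this; after your replacement it is still a two-child node whose children may carry different bags. The usual remedy is to handle $d\geq 2$ uniformly (or, equivalently, for any two-child node~$x$ whose children's bags differ from~$B_x$, insert a fresh copy of~$B_x$ between~$x$ and each child, turning~$x$ into a genuine join node and pushing the mismatch down to single-child edges that your forget/introduce chains then resolve). This costs only a constant factor in the node count and is the only place where your write-up leaves a visible seam.
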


It has long been known that calculating treewidth is {\nph} even for bipartite graphs~\cite{DBLP:journals/actaC/Bodlaender93}.
However, determining whether the treewidth of a graph is at most~$\omega$ is {\fpt} with respect to~$\omega$ and, moreover, powerful heuristic
and approximation algorithms for calculating treewidth have been perpetually reported~\cite{DBLP:conf/birthday/Bodlaender12,DBLP:journals/siamcomp/BodlaenderDDFLP16,DBLP:phd/basesearch/Zanden19}.
Besides, treewidth of chordal  bipartite graphs can be computed in polynomial time\footnote{A chordal bipartite graph is a bipartite graph without induced cycles of length at least six.}, and the gap between chordal bipartite  graphs and general bipartite graphs is quite small~\cite{DBLP:journals/jal/KloksK95}.
It is also well-known that every graph of~$p$ vertices admits an optimal tree decomposition of~$\bigo{p}$ nodes.
Then, by Lemma~\ref{lem-nice-tree-decomposition}, when we study {\fpt}-algorithms with respect to treewidth or a combined parameter involving treewidth, it does not lose any generality to assume that a nice tree decomposition is given.

\begin{theorem}
\label{thm-cav-fpt-treewidth}
{\wind{CCAV}} can be solved in time~$\bigos{4^{\omega}}$ if a nice tree decomposition of width~$\omega$ and of~$\emph{\textsf{poly}}(p)$ nodes of the incidence graph of the input election is given, where~$p$ is the number of vertices of the incidence graph.
\end{theorem}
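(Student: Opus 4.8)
The plan is to design a bottom-up dynamic-programming algorithm over the given nice tree decomposition $(T,\mathcal{B})$ of the incidence graph $G_E$ of $E=(C,V)$, exploiting that $G_E$ has vertex set $C\cup V$ and that a candidate $c$ is adjacent to a vote $v$ exactly when $v$ approves $c$. For a node $x$ with bag $B_x$, I would define a \emph{coloring} $\phi$ that assigns to each candidate in $B_x$ a label in $\{\mathrm{in},\mathrm{out}\}$ (whether it is placed in the committee) and to each vote in $B_x$ a label in $\{\mathrm{cov},\mathrm{unc}\}$ (whether it already has an approved candidate selected somewhere in the subtree rooted at $x$). The table entry $T[x][\phi][j]$ would store the maximum number of \emph{already-forgotten} votes in the subtree rooted at $x$ that are covered, taken over all partial committees consistent with $\phi$ that select exactly $j$ candidates among the vertices appearing in that subtree, with $-\infty$ marking infeasibility. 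Since the root $r$ has empty bag, the instance is a {\yesins} if and only if $T[r][\emptyset][k]\ge \R$. The polynomial counter $j\in\{0,\dots,k\}$ is what lets me enforce that the committee has size exactly $k$, and it contributes only a polynomial factor swallowed by $\bigos{\cdot}$.

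Next I would specify the transitions for the five node types. The crucial correctness observation, used throughout, is the standard connectivity property of tree decompositions: every candidate neighbor of a vote $v$ shares a bag with $v$ somewhere in the subtree \emph{before} $v$ is forgotten, so the coverage status of $v$ is fully determined by the time $v$ leaves the bag. Concretely: when introducing a candidate $c$, I branch on $\phi(c)\in\{\mathrm{in},\mathrm{out}\}$, incrementing $j$ in the $\mathrm{in}$ case and forcing every vote $v\in B_x$ approving $c$ to be $\mathrm{cov}$ (colorings violating this are $-\infty$); the stored value is inherited unchanged. Forgetting a candidate takes the maximum over its two possible labels. When introducing a vote $v$, its coverage is \emph{forced} to be $\mathrm{cov}$ iff some already-present candidate neighbor is labeled $\mathrm{in}$, and inconsistent colorings are discarded. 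When forgetting a vote $v$, I combine the two child labels, adding $1$ to the value in the $\mathrm{cov}$ branch since $v$ now joins the pool of forgotten covered votes, i.e.\ $T[x][\phi][j]=\max\bigl(T[y][\phi\cup\{v\mapsto\mathrm{unc}\}][j],\,T[y][\phi\cup\{v\mapsto\mathrm{cov}\}][j]+1\bigr)$.

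The join node is the heart of the argument and the source of the $4^{\omega}$ factor. For a join $x$ with children $y,z$ and $B_x=B_y=B_z$, a candidate must carry the same label in all three, a vote is $\mathrm{cov}$ in $x$ iff it is $\mathrm{cov}$ in $y$ \emph{or} in $z$, and the counter must satisfy $j=j_y+j_z-s$, where $s$ is the number of $\mathrm{in}$-candidates in $B_x$ (subtracted to avoid double-counting bag candidates); the value is $T[y][\phi_y][j_y]+T[z][\phi_z][j_z]$, which is sound because the forgotten votes of the two subtrees are disjoint. Thus
\[
T[x][\phi][j]=\max\bigl(T[y][\phi_y][j_y]+T[z][\phi_z][j_z]\bigr),
\]
the maximum ranging over all $(\phi_y,\phi_z)$ consistent with $\phi$ and all splits $j_y+j_z-s=j$. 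For a fixed $\phi$ the number of consistent pairs $(\phi_y,\phi_z)$ is $3^{t'}$, where $t'$ is the number of $\mathrm{cov}$-votes, since each such vote admits the three origins $(\mathrm{cov},\mathrm{unc})$, $(\mathrm{unc},\mathrm{cov})$, $(\mathrm{cov},\mathrm{cov})$. Summing over all colorings gives $\sum_{\phi}3^{t'}=2^{r}\cdot(1+3)^{t}=2^{r}4^{t}\le 4^{\omega+1}$, where $r$ and $t$ are the numbers of candidates and votes in the bag and $r+t\le\omega+1$; together with the polynomial counter-split factor this yields $\bigos{4^{\omega}}$ per join node. All other node types cost $\bigos{2^{\omega}}$, and since the decomposition has $\mathrm{poly}(p)$ nodes, the overall running time is $\bigos{4^{\omega}}$.

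I expect the main obstacle to be pinning down the coverage bookkeeping so that the ``OR'' semantics at join nodes is realized without double-counting a vote: this hinges on the connectivity property above to guarantee that coverage is settled before a vote is forgotten, and on the convention that covered votes are tallied only at their (unique) forget step. A secondary, routine-but-delicate point is getting the counter arithmetic $j=j_y+j_z-s$ right at joins; once these are in place, correctness follows by a standard induction over $T$, and the running-time count established above delivers the claimed $\bigos{4^{\omega}}$ bound.
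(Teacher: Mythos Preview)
Your approach is essentially the paper's dynamic program: the paper indexes by $(C',V',k')$ with value equal to the CCAV score, while you index by a $2$-coloring and store only the count of \emph{forgotten} covered votes---both conventions are fine, and your join analysis (the $\sum_\phi 3^{t'}=2^r4^t$ count) is in fact the clean way to get $\bigos{4^{\omega}}$.

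There is, however, a real gap in your introduce-candidate transition. Suppose $c$ is introduced with $\phi(c)=\mathrm{in}$ and some $v\in B_x$ approves $c$. You correctly require $\phi(v)=\mathrm{cov}$ at $x$, but then you write that ``the stored value is inherited unchanged'', i.e.\ you look up the child at $\phi|_{B_y}$, which forces $\phi_y(v)=\mathrm{cov}$. That is wrong whenever $c$ is the \emph{first} selected candidate covering $v$: at the child $y$ the candidate $c$ is absent, so under your own semantics the correct child state has $\phi_y(v)=\mathrm{unc}$, and the entry you read will be $-\infty$ (or will encode a different partial committee). The correct recurrence must maximize over all child colorings that agree with $\phi$ on $B_y$ except that each neighbor $v$ of $c$ in the bag may carry either label; this is exactly the role of the $\max_{U\subseteq V_x(h)}$ in the paper's recurrence for the case $h\in C'$.

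The fix is easy and does not threaten the running time: process an introduce-candidate node in $\bigos{2^{\omega}}$ total by iterating over child entries and pushing each forward to the unique parent coloring obtained by flipping the affected votes to $\mathrm{cov}$ (or, as the paper does, spend $\bigos{2^{\omega}}$ per entry and still stay within $\bigos{4^{\omega}}$ per node, since the join nodes dominate anyway). Once you patch this, your argument goes through and matches the paper.
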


\begin{proof}
Let $(E, k,\R)$ be an instance of {\wind{CCAV}} where $E=(C, V)$ and $k\leq \abs{C}$. In addition, let~$(T, \mathcal{B})$ be a nice tree decomposition of~$G_E$ of width~$\omega$ that has~$\textsf{poly}(p)$ nodes, where $p=\abs{C}+\abs{V}$.
We design a dynamic programming algorithm running in time~$\bigos{4^{\omega}}$ as follows.

For each bag $B_x\in \mathcal{B}$ associated with a node~$x$ in the tree~$T$, let~$C(B_x)$ and~$V(B_x)$ be the set of candidate-vertices and the set of vote-vertices contained in~$B_x$, respectively. Moreover, let~$C(T_x)$ be the set
of candidates in bags associated with nodes in the subtree rooted at~$x$. Obviously, $C(B_x)\subseteq C(T_x)$.
We maintain for each node~$x$ in~$T$ a $3$-dimensional table $D_x(C', V', k')$, where~$C'$ is a subset of~$C(B_x)$ such that $\abs{C'}\leq k$,~$V'$ is a submultiset of~$V(B_x)$, and~$k'$ is an integer such that $\abs{C'}\leq k'\leq \min\{k, \abs{C(T_x)}\}$.
We say that a $k'$-committee~$\ww$ is a valid committee for the entry $D_x(C', V', k')$ if all the following conditions hold simultaneously:
\renewcommand\labelenumi{(\theenumi)}
\begin{enumerate}
\item $\ww\subseteq C(T_{x})$;
\item $C'=C(B_x)\cap \ww$; and
\item a vote in~$V(B_x)$ has a representative in~$\ww$ if and only if this vote is in~$V'$,
i.e., for every $v\in V(B_x)$ it holds that $v\cap \ww\neq \emptyset$ if and only if $v\in V'$.
\end{enumerate}
The value of the entry $D_x(C', V', k')$ is the CCAV score of an optimal valid~$k'$-committee for the entry.
If~$D_x(C', V', k')$ admits no valid $k'$-committees, we define $D_x(C', V', k')=-\infty$.
Clearly, each table associated with a bag has at most $2^{\omega+1}\cdot (k+1)$ entries.

The algorithm updates the tables from those associated with the leaves up to the one associated with the root of~$T$. The values of entries associated with leaves are all~$0$
(recall that each leaf bag is empty). We update the entry $D_x(C', V', k')$ associated with a nonleaf node~$x$ in~$T$ as follows.

First, if some candidate from~$C'$ is approved by some vote from $V(B_x)\setminus V'$, we set $D_x(C', V', k')=-\infty$. Otherwise, we consider the following cases with respect to the types of~$x$.

\begin{itemize}
\item $x$ is a join node

 Let~$y$ and~$z$ be the two children of~$x$. We have $B_x=B_y=B_z$.
In this case, we let
    \[D_x(C', V', k')=
    \max_{\substack{k'_1+k'_2=k'-\abs{C'}\\ 0\leq k'_1\leq \abs{C(T_y)\setminus C'}\\  0\leq k'_2\leq \abs{C(T_z)\setminus C'}\\ V_1', V_2'\subseteq V', V_1'\cup V_2'=V'}}
    \left\{D_y\left(C', V_1', k'_1+\left|C'\right|\right)+D_z\left(C', V_2', k'_2+\left|C'\right|\right)-\left|V_1'\cap V_2'\right|\right\}.\]
It is obvious that there are~$\bigos{2^{\omega}}$ different combinations of~$k_1'$,~$k_2'$,~$V_1'$, and~$V_2'$ to consider in the max function. As a result, $D_x(C', V', k')$ can be computed in~$\bigos{2^{\omega}}$ time.

\item $x$ is an introduce node

 Let~$y$ be the child of~$x$, and let $\{h\}=B_x\setminus B_y$. We further distinguish between two subcases.

 \begin{itemize}
     \item $h$ is a vote

     If $h\not\in V'$, then due to the above discussion~$h$ does not approve any candidates from~$C'$. We set $D_x(C', V', k')=D_y(C', V', k')$. If $h\in V'$, then we set $D_x(C', V', k')=D_y(C', V'\setminus \{h\}, k')+1$ if~$h$ approves at least one candidate from~$C'$; and set $D_x(C', V', k')=-\infty$ otherwise.

     \item $h$ is a candidate

     If $h\not\in C'$, we set $D_x(C', V', k')=-\infty$ when $k'=\abs{C(T_x)}$, and set $D_x(C', V', k')=D_y(C', V', k')$ when $k'<\abs{C(T_x)}$.

     If $h\in C'$, let~$V_x(h)$ be the submultiset of votes in~$V'$ approving~$h$.
Then, we set
    \[D_x\left(C', V', k'\right)=\max_{U\subseteq V_x(h)}\left\{D_y\left(C'\setminus \left\{h\right\}, V'\setminus U, k'-1\right)+\abs{U}\right\}.\]
    Note that in this case if $V_x(h)=\emptyset$, we have that $D_x(C', V', k')=D_y(C'\setminus \{h\}, V', k'-1)$.
    As $\abs{V_x(h)}\leq \omega$, $D_x(C', V', k')$ can be calculated in $\bigos{2^{\omega}}$-time.
 \end{itemize}

\item $x$ is a forget node

 Let~$y$ be the child of~$x$ and let $\{h\}= B_y\setminus B_x$.
We consider the following two subcases.

\begin{itemize}
    \item $h$ is a vote

    We set $D_x\left(C', V', k'\right)=\max\left\{D_y\left(C', V', k'\right), D_y\left(C', V'\cup\left\{h\right\}, k'\right)\right\}$.

    \item $h$ is a candidate

Clearly, $h\not\in C'$. If $\abs{C'}=k'$ or~$h$ is approved by some votes in $V(B_x)\setminus V'$, we set $D_x(C', V', k')=D_y(C', V', k')$;
otherwise, let
\[D_x\left(C', V', k'\right)=\max\left\{D_y\left(C', V', k'\right), D_y\left(C'\cup \left\{h\right\}, V', k'\right)\right\}.\]
\end{itemize}
\end{itemize}

By the definition of the table, $D_r(\emptyset, \emptyset, k)$ is the CCAV score of an optimal valid~$k$-committee  for the entry, where~$r$ is the root of~$T$ (recall that the bag associated with the root is empty). By Conditions~(1)--(3) given above, every $k$-committee of~$C$ is valid for $D_r(\emptyset, \emptyset, k)$. Therefore, after all tables are computed, we conclude that the given instance of {\wind{CCAV}} is a {\yesins} if and only if $D_r(\emptyset, \emptyset, k)\geq \R$.

It remains to analyze the running time of the algorithm.
Each node in the nice tree decomposition is associated with a table of~$\bigos{2^{\omega}}$ entries.
Due to the above procedure, calculating an entry corresponding to a forget node takes polynomial time,
and calculating an entry corresponding to a join or an introduce node takes~$\bigos{2^{\omega}}$ time.
As the nice tree decomposition has polynomially many nodes, the running time of the algorithm is~$\bigos{4^{\omega}}$.
\end{proof}

Similar algorithms for {\wind{PAV}} and {\wind{MAV}} can be derived but with the running
time being bounded by~$\bigos{{(k+1)^{\omega}}}$. The reason is that in these cases we need to maintain more information in order to solve the problem.
For example, for {\wind{PAV}}, we need to maintain for each vote in~$V'$ the number of its approved candidates that are supposed to be in a desired committee, which requests a table of size at least $\bigos{(k+1)^{\omega}}$.

\begin{theorem}
\label{pav-mav-fpt-k-tree-width}
{\wind{PAV}} and {\wind{MAV}} are {\memph\fpt} with respect to the combined parameter $k+\omega$, if a nice tree decomposition of width~$\omega$ having~$\emph{\textsf{poly}}(p)$ nodes of the incidence graph of the input election is given, where~$p$ is the number of vertices of the incidence graph.
\end{theorem}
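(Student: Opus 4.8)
The plan is to adapt the dynamic-programming algorithm of Theorem~\ref{thm-cav-fpt-treewidth} over a nice tree decomposition $(T,\mathcal{B})$ of the incidence graph $G_E$, enriching the information stored at each node so that the nonlinear PAV objective and the $\max$-based MAV objective can be evaluated. The essential change is that the binary ``satisfied/unsatisfied'' record $V'\subseteq V(B_x)$ used for CCAV no longer suffices: for PAV the contribution of a vote $v$ is $f(\abs{v\cap\ww})$ (with $f$ the piecewise linear concave function from the proof of Theorem~\ref{thm-pav-n-fpt}), and for MAV it is $\hamdis{v}{\ww}=\abs{v}+k-2\abs{v\cap\ww}$, both depending on the exact number of approved committee members rather than on whether this number is positive. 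I would therefore index each table entry by a triple $(C',\mathbf{n},k')$, where $C'\subseteq C(B_x)$ records the committee candidates in the bag and $k'$ is the partial committee size as before, while $\mathbf{n}=(n_v)_{v\in V(B_x)}$ assigns to each vote-vertex $v$ in the bag an integer $n_v\in\{0,1,\dots,k\}$ recording $\abs{\ww\cap v\cap C(T_x)}$, the number of $v$'s approved candidates chosen so far in the subtree. Since $\abs{C(B_x)}+\abs{V(B_x)}\le\omega+1$, each table has at most $2^{\omega+1}\cdot(k+1)^{\omega+1}$ entries.

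The value stored differs by rule, and the contribution of a vote is finalized only when the vote is forgotten. The structural fact I rely on is that, in a tree decomposition, once a vote-vertex $v$ is forgotten all candidates it approves have already appeared together with $v$ in a bag of the subtree rooted at the child; hence $n_v$ has reached its final value $\abs{v\cap\ww}$ and never changes again, and no future committee member is approved by $v$. For PAV I would let $D_x(C',\mathbf{n},k')$ be the maximum, over valid $k'$-committees $\ww\subseteq C(T_x)$ consistent with $(C',\mathbf{n})$, of $\sum_{v}f(\abs{v\cap\ww})$ summed over the already-forgotten votes $v\in V(T_x)\setminus V(B_x)$ (and $-\infty$ for infeasible entries). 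For MAV I would instead let it be the minimum possible value of $\max_{v}\hamdis{v}{\ww}$ over the same forgotten votes (and $+\infty$ for infeasible entries). In both cases the answer is read off the root entry $D_r(\emptyset,\emptyset,k)$, by comparing it with $\R$, since the root bag is empty and every vote has then been forgotten.

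The transitions mirror those of Theorem~\ref{thm-cav-fpt-treewidth} but update $\mathbf{n}$ rather than $V'$. Introducing a vote $h$ forces $n_h=\abs{C'\cap h}$, since every committee candidate in $C(T_x)$ approved by $h$ lies in the bag; introducing a committee candidate $h\in C'$ increments $n_v$ by one for every bag vote $v$ with $h\in v$ and decrements $k'$; forgetting a candidate takes the better of keeping it in or out of the committee. The only rule-specific step is the forget-vote node, where the finalized contribution enters: for PAV,
\[D_x(C',\mathbf{n},k')=\max_{0\le n_h\le k}\bigl\{D_y(C',\mathbf{n}',k')+f(n_h)\bigr\},\]
and for MAV,
\[D_x(C',\mathbf{n},k')=\min_{0\le n_h\le k}\max\bigl\{D_y(C',\mathbf{n}',k'),\ \abs{h}+k-2n_h\bigr\},\]
where $\mathbf{n}'$ extends $\mathbf{n}$ by assigning the value $n_h$ to $h$. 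At a join node with children $y$ and $z$ I would combine entries with the same $C'$ by splitting each bag vote's count as $n_v=n_v^{(y)}+n_v^{(z)}-\abs{C'\cap v}$ (the bag candidates are counted in both subtrees) and setting $k'=k'_y+k'_z-\abs{C'}$; because $V(T_y)\setminus V(B_y)$ and $V(T_z)\setminus V(B_z)$ are disjoint, the PAV contributions of the two subtrees add and the MAV maxima combine by $\max$.

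The main obstacle I anticipate is the join node: one must verify that the split of each per-vote count is unambiguous and that no forgotten vote is counted twice, which rests on the disjointness just noted together with the co-occurrence property ensuring that $n_v$ is finalized exactly at forget time. This step also dominates the running time, since enumerating the splits of the up-to-$\omega$ bag votes costs $(k+1)^{O(\omega)}$ per entry; combined with the $(k+1)^{O(\omega)}$ bound on the number of entries and the $\textsf{poly}(p)$ nodes, the whole algorithm runs in $\bigos{(k+1)^{O(\omega)}}$ time, which is {\fpt} with respect to $k+\omega$, as claimed by Lemma~\ref{lem-nice-tree-decomposition}.
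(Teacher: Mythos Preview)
Your proposal is correct and takes essentially the same approach as the paper: a dynamic program over the nice tree decomposition with tables indexed by $(C',k')$ together with a per-vote count function $\mathbf{n}$ (called~$\mu$ in the paper) mapping each bag vote to $\{0,\dots,k\}$. The only cosmetic differences are that the paper accumulates PAV contributions incrementally starting at the introduce-vote step (adding $f(\mu(h))$ when~$h$ is introduced and $1/\mu(v)$ whenever a newly introduced committee candidate is approved by~$v$) rather than finalizing them at forget time, and for MAV stores a binary feasibility bit (testing $n_h\ge (k+\abs{h}-\R)/2$ when~$h$ is forgotten) rather than your running min--max value; both variants give the same $\bigos{(k+1)^{O(\omega)}}$ bound.
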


\begin{proof}
To prove the theorem, we derive {\fpt}-algorithms for {\wind{PAV}} and {\wind{MAV}} with respect to $k+\omega$. Let $I=(E,k,\R)$ be an instance of {\wind{$\tau$}} where $\tau\in \{\text{PAV}, \text{MAV}\}$, $E=(C, V)$, and $k\leq \abs{C}$. In addition, let~$p=\abs{C}+\abs{V}$, and let~$(T, \mathcal{B})$ be a nice tree decomposition of~$G_E$ of width~$\omega$ that has~$\textsf{poly}(p)$ nodes. The algorithms have the same skeleton as the one in the proof of Theorem~\ref{thm-cav-fpt-treewidth} for {\wind{CCAV}}. More concretely, we maintain a table for each node of~$T$, compute the tables in a bottom-up manner, and determine if the given instance~$I$ is a {\yesins} according to the table associated with the root.
To delineate the algorithms, we need the following notations. For an integer~$i$, we define $f(i)=\sum_{j=1}^i\frac{1}{j}$ if $i\geq 1$ and define $f(i)=0$ otherwise. For a function $\varrho: A\rightarrow B$ and an element $a\in A$, $\varrho_{-a}$ denotes~$\varrho$ restricted to the domain $A\setminus \{a\}$, i.e., $\varrho_{-a}: A\setminus \{a\}\rightarrow B$ is a function such that for all $a'\in A\setminus \{a\}$ it holds that $\varrho_{-a}(a')=\varrho(a')$. Additionally, let~$C(B_x)$,~$V(B_x)$, and~$C(T_x)$ be defined as in the proof of Theorem~\ref{thm-cav-fpt-treewidth}. Furthermore, in parallel with~$C(T_x)$, we let~$V(T_x)$ denote the multiset of all votes contained in bags associated with nodes in the subtree rooted at~$x$.
\medskip

{\noindent{\textbf{PAV.}}} For each node~$x$ in~$T$, we maintain a table $D_x(C', k', \mu)$ where $C'\subseteq C(B_x)$, $\abs{C'}\leq k'\leq \min\{k, \abs{C(T_x)}\}$, and~$\mu: V(B_x)\rightarrow [k]\cup \{0\}$ is a function. We note that if~$V(B_x)=\emptyset$,~$\mu$ is an empty function.
We say that a $k'$-committee $\ww\subseteq C(T_x)$ is valid for $D_x(C', k', \mu)$ if Conditions~(1) and~(2) listed in the proof of
Theorem~\ref{thm-cav-fpt-treewidth}, and the following condition are satisfied simultaneously:
\begin{enumerate}
    \item[(4)] $\abs{v\cap \ww}=\mu(v)$ for all $v\in V(B_x)$.
\end{enumerate}
We define $D_x(C', k', \mu)$ as the maximum possible PAV score of $k'$-committees valid for $D_x(C', k', \mu)$ with respect to~$V(T_x)$. More precisely,
\[D_x(C', k', \mu)\coloneqq \max_{\substack{\ww\subseteq C(T_x), \abs{\ww}=k'\\ \ww~\text{is valid for}~D_x(C', k', \mu)}}\textsf{PAV}(V(T_x), \ww),\]
if $D_x(C', k', \mu)$ admits at least one valid $k'$-committee, and $D_x(C', k', \mu)=-\infty$ otherwise.

The tables for the leaves can be computed trivially according to the definition of the tables. We show how to update an entry $D_x(C', k', \mu)$ by distinguishing the types of the node~$x$.

\begin{itemize}\itemsep=5pt
\item $x$ is a join node

 Let~$y$ and~$z$ be the two children of~$x$. We let
\begin{equation*}
\begin{split}
D_x(C', k', \mu)=  \max_{k'_1, k'_2, \mu_1, \mu_2} & \{D_y\left(C', k'_1+\left|C'\right|, {\mu_1'}\right)-g(\mu_1')\\
& +D_z\left(C', k'_2+\abs{C'}, \mu_2'\right)-g(\mu_2')\\
& +g(\mu)\},\\
\end{split}
\end{equation*}
where
\begin{enumerate}
    \item[(a)] $k'_1$ and~$k'_2$ run over all integers so that $k_1'+k_2'=k'-\abs{C'}$, $0\leq k'_1\leq \abs{C(T_y) \setminus C'}$, and $0\leq  k'_2\leq \abs{C(T_z) \setminus C'}$;
    \item[(b)] $\mu_1$ and $\mu_2$ run over all functions from~$V(B_x)$ to nonnegative integers so that for all $v\in V(B_x)$ it holds that $\mu_1(v)+\mu_2(v)=\mu(v)-\abs{v\cap C'}$;
    \item[(c)] for each $i\in [2]$,~$\mu_i'$ is a function from~${V(B_x)}$ to $[k]\cup\{0\}$ so that $\mu_i'(v)=\mu_i(v)+\abs{v\cap C'}$ for all $v\in V(B_x)$; and
    \item[(d)] for each $\rho\in \{\mu, \mu_1', \mu_2'\}$,  $g(\rho)=\sum_{v\in V(B_x)} f(\rho(v))$.
\end{enumerate}
By Condition~(4), the number of different combinations of~$\mu_1$ and~$\mu_2$ in~(b) is bounded from above by~$\bigos{(k+1)^{\omega+1}}\cdot \bigos{(k+1)^{\omega+1}}=\bigos{k^{2\omega}}$. As a result, an entry in this case can be computed in time~$\bigos{k^{2\omega}}$.
(Note that by the above recursion, when $V(B_x)=\emptyset$ we have that $D_x(C', k', \mu)=  \max_{k'_1, k'_2} \{D_y(C', k'_1+\abs{C'}, \mu)+D_z(C', k'_2+\abs{C'}, \mu)\}$.)

\item $x$ is an introduce node

 Let~$y$ be the child of~$x$, and let $\{h\}=B_x\setminus B_y$. We further distinguish between the following two subcases.

\begin{itemize}
    \item $h$ is a vote

    In this case, we set
    \[
    D_x(C', k', \mu)=
    \begin{cases}
        -\infty, & \text{if}~\mu(h)\neq \abs{h\cap C'}\\
        D_y(C', k', \mu_{-h})+f(\mu(h)), & \text{otherwise}\\
    \end{cases}
    \]

    \item $h$ is a candidate

    If $h\not\in C'$, we set $D_x(C', k', \mu)=-\infty$ when $k'=\abs{C(T_x)}$, and set $D_x(C', k', \mu)=D_y(C', k', \mu)$ when $k'<\abs{C(T_x)}$.

    If $h\in C'$, we set $D_x(C', k', \mu)=-\infty$ if there exists $v\in V(B_x)$ such that $h\in v$ and $\mu(v)=0$, and set $D_x(C', k', \mu)=D_y(C'\setminus \{h\}, k'-1, \mu')+\sum_{v\in V(B_x), h\in v}\frac{1}{\mu(v)}$ otherwise,
    where $\mu': V(B_y)\rightarrow [k]\cup \{0\}$ is the function so that for all $v\in V(B_y)$ it holds that $\mu'(v)=\mu(v)-\abs{v\cap \{h\}}$. (If~$V(B_x)=\emptyset$,~$\mu'$ is an empty function.)
\end{itemize}

\item $x$ is a forget node

 Let~$y$ be the child of~$x$ and let $\{h\}= B_y\setminus B_x$.
Similar to the above case we distinguish between the following two subcases.
\begin{itemize}
    \item $h$ is a vote

    We set $D_x(C', k', \mu)=\max_{\substack{\mu': V(B_y)\rightarrow [k]\cup \{0\}\\ \mu'_{-h}=\mu}}D_y(C', k', \mu')$. Obviously, we have at most~${(k+1)^{\omega+1}}$ different functions~$\mu'$ to check, and hence such an entry $D_x(C', k', \mu)$ can be computed in $\bigos{{k}^{\omega}}$ time.

    \item $h$ is a candidate

    In this case, we set
    \[
    D_x(C', k', \mu)=
    \begin{cases}
        D_y(C', k', \mu), & \text{if}~\abs{C'}=k'\\
        \max\{D_y(C', k', \mu), D_y(C'\cup\{h\}, k', \mu)\}, & \text{otherwise}\\
    \end{cases}
    \]
\end{itemize}
\end{itemize}

After the table for the root~$r$ is computed, we conclude that the given instance~$I$ is a {\yesins} if and only if $D_r(\emptyset, k, \mu)\geq \R$, where~$\mu$ is an empty function (recall that the bag associated with the root is empty).
This is because that by the definition of the table, $D_r(\emptyset, k, \mu)$ is the maximum possible PAV score of a  valid $k$-committee for $D_x(\emptyset, k, \mu)$ with respect to~$V(T_r)=V$.
\medskip

{\noindent{\textbf{MAV.}}} For each node~$x$ in~$T$, we maintain a table $D_x(C', k', \mu)$ whose components are defined as with the ones for PAV. The validity of a $k'$-committee for an entry is also defined the same.
However, in this case each entry takes only binary values~$1$ and~$0$. The entry $D_x(C', k', \mu)$ is~$1$ if and only if there is at least one $k'$-committee~$\ww$ which is valid for the entry (i.e., a~$k'$-committee satisfying Conditions~(1)--(2) in the proof of Theorem~\ref{thm-cav-fpt-treewidth}, and Condition~(4) given above) and, moreover, for each $v\in V(T_x)\setminus V(B_x)$, it holds that $\abs{\ww\cap v}\geq \frac{k+\abs{v}-\R}{2}$. Observe that, by the definition of nice tree decomposition, none of the votes in $V(T_x)\setminus V(B_x)$ approves any candidates from $C\setminus C(T_x)$. Therefore, for every $k$-committee $\ww'\subseteq C$ such that $\ww'\cap C(T_x)=\ww$ and every $v\in V(T_x)\setminus V(B_x)$, it holds that $v\cap \ww=v\cap \ww'$, implying that the Hamming distance between~$v$ and~$\ww'$ is at most~$\R$ if and only if $\abs{v\cap \ww}\geq \frac{k+\abs{v}-\R}{2}$. The requirement $\abs{\ww\cap v}\geq \frac{k+\abs{v}-\R}{2}$ ensures that as long as the given instance admits a {\yes}-witness~$\ww'\subseteq C$ such that $\ww'\cap C(T_x)=\ww$, the existence of this {\yes}-witness is safely preserved in the table.

The tables for the leaves can be computed trivially according to their definitions. We show how to update an entry $D_x(C', k', \mu)$ by distinguishing the types of the node~$x$.

\begin{itemize}
\item $x$ is a join node

Let~$y$ and~$z$ be the two children of~$x$. In this case, we set $D_x(C', k', \mu)=1$ if and only if there are two functions $\mu_1, \mu_2: V(B_x)\rightarrow [k]\cup \{0\}$ and two nonnegative integers~$k_1'$ and~$k_2'$ such that
\begin{enumerate}
    \item[(a)] $k'_1+k'_2=k'-\abs{C'}$;
    \item[(b)] for all $v\in V(B_x)$ it holds that $\mu_1(v)+\mu_2(v)=\mu(v)-\abs{v\cap C'}$ and, moreover,
    \item[(c)] $D_y\left(C', k'_1+\abs{C'}, {\mu_1'}\right)=D_z\left(C', k'_2+\abs{C'}, \mu_2'\right)=1$, where for each $i\in [2]$,~$\mu_i'$ is the function from~${V(B_x)}$ to $[k]\cup\{0\}$ so that $\mu_i'(v)=\mu_i(v)+\abs{v\cap C'}$ for all $v\in V(B_x)$.
\end{enumerate}
It is obvious that there are~$\bigos{k^{\bigo{\omega}}}$ different combinations of~$k_1'$,~$k_2'$,~$\mu_1$, and~$\mu_2$ to enumerate. As a result, it takes $\bigos{k^{\bigo{\omega}}}$ time to compute $D_x(C', k', \mu)$.
(Note that when $V(B_x)=\emptyset$, the above recursion indicates that $D_x(C', k', \mu)=1$ if and only if there are~$k_1'$ and~$k_2'$ as above so that $D_y(C', k_1'+\abs{C'},\mu)=D_z(C', k_2'+\abs{C'},\mu)=1$.)

\item $x$ is an introduce node

 Let~$y$ be the child of~$x$, and let $\{h\}=B_x\setminus B_y$. We further distinguish between the following two subcases.

\begin{itemize}
    \item $h$ is a vote

    We set $D_x(C', k', \mu)=1$ if and only if $\mu(h)=\abs{C'\cap h}$ and $D_y(C', k', \mu_{-h})=1$.

    \item $h$ is a candidate

    If $h\not\in C'$, we set $D_x(C', k', \mu)=0$ when $k'=\abs{C(T_x)}$, and set $D_x(C', k', \mu)=D_y(C', k', \mu)$ when $k'<\abs{C(T_x)}$.

    If $h\in C'$, we set $D_x(C', k', \mu)=1$ if and only if
    $D_y(C'\setminus \{h\}, k'-1, \mu')=1$ where $\mu': V(B_y)\rightarrow [k]\cup \{0\}$ is a function so that for all $v\in V(B_y)$ it holds that $\mu'(v)=\mu(v)-\abs{v\cap \{h\}}$. (When $V(B_x)=\emptyset$, $D_x(C', k', \mu)=D_y(C'\setminus \{h\}, k'-1, \mu)$.)
\end{itemize}

\item $x$ is a forget node

 Let~$y$ be the child of~$x$ and let $\{h\}= B_y\setminus B_x$.
\begin{itemize}
    \item $h$ is a vote

    We set $D_x(C', k', \mu)=1$ if and only if
    \begin{enumerate}
        \item[(1)] $\mu(h)\geq \frac{k+\abs{h}-\R}{2}$, and
        \item[(2)] there is a function $\mu': V(B_y)\rightarrow [k]\cup \{0\}$ so that $\mu=\mu'_{-h}$ and $D_y(C', k', \mu')=1$.
    \end{enumerate}
    The first condition is to ensure that there is a $k$-committee which contains a valid $k'$-committee~$\ww$ for the entry and the Hamming distance between~$\ww$ and~$h$ is at most~$\R$.

    \item $h$ is a candidate

    If $\abs{C'}=k'$, we set $D_x(C', k', \mu)=D_y(C', k', \mu)$.
    Otherwise, we set $D_x(C', k', \mu)=1$ if and only if $D_y(C', k', \mu)+D_y(C'\cup\{h\}, k', \mu)\geq 1$.
\end{itemize}
\end{itemize}
After the table for the root~$r$ is computed, we conclude that the given instance~$I$ is a {\yesins} if and only if $D_r(\emptyset, k, \mu)=1$, where~$\mu$ is an empty function. The reason for this is that by the definition of the table, it holds that $D_r(\emptyset, k, \mu)=1$ if and only if there is at least one $k$-committee $\ww\subseteq C(T_r)=C$ which is valid for $D_r(\emptyset, k, \mu)$ and, moreover, for each $v\in V(T_r)\setminus V(B_r)=V$, it holds that $\abs{\ww\cap v}\geq \frac{k+\abs{v}-\R}{2}$.

The running times of the algorithms for {\wind{PAV}} and {\wind{MAV}} are dominated by the total size of all tables which is bounded from above by $\bigos{k^{\bigo{\omega}}}$.
\end{proof}

\subsection{Maximum Matching}
It is well-known that for every bipartite graph~$G$, the size of a maximum matching of~$G$ equals the size of a minimum vertex cover of~$G$~\cite{LPmatchingtheory1986}. As the treewidth of
a graph is bounded from above by  the size of a minimum vertex cover of the graph~\cite{DBLP:journals/algorithmica/FominLMT18}, it follows from Theorem~\ref{thm-cav-fpt-treewidth} that {\wind{CCAV}} is {\fpt} with respect to~$\alpha(G_E)$, where~$E$ is a given election. We show that {\wind{MAV}} and {\wind{PAV}} are also {\fpt} with respect to this parameter.
Observe that the size of a maximum matching of~$G_E$ can be at most~$\min\{m, n\}$, where~$m$ is the number of candidates and~$n$ is the number of votes in the election~$E$. Consequently, every {\fpt}-algorithm with respect to~$\alpha(G_E)$ also runs in {\fpt}-time in~$n$ and~$m$.

\begin{theorem}
\label{thm-mav-pav-matching}
{\wind{MAV}} and {\wind{PAV}} are {\memph\fpt} with respect to~$\alpha(G_E)$, where~$E$ is the election in the input.
\end{theorem}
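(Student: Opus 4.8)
The plan is to exploit the inequality $\omega(G_E)\le \alpha(G_E)$ already used above for {\wind{CCAV}}. The catch is that Theorem~\ref{pav-mav-fpt-k-tree-width} only gives fixed-parameter tractability in the \emph{combined} parameter $k+\omega$, whereas $k$ may be arbitrarily large even when $\alpha(G_E)=1$ (a single vote approving all candidates). I would therefore argue directly from a minimum vertex cover rather than through treewidth. By König's theorem the incidence graph $G_E$ admits a vertex cover $X$ with $\abs{X}=\alpha(G_E)$, computable in polynomial time from a maximum matching. Writing $X_C=X\cap C$, $X_V=X\cap V$, $\bar{C}=C\setminus X_C$, and $\bar{V}=V\setminus X_V$, the key structural consequence of $X$ being a vertex cover is that every candidate in $\bar{C}$ is approved only by votes in $X_V$ (so its approval set is a subset of $X_V$), and every vote in $\bar{V}$ approves only candidates in $X_C$.

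First I would branch over the intersection $\ww\cap X_C$ of a putative winning committee with $X_C$; since $\abs{X_C}\le\alpha$ there are at most $2^{\alpha}$ guesses. Fixing $\ww\cap X_C$ determines the entire contribution of the votes in $\bar{V}$ to the objective, because each such vote approves only candidates in $X_C$: for MAV this fixes their Hamming distances, and for PAV it contributes a constant $\beta$ to the PAV score that I compute once. The only remaining freedom is \emph{how many} candidates of $\bar{C}$ to place in $\ww$, and candidates of $\bar{C}$ approved by exactly the same set $T\subseteq X_V$ of votes are interchangeable. Grouping $\bar{C}$ by these at most $2^{\alpha}$ approval types (using $\abs{X_V}\le\alpha$), I would introduce an integer variable $n_T$ recording how many type-$T$ candidates enter $\ww$, with $0\le n_T\le\abs{\{c\in\bar{C}\setmid V(c)=T\}}$ and $\sum_T n_T=k-\abs{\ww\cap X_C}$. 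For each vote $v\in X_V$, its number of approved committee members equals the single linear expression $\abs{v\cap\ww\cap X_C}+\sum_{T:\,v\in T}n_T$.

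With this setup the two rules diverge only in the objective. For {\wind{MAV}} every remaining requirement $\hamdis{v}{\ww}\le\R$ with $v\in X_V$ becomes a linear inequality in the $n_T$'s, so the instance is an integer linear program with $\bigo{2^{\alpha}}$ variables, solvable in {\fpt}-time in $\alpha$ by Lenstra's theorem~\cite{Lenstra83}. For {\wind{PAV}} I would mimic the reduction of Theorem~\ref{thm-pav-n-fpt}: add an auxiliary variable $x_v$ for each $v\in X_V$ tied by the linear constraint $x_v=\abs{v\cap\ww\cap X_C}+\sum_{T:\,v\in T}n_T$, and impose $\sum_{v\in X_V}f(x_v)\ge\R-\beta$, where $f$ is the piecewise linear concave function from the proof of Theorem~\ref{thm-pav-n-fpt}. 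This is an {\prob{IPWSPLT}} instance with $\bigo{2^{\alpha}}$ variables and $\bigo{k}$ pieces per function, hence solvable in {\fpt}-time in $\alpha$ by Lemma~\ref{lem-generalized-ILP}. Running the appropriate solver inside each of the $2^{\alpha}$ branches yields the claimed tractability. The main obstacle I anticipate is conceptual rather than computational: one must verify carefully that fixing $\ww\cap X_C$ truly severs every dependence on the individual identities of the (possibly numerous) candidates and votes outside the cover, so that all remaining data collapse to the $\bigo{2^{\alpha}}$ type-counts; once this bookkeeping is pinned down, both formulations and their running-time analyses are routine.
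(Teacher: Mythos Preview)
Your proposal is correct and follows essentially the same route as the paper: branch over the committee's intersection with the candidate side of a small ``core'' set, observe that this fixes the contribution of all votes outside the core, and then solve the residual selection problem over type classes of the remaining candidates via Lenstra's ILP (for MAV) or the concave piecewise-linear IP of Lemma~\ref{lem-generalized-ILP} (for PAV). The only cosmetic difference is that you take the core to be a K\"onig vertex cover $X=X_C\cup X_V$ of size~$\alpha$, whereas the paper uses the matching-saturated sets $C(M)$ and $V(M)$ (each of size~$\alpha$) directly and, for PAV, routes the final step through {\awind{PAV}} and Theorem~\ref{thm-pav-n-fpt} rather than writing out the {\prob{IPWSPLT}} instance explicitly.
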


\begin{proof}
Let $I=(E, k, \R)$ be an instance of {\wind{$\tau$}}, where $E=(C, V)$ and $\tau\in \{\text{MAV}, \text{PAV}\}$. Let~$G_E$ be the incidence graph of~$E$, and let~$M$ be a maximum matching of~$G_E$.
Hence, $\alpha(G_E)=\abs{M}$. For simplicity, we write~$\alpha$ for~$\alpha(G_E)$. Let~$C(M)$ and~$V(M)$ be the set of candidates and the multiset of votes saturated by~$M$, respectively. Obviously, $\abs{C(M)}=\abs{V(M)}=\alpha$.
We derive algorithms for MAV and PAV as follows.
\medskip

\noindent{\bf{MAV.}}
We split~$I$ into at most~$2^{\alpha}$ subinstances, each of which takes~$I$ and a subset $C'\subseteq C(M)$ of size at most~$k$ as input, and
asks whether there is a~$k$-committee~$\ww\subseteq C$ of MAV score at most~$\R$ such that $C'= C(M)\cap \ww$. It is easy to see that~$I$ is a {\yesins} if and only if at least one of the subinstances is a {\yesins}.
We show how to solve each subinstance in polynomial time.

Let $I'=(I, C')$ be a subinstance. Let~$v$ be a vote in $V\setminus V(M)$. Assume that~$\ww$ is a~$k$-committee such that $\ww\cap C(M)=C'$.
As~$M$ is a maximum matching of~$G_E$, no vote from $V\setminus V(M)$ approves any candidate from $C\setminus C(M)$. Consequently, the Hamming distance between~$v$ and~$\ww$ is $\abs{v}+k-2\abs{v\cap C'}$.
Hence, if there is a vote in $V\setminus V(M)$ such that $\abs{v}+k-2\abs{v\cap C'}>\R$, the subinstance is a {\noins}. So, let us assume that this is not the case.
Now the task is to identify a subset~$H$ of $k-\abs{C'}$ candidates from $C\setminus C(M)$ such that the Hamming distance between every vote from~$V(M)$ and the committee $H\cup C'$ is at most~$\R$, or equivalently, for every $v\in V(M)$,~$H$ contains at least $\frac{\abs{v}+k-\R}{2}-\abs{v\cap C'}$ of~$v$'s approved candidates in $C\setminus C(M)$. For each $v\in V(M)$, let $f(v)=\frac{\abs{v}+k-\R}{2}-\abs{v\cap C'}$.
We reduce the subinstance into an ILP with a bounded number of variables. Specifically, for each $U\subseteq V(M)$, let~$C_U$ be the set of candidates in $C\setminus C(M)$ approved by all votes in~$U$ but not approved by any vote in $V(M)\setminus U$, and let $m_U=\abs{C_U}$. For each $U\subseteq V(M)$, we create one nonnegative integer variable~$x_{U}$ which indicates the number of candidates from~$C_U$ that are supposed to be in a desired committee.
The constraints are as follows.
\begin{itemize}
\item First, for each variable~$x_U$ we have that $0\leq x_U\leq m_U$.

\item Second, as we seek $k-\abs{C'}$ candidates from $C\setminus C(M)$, it holds that \[\sum_{U\subseteq V(M)}x_U=k-\abs{C'}.\]

\item Finally, for every $v\in V(M)$, it holds that
\[\sum_{U\subseteq V(M), v\in U}x_U\geq f(v).\]
This inequality ensures that the Hamming distance between every vote $v\in V(M)$ and the desired committee is at most~$\R$.
\end{itemize}
It is known that ILP is {\fpt} with respect to the number of variables~\cite{Lenstra83}. As the above ILP has at most~$2^{\alpha}$ variables, the subinstance can be solved in {\fpt}-time in~$\alpha$.
As there are at most $2^{\alpha}$ subinstances, the whole algorithm runs in {\fpt}-time in~$\alpha$.
\medskip

\noindent{\bf{PAV}.} The algorithm is analogous to the above one for MAV.
First, we split the given instance~$I$ into subinstances by enumerating all possible intersections~$C'$ of~$C(M)$ and a desired $k$-committee.
Then, every nonempty vote $v\in V\setminus V(M)$ such that $v\cap C'\neq \emptyset$ provides a PAV score
$\sum_{i=1}^{\abs{v\cap C'}}\frac{1}{i}$ to every $k$-committee whose intersection with~$C(M)$ is exactly~$C'$.
Let
\[\R'=\sum_{v\in V\setminus V(M), v\cap C'\neq \emptyset} \sum_{i=1}^{\abs{v\cap C'}}\frac{1}{i}.\] The question now is equivalent to solving an instance $((C, V(M)), C', k, \R-\R')$ of {\awind{PAV}}, which can be done in {\fpt}-time in $\abs{V(M)}=\abs{M}=\alpha$ (Theorem~\ref{thm-pav-n-fpt}).
\end{proof}

\section{Conclusion}
\label{sec-conclusion}
We have investigated the parameterized complexity of {\wind{$\tau$}} for $\tau$ being the three prevalent approval-based $k$-committee selection rules MAV,  CCAV,  and PAV, aiming at providing plentiful fixed-parameter tractability results with respect to meaningful parameters. We studied many natural single parameters, their combinations, and two structural parameters of incidence graphs of elections, and obtained an almost complete landscape of the parameterized complexity of {\wind{$\tau$}} for $\tau\in \{\text{MAV}, \text{CCAV}, \text{PAV}\}$ with respect to these parameters.
For a summary of our concrete results, we refer to Table~\ref{tab-results}.
It should be noted that many of our tractability results were obtained by reducing {\wind{$\tau$}} to well-studied graph/set problems. As advocated by several researchers~\cite{DBLP:conf/aaai/ElkindL14,Fernau14FLMPS}, a remarkable advantage of using
the reduction scheme is that we can automatically  update our results with the state-of-the-art of these graph/set problems. In addition, though many reductions are trivial and direct, we deem that pointing out the connections between {\wind{$\tau$}} and the graph/set problems benefits researchers from different communities.

Our exploration leaves several intriguing problems for future research. We select two open questions that we believe to be the most challenging.

\begin{openquestion}
Is {\wind{PAV}} {\memph\fpt} with respect to~$\R$?
\end{openquestion}

\begin{openquestion}
Is {\wind{PAV}} {\memph\fpt} with respect to the treewidth of the incidence graph of the given election?
\end{openquestion}

Besides these open questions, improving the {\fpt}-algorithms presented in the paper or investigating the kernelizations of {\fpt} problems are also promising avenues for future research.
We remark that Agrawal~et~al.~\cite{DBLP:conf/cocoon/AgrawalCJKS018} studied kernelizations of the {\sc{Partial Hitting Set}} problem. Their results imply that
 {\wind{CCAV}} admits a polynomial kernel with respect to $\R+\triangle_{\text{V}}$. It is interesting to see if the same holds for {\wind{PAV}}. Regarding MAV, Misra, Nabeel, and Singh~\cite{DBLP:conf/atal/MisraNS15} have shown that {\wind{MAV}} is unlikely to admit any polynomial kernels with respect to $\R+m$ and $n+k$, assuming standard complexity hypothesis. This means that {\wind{MAV}} is unlikely to admit any polynomial kernels with respect to smaller parameters (e.g.,~$\R$,~$n$, $\R+\triangle_{\text{V}}$, etc.).

Finally, we would like to mention that recently a few researchers have examined a number of structural parameters based on real electoral data or data generated by promising models, largely motivated by previous theoretical works on the parameterized complexity of voting problems with respect to these parameters (see, e.g.,~\cite{DBLP:journals/corr/abs-2204-03589,DBLP:conf/ijcai/SuiFB13}). Although graphs of small treewidth arisen from various research areas have been continually reported in the literature, to the best of our knowledge, analogous works on the treewidth of incidence graphs of elections have not been conducted heretofore. We hope that our algorithms presented in Section~\ref{sec-structural-parameter} would inspire such an investigation.

\section*{Appendix}
\label{sec-appendix}
This appendix is devoted to the proof of Lemma~\ref{lem-trivial-cases-pav}. In the following, when we remove a candidate from an election, we  remove it from the candidate set and from all votes approving the candidate.
\medskip

\noindent{\bf{Lemma~\ref{lem-trivial-cases-pav}.}}
{\it{Let $\mathcal{H}$ be the set of all paths, cycles, hairsticks, and DH-hairsticks. Then, given an $\mathcal{H}$-election, we can compute a PAV optimal $k$-committee of the election in polynomial time.}}
\medskip

\begin{proof}
Let $E=(C, V)$ be an election whose multihypergraph representation~$H(E)$ is either a path, a cycle, a hairstick, or a DH-hairstick. Let~$k$ be a nonnegative integer such that $k\leq \abs{C}$.
Let~$n=\abs{V}$. Note that~$n$ is also the number of vertices in~$H(E)$. We consider all the possible cases of~$H(E)$ as follows.

\begin{description}
\item[Case~1: $H(E)$ is a path or a cycle.] \hfill

We first compute a maximum matching~$M$ of~$H(E)$. If $\abs{M}\geq k$, it is easy to see that the $k$-committee corresponding to~$M$ is optimal. Otherwise, we further distinguish between two subcases based on the parity of~$n$.
Let~$\ww$ be the committee corresponding to~$M$.

If~$n$ is even, we know that~$M$ is a perfect matching. Let~$M'$ be the set of edges of~$H(E)$ without~$M$. Obviously,~$M'$ is also a matching of~$H(E)$. Let~$C(M')$ be the set of candidates corresponding to~$M'$. As~$M$ is a perfect matching, adding each $c\in C(M')$ into any committee containing~$\ww$ and excluding~$c$ increases the PAV score of the committee by exactly one. In light of this fact, we add any $k-\abs{M}$ arbitrary candidates from~$C(M')$ into~$\ww$, and return~$\ww$.

If~$n$ is odd, there is exactly one vertex~$v$ in~$H(E)$ not saturated by~$M$. We put into~$\ww$ one arbitrary candidate approved by~$v$. Then, we arbitrarily add $k-1-\abs{M}$ candidates from the remaining candidates into~$\ww$, and return~$\ww$.

\item[Case~2: $H(E)$ is a hairstick.] \hfill

If $k=n$, we return the whole set of candidates. Otherwise (i.e., $k<n$), a key observation is that there exist optimal $k$-committees which do not contain the candidate corresponding to the loop in~$H(E)$. By this observation, we directly remove the loop-candidate from the election. Now, the election admits a path representation, and we use the algorithm described in Case~1 to compute an optimal $k$-committee.

\item[Case~3: $H(E)$ is a DH-hairstick.] \hfill

If $k=n+1$, we return the whole set of candidates. Otherwise, there exist optimal $k$-committees which contain at most one of the two candidates corresponding to the loops in~$H(E)$. In this case, we arbitrarily select one loop, and remove the corresponding candidate from the election. Then, we arrive at a hairstick representation of the election. The algorithm given in Case~2 is applied to solve the problem.
\end{description}

It is easy to see that the above algorithms in all cases run in polynomial time.
\end{proof}

\section*{Acknowledgments}
This work was supported in part by the National Natural Science Foundation of China under Grant 62172446, and in part by the Open Project of Xiangjiang Laboratory under Grants 22XJ02002 and 22XJ03005.

The authors would like to thank all anonymous reviewers who have provided helpful comments on the paper.

\end{document}